\DeclareMathOperator*{\argmax}{\arg\!\max}
\renewcommand*\env@matrix[1][\arraystretch]{%
  \edef\arraystretch{#1}%
  \hskip -\arraycolsep
  \let\@ifnextchar\new@ifnextchar
  \array{*\c@MaxMatrixCols c}}
\newtheoremstyle{tight}
  {0.5em}   
  {-1em}   
  {\itshape}  
  {}        
  {\bfseries} 
  {.}       
  { }       
  {}        
\theoremstyle{tight}
\newtheorem{theorem}{Theorem}[section]
\newtheorem{lemma}[theorem]{Lemma}
\newtheorem{definition}[theorem]{Definition}
\newtheorem{assertion}[theorem]{Assertion}
\newtheorem{corollary}[theorem]{Corollary}
\journal{Expert Systems with Applications}
\begin{document}
\begin{frontmatter}


\title{Quantum Bayesian Networks Can Speed up Reinforcement Learning in Partially Observable Environments}
\author[b,c]{Gilberto Cunha}

\author[a,b,c]{Alexandra Ramôa\corref{cor1}}

\author[a,b,c]{André Sequeira}

\author[a,c]{Michael de Oliveira}

\author[a,b,c]{Luís Barbosa}

\cortext[cor1]{Corresponding author: alexandra.ramoa@inl.int.}

\address[a]{High-Assurance Software Laboratory (HASLab), INESC TEC, Portugal}
\address[b]{Department of Computer Science, University of Minho, Portugal}
\address[c]{International Iberian Nanotechnology Laboratory (INL), Portugal}

\begin{abstract}
Reinforcement learning (RL) provides a principled framework for decision-making in partially observable environments, which can be modeled as Markov decision processes and compactly represented through dynamic decision Bayesian networks. Recent advances demonstrate that inference on sparse Bayesian networks can be accelerated using quantum rejection sampling combined with amplitude amplification, leading to a computational speedup in estimating acceptance probabilities.\\
Building on this result, we introduce Quantum Bayesian Reinforcement Learning (QBRL), a hybrid quantum-classical look-ahead algorithm for model-based RL in partially observable environments. We present a rigorous, oracle-free time complexity analysis under fault-tolerant assumptions for the quantum device. Unlike standard treatments that assume a black-box oracle, we explicitly specify the inference process, allowing our bounds to more accurately reflect the true computational cost. We show that, for environments whose dynamics form a sparse Bayesian network, horizon-based near-optimal planning can be achieved sub-quadratically faster through quantum-enhanced belief updates. On the other hand, we show that there is no quantum speed-up for environments that are either fully observable, or characterized by Bayesian networks whose maximum in-degree is not small.\\
Furthermore, we present numerical experiments benchmarking QBRL against its classical counterpart on simple yet illustrative decision-making tasks. Our results offer a detailed analysis of how the quantum computational advantage translates into decision-making performance, highlighting that the magnitude of the advantage can vary significantly across different deployment settings.
\end{abstract}

\begin{keyword}
Reinforcement learning, partially observable Markov decision process, Bayesian networks, quantum algorithms, quantum amplitude amplification, 
\end{keyword}

\end{frontmatter}
\newpage

\section{Introduction}
\label{sec:intro}

Reinforcement learning (RL) is a foundational paradigm for sequential decision-making under uncertainty, enabling agents to learn optimal strategies by interacting with and adapting to dynamic environments. It formalizes decision-making as a reward optimization problem, and plays a central role in artificial intelligence (AI), especially in scenarios where explicit supervision is unavailable or partial. RL has been pivotal to some of the most significant advances in modern AI, including AlphaGo’s mastery of the game of Go \cite{silver2016mastering} and more recently, the outstanding performance improvement of large language models \cite{wangReinforcementLearningEnhanced2025}. RL has also shown impact in domains such as finance, robotics, natural language processing, and metrology \cite{fiderer2021,Nian_2020,Bai_2024}. These successes highlight RL’s versatility, but also its computational demands—particularly in high-dimensional, partially observable, or data-scarce settings—where improvements in sample efficiency and computational scalability are critical.
 
Quantum advantages for RL have been investigated in both tabular and approximate settings \cite{meyer2024}. In the tabular regime, \cite{dunjkoExponentialImprovementsQuantumaccessible2018a} suggests that Quantum Reinforcement Learning (QRL) may yield polynomial to exponential improvements under various conditions. This result has been further substantiated by recent contributions: \cite{Sannia_2023} employed quantum amplitude amplification to accelerate Q-value estimation in Q-learning; \cite{cherratQuantumReinforcementLearning2023} introduced a QRL framework based on policy iteration, showing that quantum algorithms for policy evaluation and improvement can achieve meaningful speedups; and \cite{gangulyQuantumSpeedupsRegret2024} demonstrated that logarithmic regret is achievable in model-based episodic and average-reward Markov Decision Processes (MDPs), suggesting theoretical benefits for quantum agents in structured environments.
 
In the approximate setting, quantum-enhanced policies based on parameterized quantum circuits (PQCs) have also shown promise. Empirical studies indicate that PQC-based models can improve agent convergence in various scenarios \cite{jerbiParametrizedQuantumPolicies2021,skolikQuantumAgentsGym2022a}. \cite{jerbiQuantumPolicyGradient2022} established that PQC-based policies can be trained quadratically faster when granted full access to quantum environments. Most recently, \cite{xuAcceleratingQuantumReinforcement2025} demonstrated that PQC-based model-free agents for infinite-horizon MDPs can attain a computational speedup that exceeds known classical lower bounds—underscoring the concrete benefits of fully quantum reinforcement learning approaches.

However, none of these advances explicitly addresses partially observable environments, which are significantly more complex than fully observable ones (PSPACE-complete \cite{papadimitriouComplexityMarkovDecision1987}), and thus more computationally challenging. Tackling this problem is paramount, as partial observability arises in a multitude of real-world scenarios, due to noisy sensors or otherwise limited information \cite{muskardin2024}. In practice, sensors are often inaccurate, or operate under conditions that constrain their ability to perceive the environment, such as limited visibility. 

In such environments, RL problems are typically modeled as POMDPs and can be compactly encoded as dynamic decision Bayesian networks (DDNs). A central challenge lies in efficiently performing belief updates and inference over these probabilistic graphical models.
 
Quantum computing offers promising tools to tackle these difficulties. In particular, quantum rejection sampling combined with amplitude amplification \cite{grover1996fast,brassard2002,low2014quantum} provides a quadratic speedup in the success probability of sampling for sparse Bayesian networks. Additionally, \cite{de2021quantum} introduced an algorithm that leverages quantum amplitude amplification to obtain provable quantum advantages in static decision-making tasks—without requiring explicit evaluation of conditional probabilities, while \cite{gao2022enhancing} demonstrated that quantum correlations enhance the expressive power of generative models such as Bayesian networks. Their findings suggest that minimal quantum extensions may improve sample complexity and convergence of quantum reinforcement learning agents given partially observable dynamics.
 
Building on this result, we introduce a hybrid quantum-classical look-ahead algorithm for model-based RL in partially observable environments. We present a rigorous, oracle-free time complexity analysis under fault-tolerant assumptions for the quantum device. Unlike standard treatments that assume a black-box oracle, we explicitly specify the inference process, allowing our bounds to more accurately reflect the true computational cost. We show that, for environments whose dynamics form a sparse Bayesian network, horizon-based near-optimal planning can be achieved sub-quadratically faster through quantum-enhanced belief updates. We additionally present numerical experiments on simple yet illustrative decision-making tasks, showing how the quantum algorithm outperforms its classical counterpart for a fixed cost, or conversely reduces the runtime for the same performance. Furthermore, we demonstrate and discuss the dependence of this advantage on the practical task at hand. Our results highlight that the magnitude of the advantage can vary significantly across different deployment settings.

The quantum component of our algorithm handles probabilistic inference for Bayesian networks, which model the reinforcement learning environment. The task is to estimate the conditional distributions that are necessary for belief updates. In partially observable settings, some of these estimates rely on rejection sampling. The acceptance probabilities typically shrink exponentially with the number of variables, making this a bottleneck \cite{low2014quantum}. For sparse networks, quantum amplitude amplification can improve the efficiency of the rejection sampling sub-routine by boosting the acceptance probability \cite{low2014quantum}. This has the potential to speed-up reinforcement learning in some scenarios, as we rigorously analyze and numerically demonstrate. 

In short, our contribution is to leverage a known quantum speed-up in rejection sampling for Bayesian networks and investigate how it can translate into a quantum speed-up for reinforcement learning. This includes a complexity analysis, numerical experiments, and a discussion of advantages and limitations. 

We note that the term ``Bayesian reinforcement learning'' typically refers to algorithms where Bayesian statistics are used to address the exploration/exploitation trade-off \cite{Ghavamzadeh_2015}. These are not directly related to our work, where Bayesian networks are a tool to represent the environment. The relevant comparison is with the classical counterpart of our algorithm; accordingly, that is used as a reference throughout this work. We note the existence of other solutions to POMDPs \cite{silver2010pomcp, moss2023betazero, smith2012hsvi, pineau2003pbvi, spaan2005perseus}, some of which can be more efficient. It would be interesting to study whether quantum speed-ups apply to these alternatives. 

\emph{Paper structure}. The rest of the paper is organized as follows. Section \ref{sec:related_work} overviews related work, and clarifies where this work stands with respect to it. Section \ref{sec:background} introduces the foundational concepts necessary to understand our hybrid quantum-classical algorithm, which is described in section \ref{sec:algorithm}. Section \ref{sec:complexity-analysis} presents the computational complexity of the classical and hybrid approaches. The methodology employed for the numerical evaluation of the algorithms is detailed in section \ref{sec:methods}, with the corresponding results and discussion presented in section \ref{sec:experimental-results}. Lastly, section \ref{sec:conclusion} highlights the main findings and outlines directions for future research.
\section{Related work}
\label{sec:related_work}

This section situates the present work with respect to related research. Research at the interface of quantum computing and reinforcement learning has developed along several distinct directions; this section situates the present work with respect to these directions. For an overview, refer to \cite{meyer2024}.

One such direction studies reinforcement learning in settings where the agent has quantum access to the environment; this yields provable improvements for exploration, policy evaluation, policy iteration, and regret minimization \cite{dunjkoExponentialImprovementsQuantumaccessible2018a, wang2021quantum, cherratQuantumDeepHedging2023, gangulyQuantumSpeedupsRegret2024, jerbiQuantumPolicyGradient2022}. In contrast, our work concerns classical environments; quantum resources are used to accelerate a specific inference subroutine.

A second line of work, motivated by near-term devices, uses parameterized quantum circuits as function approximators inside otherwise classical learning loops \cite{jerbiParametrizedQuantumPolicies2021,skolikQuantumAgentsGym2022a}. Such approaches investigate expressive quantum policies and hybrid training schemes, and in some cases show empirical or sample-complexity advantages. However, they do not target the bottleneck that arises from belief-state updates in model-based planning under partial observability, which is the focal point of this work. Furthermore, we rely on non-variational, fault tolerant circuits.

Another related direction is classical reinforcement learning as a tool for quantum control. Representative examples include deep-RL methods for quantum state preparation and gate control, as well as approaches for open quantum systems \cite{Bukov_2018,Niu_2019,Ernst_2025}. This direction too is conceptually different from our setting:  reinforcement learning is the optimization tool used to solve a quantum-control problem, whereas here we consider the reverse scenario where quantum computation is a tool to accelerate reinforcement learning.  

There have also been proposals of quantum-inspired algorithms for reinforcement learning \cite{Chen_2008, kaldari2025, yu2025}. Unlike the present work, these are fully classical approaches; they are motivated by quantum mechanical principles, but do not effectively use quantum hardware.  

Our work is most directly connected to quantum probabilistic inference for Bayesian networks. It has been shown that quantum rejection sampling combined with amplitude amplification offers a quadratic improvement in the dependence on the evidence probability for sparse Bayesian networks \cite{low2014quantum}. Quantum-circuit constructions for Bayesian and dynamic Bayesian networks were subsequently developed in \cite{borujeni2021quantum}, while \cite{gao2022enhancing} showed that quantum correlations can enhance the expressive power of generative models related to Bayesian networks. We build on this inference perspective and ask how such an advantage propagates to sequential decision making when a partially observable environment is represented as a dynamic decision network.

Finally, our paper also relates to the vast literature on classical planning and learning in POMDPs. Many paths have been proposed for solving partially observable problems, such as point-based methods, online Monte Carlo planning methods, and methods leveraging learned approximations \cite{pineau2003pbvi,spaan2005perseus,smith2012hsvi,silver2010pomcp, moss2023betazero,muskardin2024}. These strategies address the same broad problem class, but the present work focuses on a different algorithmic question: we isolate the effect of replacing the classical rejection-sampling belief update inside a fixed look-ahead planner by its quantum counterpart. For that reason, the baseline throughout the paper is the classical version of the same algorithm.

Hence, our contribution is complementary to the existing literature on two topics: quantum reinforcement learning, and classical POMDPs. We do not propose a general-purpose quantum solution to all partially observable reinforcement-learning problems. Rather, we identify a specific regime in which quantum rejection sampling can reduce the cost of belief updates, and make explicit both the associated advantages and the limitations through a complexity analysis and numerical experiments.

\section{Background}
\label{sec:background}

This section assembles the theoretical ingredients that underpin the quantum-enhanced decision-making approach developed in the remainder of the article.  We first revisit POMDPs (\S\ref{subsec:pomdps}), establishing the notation that will carry through the paper and recalling how belief states encode an agent’s uncertainty about latent system dynamics.  The discussion then turns to DDNs (\S\ref{subsec:ddns}), a Bayesian-network (BN) realisation of POMDPs that exposes the conditional-independence structure required for efficient probabilistic inference.  Because our contribution leverages quantum resources, we next summarise the circuit model of quantum computation (\S\ref{subsec:qc}), and the notation used to represent quantum programs.  Finally, we outline the \textit{quantum rejection-sampling} algorithm of Low and Chuang, together with the amplitude-amplification techniques that yield a quadratic speed-up for inference in sparse BNs (\S\ref{subsec:qrs}).  Collectively, these four building blocks provide the conceptual and algorithmic scaffold for the quantum POMDP solver presented in Section \ref{sec:methods}.

\subsection{POMDPS}
\label{subsec:pomdps}

Partially observable Markov decision processes (POMDPs) are Markov decision processes (MDPs) where the state can't be directly observed by an agent. The agent, having to select an action $a_t \in \mathcal{A}$, relies instead in an observation $o_t \in \Omega$, obtained from the \textit{sensor model} $P \left( o_{t+1} \middle| s_{t+1}, a_t \right)$ of the environment, to describe the state $s_t \in \mathcal{S}$. Sets, random variables and their values are denoted by calligraphic, uppercase and lowercase letters respectively. The environment evolves over time given its \textit{transition dynamics} $P \left( s_{t+1} \middle| s_t, a_t \right)$ and the expected reward is obtained from the \textit{reward function} $\mathbb{E} \left( r_{t+1} \middle| s_t, a_t \right)$. These three quantities describe the environment's behaviour and are called the \textit{environment dynamics}.

A prior state distribution $b_0(s) = P(s)$ is used to define the initial belief state. As the agent selects actions and receives observations, it gathers a \textit{history} $h_t = \{a_0, o_1, \dots a_{t-1}, o_t\}$ of past information it can use to update its uncertainties about the environment. One way of achieving this is by defining a \textit{belief state} $b_t(s) = P \left( s \middle| h_t, b_0 \right)$, a probability distribution describing the uncertainties of the agent over the environment states, which it can't directly observe. It is also possible to sequentially update this belief state: given an action $a_t$, an observation $o_{t+1}$ and a previous belief state $b_t$, the updated belief state $b_{t+1}$ can be obtained using the \textit{belief updating rule}:
\begin{equation}
    b_{t+1}(s_{t+1}) = P \left( s_{t+1} \middle| b_t, a_t, o_{t+1} \right) \propto P \left( o_{t+1} \middle| s_{t+1}, a_t \right) \sum_{s_t \in \mathcal{S}} P \left( s_{t+1} \middle| s_t, a_t \right) b_t(s_t)
    \label{eq:belief-update}
\end{equation}

Adopting a more concise notation, denoting by $\mathcal Z $ the sensor model (conditional distribution of observations given states and actions), and dropping the subscripts in favor of primes to denote the following state, the belief update rule can be more succintly expressed as $\tau (b, a,o)(s') \equiv b'(s') \propto \mathcal Z^o_{s'a}\sum \mathcal P^a_{ss'}b(s)$.

The behaviour of an agent can be characterized by a \textit{policy} $\pi (b, a) = P \left( a \middle| b \right)$, a probability distribution over actions given belief states. The behaviour of an agent produces a sequence of rewards, which can be incorporated into a metric, called the return $G_t = \sum_{k=t}^{\infty} \gamma^{k-t} r_{k+1}$, that quantifies the desirability of that sequence. Given the stochastic nature of an environment, the goal of an agent is to maximize the \textit{expected return}, a quantity that is often used in the literature to define the value of both states and actions. The value of a belief state $b_t$ when following a policy $\pi$ is given by the \textit{state value function} $V^{\pi}(b)$, while the value of taking action $a$ when in belief state $b$ while following policy $\pi$ is given by the \textit{action value function} $Q^{\pi}(b, a)$. They are given by equations equation \ref{eq:v-value} and \ref{eq:q-value} respectively.
\vspace{1em}

\begin{minipage}{0.48\textwidth}
\begin{equation}
V^{\pi}(b) = \mathbb{E}_{\pi} \left( \sum_{k=t}^{\infty} \gamma^{k-t} r_{k+1} \; \middle| \; b \right)
\label{eq:v-value}
\end{equation}
\end{minipage}
\hfill
\begin{minipage}{0.48\textwidth}
\begin{equation}
Q^{\pi}(b, a) = \mathbb{E}_{\pi} \left( \sum_{k=t}^{\infty} \gamma^{k-t} r_{k+1} \; \middle| \; b, a \right)
\label{eq:q-value}
\end{equation}
\end{minipage}
\vspace{0.25em}

A policy $\pi^\star$ is said to be optimal if it yields value functions (also called optimal value functions) no smaller than any other policy's value functions: $V^{\star} (b) = \max_{\pi} V^{\pi}(b)$ and $Q^{\star} (b, a) = \max_{\pi} Q^{\pi}(b, a)$.

\subsection{Dynamic decision networks}
\label{subsec:ddns}
A Bayesian network (BN) is a directed, acyclic graph $\mathcal{G} = \{ \mathcal{V}, \mathcal{W} \}$ whose vertices $\mathcal{V} = \{X_1, X_2, \dots, X_N\}$ represent random variables (RVs) and the edges $\mathcal{W}$ represent dependencies between them. Each RV contains a conditional probability table (CPT) $P \left( X_i \middle| \text{par} \left( X_i \right) \right)$ describing the probability of occurrence for each value of the RV $X_i$, given the values of its parent RVs $\text{par} \left( X_i \right)$.

\begin{wrapfigure}{r}{0.4\textwidth}
\centering
\begin{tikzpicture}[
  node distance=0.8cm,
  mycircle/.style={draw,circle,text width=0.6cm,align=center},
  mydiamond/.style={draw,diamond,text width=0.4cm,align=center},
  mysquare/.style={text centered, draw,rectangle,text width=0.6cm, minimum height=0.85cm},
  myblank/.style={draw=none, fill=none}
]
\node[mycircle] (x0) {$S_t$};
\node[myblank, right=of x0] (h0) {};
\node[mycircle, right=of h0] (x1) {$S_{t+1}$};
\node[mycircle, above=of x1] (a0) {$A_t$};
\node[mycircle, above right=of x1] (e1) {$O_{t+1}$};
\node[mycircle, above left=of x1] (r1) {$R_{t+1}$};
\path 
    (x0) edge[-latex] (x1)
    (x0) edge[-latex] (r1)
    (a0) edge[-latex] (x1)
    (a0) edge[-latex] (r1)
    (a0) edge[-latex] (e1)
    (x1) edge[-latex] (e1);
\end{tikzpicture}
\caption{The simplest DDN, with one RV representing each state, action, observation, and reward.}
\label{fig:ddn}
\end{wrapfigure}
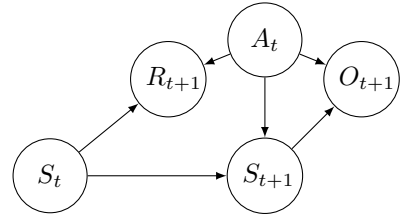

A dynamic decision network (DDN) is a BN that is constructed to capture the notion of time dynamics and agency, allowing it to model a PODMP. Its RVs describe the states $S_t$, actions $A_t$, observations $O_t$, and rewards $R_t$ for each time-step $t$ and their dependencies should mimic the environment dynamics of the underlying POMDP (see \autoref{fig:ddn}). Their CPTs represent the transition dynamics, the policy, the sensor model, and the reward function, respectively. Moreover, the CPT of the initial state $S_0$ encodes the prior belief state distribution $b_0$.

A probability distribution can be extracted from any BN (including DDNs) by using \textit{inference algorithms}, such as \textit{rejection sampling}. Therefore, DDNs can be used to model and extract probabilistic information about the dynamics of an environment.

\subsection{Look-ahead algorithm}

To approximately solve partially observable RL problems, we consider a lookahead algorithm which extracts a near-optimal action from a finite lookahead tree. The algorithm performs an exhaustive tree-based search of all possible futures of the agent within a given horizon $H$, enabling the selection of a rational action based on consideration of these future scenarios. 

The lookahead tree starts from the current belief state of the agent and expands through the enumeration of actions from the action space $\mathcal{A}$ and observations from the observation space $\Omega$. Each belief state is connected to reward nodes via an edge for each action, and each reward node is connected to a new belief node via an edge for each observation. Following the construction of the tree, the value of each node is iteratively calculated, starting with the leaves and proceeding upwards to the root node. Belief and reward node values are defined using state and action value functions, respectively.

\begin{wrapfigure}{r}{0.5\textwidth}
\centering
\scalebox{0.9}{\begin{tikzpicture}[
  node distance=1.25cm,
  bcircle/.style={draw,circle,fill=black,text width=0.2cm},
  wcircle/.style={draw,circle,text width=0.2cm},
  myblank/.style={draw=none,circle,fill=none,text width=0.2cm}
]
\node[bcircle] (b1) {};
\node[wcircle, below left=of b1] (o2) {};
\node[wcircle, left=of o2] (o1) {};
\node[myblank, below right=of b1] (h1) {$\dots$};
\node[wcircle, right=of h1] (o3) {};
\node[bcircle, below left=of o2] (b3) {};
\node[bcircle, left=of b3] (b2) {};
\node[myblank, below right=of o2] (h2) {$\dots$};
\node[bcircle, right=of h2] (b4) {};
\node[wcircle, below left=of b3] (o5) {};
\node[wcircle, left=of o5] (o4) {};
\node[myblank, below right=of b3] (h3) {$\dots$};
\node[wcircle, right=of h3] (o6) {};
\path 
    (b1) edge[-latex] node[pos=.8,above=.5mm] {$a_1$} (o1)
    (b1) edge[-latex] node[pos=.8,above=.5mm] {$a_2$} (o2)
    (b1) edge[-latex] (h1)
    (b1) edge[-latex] node[pos=.8,above=.5mm] {$a_{\left| \mathcal{A} \right|}$} (o3)
    (o2) edge[-latex] node[pos=.8,above=.5mm] {$o_1$} (b2)
    (o2) edge[-latex] node[pos=.8,above=.5mm] {$o_2$} (b3)
    (o2) edge[-latex] (h2)
    (o2) edge[-latex] node[pos=.8,above=.5mm] {$o_{\left| \Omega \right|}$} (b4)
    (b3) edge[-latex] node[pos=.8,above=.5mm] {$a_1$} (o4)
    (b3) edge[-latex] node[pos=.8,above=.5mm] {$a_2$} (o5)
    (b3) edge[-latex] (h3)
    (b3) edge[-latex] node[pos=.8,above=.5mm] {$a_{\left| \mathcal{A} \right|}$} (o6);
\end{tikzpicture}}
\caption{A two-step ($H=2$) lookahead tree. Belief and reward nodes are represented by black and white circles, respectively.}
\label{fig:lookahead-tree}
\end{wrapfigure}
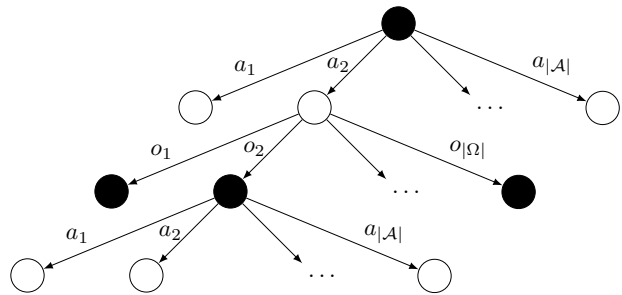

Consider $t$ the current time-step and an horizon $H$. The value of each leaf reward node is given by the expected reward given their parent belief state $b_{t+H}$ and the action $a_{t+H}$ on the edge to that leaf node:
\begin{equation}
    Q^{L}(b_{t+H}, a_{t+H}) = \mathbb{E} \left( r_{t+H+1} \middle| b_{t+H}, a_{t+H} \right)
    \label{eq:leaf-value}
\end{equation}

\noindent where we annotate the value functions with the superscript $L$ denoting the lookahead, which stops  the recursion. 

For non-leaf belief nodes, their value is determined by greedily selecting the action that leads to the highest valued reward node, which corresponds to a max operation:

\begin{equation}
    V^{L}(b_t) = \max_{a_t \in \mathcal{A}} Q^{L}(b_t, a_t)
    \label{eq:b-value}
\end{equation}

The value of a non-leaf reward node is calculated using the expected values of its child belief nodes, weighted by the probability of the observation that leads to each child. This accounts for the stochastic nature of observations in the environment, making it imperative to consider all observations in the calculation. Finally, the expected reward for that particular node is also added to the value:
\begin{equation}
    Q^{L}(b_t, a_t) = \mathbb{E} \left( r_{t+1} \middle| b_t, a_t \right) + \gamma \sum_{o_{t+1} \in \Omega} P \left( o_{t+1} \middle| b_t, a_t \right) V^{L} (b_{t+1})
    \label{eq:r-value}
\end{equation}

\noindent where $b_{t+1}$ is obtained via the belief update rule of (\ref{eq:belief-update}), given the belief state $b_t$, action $a_t$ and each of the observations $o_{t+1}$ in the summation of (\ref{eq:r-value}).

Following the calculation of values in the tree according to the aforementioned expressions, the action that should be taken is the one that maximizes value: $a^{L}_t = \argmax_{a_t \in \mathcal{A}} Q^{L}(b_t, a_t)$.

\subsection{Quantum computing with quantum circuits}
\label{subsec:qc}

A quantum circuit is a combination of different qubits, quantum gates, and measurements. The qubits encode quantum information that is transformed via the quantum gates to produce some desired result, which is converted to classical information via measurements. In $n$-qubit quantum systems, a quantum state $\ket{\psi} = (\alpha, \beta)^\top$ exist in a complex \textit{Hilbert space} $\mathcal{H} = \left( \mathbb{C}^2 \right)^{\otimes n}$ with a unit norm $\braket{\psi}{\psi} = 1$. It is common practice to use the Dirac notation, with $\bra{\psi}$ representing the complex conjugate of $\ket{\psi}$ and $\braket{\psi}{\phi}$ the inner-product of quantum states.

Quantum gates are represented by unitary operations $U: \mathcal{H} \mapsto \mathcal{H}$ transforming one quantum state into another. Some of the most important single-qubit gates are the \textit{Pauli} $X$, $Y$ and $Z$ matrices, and the \textit{phase-shift} gate $P(\theta)$, whose matrices are given below.
\begin{equation}
    X = \begin{pmatrix}[0.75]
        0 & 1\\
        1 & 0
    \end{pmatrix}, \;
    Y = \begin{pmatrix}[0.75]
        0 & -i\\
        i & 0
    \end{pmatrix}, \;
    Z = \begin{pmatrix}[0.75]
        1 & 0\\
        0 & -1
    \end{pmatrix}, \;
    P(\theta) = \begin{pmatrix}[0.75]
        1 & 0\\
        0 & e^{i\theta}
    \end{pmatrix}
    \label{eq:q-gates}
\end{equation}

Single–qubit rotations are generated by the Pauli operators
$\mathcal{P}=\{X,Y,Z\}$ through $R_U(\theta)\;=\;\exp\!\Bigl(-\tfrac{i\theta}{2}\,U\Bigr),$ $ U\in\mathcal{P}.$
More generally, any Pauli string 
$\sigma\in\{I,X,Y,Z\}^{\otimes n}$ 
defines an $n$-qubit rotation $R_\sigma(\theta)\;=\;\exp\!\Bigl(-\tfrac{i\theta}{2}\,\sigma\Bigr)$,
which can entangle the qubits whenever $\sigma$ acts non-trivially on more than one site. Besides such rotations, the controlled-NOT gate, $
\text{CNOT} =
\begin{pmatrix}
\mathbb{I}_2 & 0 \\
0 & X_2
\end{pmatrix}$, is the standard primitive for creating entanglement in quantum circuits.  Combined with single-qubit rotation gates, it results in a universal gate set for quantum computation.

\subsection{Quantum Bayesian networks}
\label{subsec:qbn}

A Quantum Bayesian Network (QBN) encodes the joint distribution
$P(X_1,\dots,X_N)$ of a classical BN into the
amplitudes of an $N$-qubit state.  Let
$\mathcal{B}$ denote a state-preparation circuit acting on
$\ket{0}^{\otimes N}$ such that
\begin{equation}
  \bigl|\braket{x_1\cdots x_N}{\Psi}\bigr|^{2}
  \;=\;
  P\!\left(X_1=x_1,\dots,X_N=x_N\right),\qquad
  \ket{\Psi}= \mathcal{B}\ket{0}^{\otimes N}.
  \label{eq:qbn-encoding-background}
\end{equation}
For binary random variables, $\mathcal{B}$ can be constructed with a
sequence of uniformly controlled $R_Y(\theta)$ rotations: each node
$X_i$ is mapped to a qubit $q_i$, and for every assignment of the
parent variables $\mathrm{parents}(X_i)$ a controlled rotation is applied to
$q_i$.  The angle
\begin{equation} 
    \theta = 2 \arctan \left( \sqrt{ \frac{ P\left( X_i = 1 \middle| \text{parents}\left(X_i\right) = x_p \right) }{ P\left( X_i = 0 \middle| \text{parents}\left(X_i\right) = x_p \right) } } \right)
    \label{eq:qbn-rotation-angle}
\end{equation}

\noindent implements the corresponding conditional-probability-table (CPT)
entry.  If $M=\max_i\ \mathrm{parents}(X_i)$ is the maximum parent count,
the computational complexity of the corresponding state preparation circuit scales as
$\mathcal{O}\!\left(N2^{M}\right)$\,\citep{low2014}.  Although this is
worse than the classical direct-sampling cost
$\mathcal{O}(NM)$, the quantum representation unlocks amplitude-based
speed-ups for subsequent inference tasks.

Importantly, the classical overhead associated with parameter calculations does not affect the quantum speed-up. The complexity of computing the angle in equation \ref{eq:qbn-rotation-angle} is $\mathcal O(1)$, since the conditional probabilities can be looked up in the CPT of the random variable $X_i$. Since the number of parent configurations is upper bounded by $2^M$, doing this for each of the $N$ qubits requires computing at most $N2^{M}$ such angles. Thus, these operations can be absorbed into the state preparation cost $\mathcal{O}\!\left(N2^{M}\right)$ for the complexity considerations.

\subsection{Quantum rejection sampling}
\label{subsec:qrs}

Given a QBN state~\eqref{eq:qbn-encoding-background}, one can extract
the conditional distribution
$P(\mathcal{Q}\mid\mathcal{E}=e)$ over query variables
$\mathcal{Q}$ provided evidence $e$, by quantum rejection sampling.  The joint state
is first partitioned as
\begin{equation}
  \ket{\Psi}
  \;=;
  \sqrt{P(e)}\,\ket{\mathcal{Q},e}
  \;+\;
  \sqrt{1-P(e)}\,\ket{\mathcal{Q},\overline{e}},
  \label{eq:qrs-decomposition}
\end{equation}
where $P(e)$ is the evidence probability.  An evidence phase-flip
operator $\mathcal{S}_e$ inverts the phase of the good subspace
$\ket{\mathcal{Q},e}$, while the Grover diffusion
operator $\mathcal{S}_0$ amplifies the states with evidence $e$.  Iterating the amplitude-amplification operator, $G =\mathcal{B}\,\mathcal{S}_0\,\mathcal{B}^{\dagger}\,\mathcal{S}_e$, $O\!\bigl(P(e)^{-1/2}\bigr)$ times boosts the success amplitude. 

For sparse networks, where each node has a small amount of parents $M$, the exponential factor $2^M$ in the quantum complexity can be approximated as $M$, and thus the quadratic quantum speedup is the main distinction between complexities. In these conditions, the total
complexity for the quantum algorithm can be written as
\(
  \mathcal{O}\!\left(
    N2^{M}\,P(e)^{-1/2}
  \right),
\)
providing a quadratic improvement in the inference over classical rejection sampling’s $\mathcal{O}\!\left(
  NM\,P(e)^{-1}
\right)$ cost \citealp{low2014}.  See Appendix \ref{app:q_rej_sampling} for a complete derivation of the quantum rejection sampling algorithm. The quantum advantage is powered by quantum amplitude amplification, where interference is used to decrease the rejection rate. An amplification operator can be defined using standard quantum operations, and repeatedly applied to boost the acceptance probability. The optimal degree of amplification depends on the probability of the evidence $P(e)$, i.e. the acceptance probability for the sampling. For $P(e) \geq 0.5$, the quantum and classical cases coincide, if using the standard definition of the amplitude amplification operator. However, this operator can be generalized to change this, as described in \cite{brassard2002}. The probability can then be raised to $100\%$ in all cases, provided that $P(e)$ is known. 

In practice, this does not hold. In  \citep{low2014}, an exponential progression proposed in \cite{brassard2002} is used to remedy this fact. While this method retains the quadratic advantage, the cost offset is not necessarily advantageous in terms of quantum resources. Instead, one may calculate the probability analytically using the Bayesian network (depending on its structure) or employ quantum amplitude estimation or even a rudimentary classical pre-estimation based on a few samples. The estimated value need not be exact; small deviations will manifest as non-optimal, but still high, acceptance rates, without jeopardizing correctness. 

Because the same construction applies when the BN is the
time-unrolled slice of a dynamic decision network, the QRS procedure
serves as the inference engine in the quantum POMDP framework
described in Section~\ref{sec:algorithm}.
\section{Quantum-classical POMDP lookahead}
\label{sec:algorithm}

In this section, we present quantum Bayesian reinforcement learning (QBRL): a classical model-based RL algorithm that incorporates the quantum rejection sampling algorithm for BNs as a subroutine. We propose quantum circuits for encoding each component of a POMDP, and show they can be used to perform a quantum belief update that is equivalent to the classical one. For proofs, as well as details on the construction of the quantum operators, the reader is referred to Appendix \ref{app:q_rej_sampling}.

Figure \ref{fig:qbn-belief-update} shows the quantum circuit that implements a belief update. Operators $\mathcal U (b)$ and $\mathcal U (a)$ encode the belief state and action, respectively. The operators $\mathcal U_1$, $\mathcal U_2$ and $\mathcal U_3$ encode the transition dynamics, the sensor model and the reward, respectively. $G^k(o)$ is the $k$-th power of the amplitude amplification operator for observation $o$, defined as in subsection \ref{subsec:qrs} from an evidence phase flip operator $\mathcal S_e$ and the Bayesian network encoding operator $\mathcal B$. Each of these operators is defined in Appendix \ref{app:q_rej_sampling}.

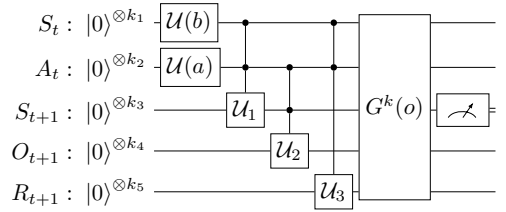
\begin{wrapfigure}{r}{0.4\textwidth}
    \centering
    \resizebox{ \linewidth}{!}{
    \begin{tikzpicture}
      \begin{yquant}
        qubit {$S_t: \; \ket{0}^{\otimes k_1}$} s1;
        qubit {$A_t: \; \ket{0}^{\otimes k_2}$} a1;
        qubit {$S_{t+1}: \; \ket{0}^{\otimes k_3}$} s2;
        qubit {$O_{t+1}: \; \ket{0}^{\otimes k_4}$} o2;
        qubit {$R_{t+1}: \; \ket{0}^{\otimes k_5}$} r2;
        box {$\mathcal{U}(b)$} s1;
        box {$\mathcal{U}(a)$} a1;
        box {$\mathcal{U}_1$} s2 | a1, s1;
        box {$\mathcal{U}_2$} o2 | s2, a1;
        box {$\mathcal{U}_3$} r2 | s1, a1;
        box {$G^k(o)$} (s1, a1, s2, o2, r2);
        measure s2;
      \end{yquant}
    \end{tikzpicture}}
    \caption{Quantum circuit for a belief update.}
    \label{fig:qbn-belief-update}
\end{wrapfigure} 

Lemma \ref{lemma:qbn-measurement-probs} gives the measurement probabilities of this circuit. From this follows theorem \ref{theorem:quantum-belief-update}, which determines that the quantum and classical belief updates coincide.

\begin{lemma}
    The probability of measuring $S_{t+1} = s^{\prime}$ in the quantum circuit of Figure \ref{fig:qbn-belief-update} is given by the following expression
    \begin{equation}
        \expval{\rho (b, a, o)}{s^{\prime}} = \frac{1}{\eta} P(o|s^{\prime}, a) \sum_{s \in \mathcal{S}} P(s^{\prime}|s, a) b(s)
        \label{eq:qbu-prob}
    \end{equation}
    \label{lemma:qbn-measurement-probs}
\end{lemma}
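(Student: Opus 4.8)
The plan is to propagate the quantum state through the circuit of Figure~\ref{fig:qbn-belief-update} register by register, to recognize the state immediately before the measurement as a quantum Bayesian network encoding the relevant one-step slice of the DDN, and then to apply the amplitude-amplification analysis of Section~\ref{subsec:qrs} in order to read off the measurement statistics of $S_{t+1}$.

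First I would unfold the circuit. Starting from $\ket{0}^{\otimes(k_1+\dots+k_5)}$, the operators $\mathcal{U}(b)$ and $\mathcal{U}(a)$ prepare $\sum_{s\in\mathcal{S}}\sqrt{b(s)}\,\ket{s}$ on the $S_t$ register and the computational-basis state $\ket{a}$ on the $A_t$ register. The uniformly controlled rotations $\mathcal{U}_1$, $\mathcal{U}_2$, $\mathcal{U}_3$ then successively append, through the CPT-encoding angle of \eqref{eq:qbn-rotation-angle}, the amplitudes $\sqrt{P(s'|s,a)}$, $\sqrt{P(o'|s',a)}$ and $\sqrt{P(r|s,a)}$ onto the $S_{t+1}$, $O_{t+1}$ and $R_{t+1}$ registers, each conditioned on its parents, so that the state just before $G^k(o)$ is
\begin{equation}
  \ket{\Psi} \;=\; \sum_{s,s',o',r}\sqrt{b(s)\,P(s'|s,a)\,P(o'|s',a)\,P(r|s,a)}\;\ket{s}\,\ket{a}\,\ket{s'}\,\ket{o'}\,\ket{r}.
\end{equation}
Its squared amplitudes reproduce the DDN joint distribution, so $\ket{\Psi}$ is exactly the QBN state of \eqref{eq:qbn-encoding-background} for the unrolled slice.

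Next I would apply $G^k(o)$, the amplitude-amplification operator for the evidence $O_{t+1}=o$, and invoke the standard argument recalled in Section~\ref{subsec:qrs} and detailed in Appendix~\ref{app:q_rej_sampling}: writing $\Pi_o$ for the projector onto the subspace with $O_{t+1}=o$, the operator $G^k(o)$ rotates $\ket{\Psi}$ inside the two-dimensional plane spanned by the ``good'' component $\Pi_o\ket{\Psi}$ and its orthogonal complement, leaving the \emph{direction} of the good component unchanged and merely rescaling its weight; choosing $k$ from an estimate of $\eta \equiv P(o\mid b,a)$ drives that weight to (near) unity. Hence, up to the amplification error, the post-amplification state equals the normalized projection $\eta^{-1/2}\,\Pi_o\ket{\Psi}$. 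Measuring $S_{t+1}$ and applying the Born rule then gives $\expval{\rho(b,a,o)}{s'}$ as the sum of the squared amplitudes of $\eta^{-1/2}\,\Pi_o\ket{\Psi}$ over the traced-out registers $S_t$, $O_{t+1}$, $R_{t+1}$: the $O_{t+1}$ sum is trivial, the reward sum contributes $\sum_r P(r|s,a)=1$ by normalization of each CPT row, and what remains is $\eta^{-1}P(o|s',a)\sum_{s}P(s'|s,a)\,b(s)$, which is \eqref{eq:qbu-prob}.

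The step I expect to be the main obstacle is making the amplitude-amplification argument rigorous: one must verify that $G^k(o)$ does not distort the \emph{relative} amplitudes \emph{within} the evidence subspace, only the global good/bad balance, so that amplification followed by measurement yields the same conditional marginal over $s'$ as exact post-selection on $O_{t+1}=o$ would; and one must state precisely whether the lemma is asserted exactly (perfect amplification, assuming $\eta$ known, possibly via the generalized final rotation of \cite{brassard2002}) or only up to the small, benign acceptance-rate deviations discussed in Section~\ref{subsec:qrs}. The remaining steps are routine bookkeeping with uniformly controlled rotations and normalized CPT rows.
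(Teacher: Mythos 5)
Your proposal follows essentially the same route as the paper: it derives the same pre-amplification state by composing $\mathcal{U}(b)$, $\mathcal{U}(a)$, $\mathcal{U}_1$, $\mathcal{U}_2$, $\mathcal{U}_3$ (the paper's Lemma~\ref{lemma:ddn-encoding-operator}), assumes perfect amplitude amplification projects onto the $O_{t+1}=o$ subspace and renormalizes by $\eta^{-1/2}$ (which the paper likewise only \emph{asserts}, in Assertion~\ref{asser:grover-qbu}, rather than proves), and then obtains \eqref{eq:qbu-prob} from the Born rule with the reward sum collapsing via $\sum_r P(r|s,a)=1$, exactly as in the paper's partial-trace computation. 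The ``main obstacle'' you flag --- that $G^k(o)$ must preserve relative amplitudes within the evidence subspace --- is handled no more rigorously in the paper itself, so your treatment is faithful to the intended argument.
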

\noindent where $\eta$ is a normalizing constant (defined in appendix \ref{app:q_rej_sampling}).

\begin{theorem}
    The quantum and classical belief update rules are equivalent for DDNs using rejection sampling.
    \label{theorem:quantum-belief-update}
\end{theorem}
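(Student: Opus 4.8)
The plan is to obtain the theorem as an essentially immediate consequence of Lemma~\ref{lemma:qbn-measurement-probs} together with the definition of the classical update, so that the only genuine task is pinning down the normalising constant.

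First I would make precise what ``the classical belief update via rejection sampling'' produces. Running rejection sampling on the time‑unrolled DDN slice with $S_t$ drawn from $b$ and $A_t$ clamped to $a$, and discarding every sample whose $O_{t+1}$ differs from $o$, yields (by Bayes' theorem, i.e.\ the correctness of rejection sampling) samples distributed according to
\begin{equation*}
  b'(s') \;=\; P\!\left(s' \mid b,a,o\right)
  \;=\;\frac{P(o\mid s',a)\,\sum_{s\in\mathcal S}P(s'\mid s,a)\,b(s)}{\sum_{s'\in\mathcal S}P(o\mid s',a)\sum_{s\in\mathcal S}P(s'\mid s,a)\,b(s)},
\end{equation*}
which is exactly the object defined by the belief‑updating rule~\eqref{eq:belief-update}, with the proportionality constant made explicit as the evidence probability $P(o\mid b,a)$. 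This is the distribution the quantum circuit must reproduce.

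Next I would invoke Lemma~\ref{lemma:qbn-measurement-probs}, which states that measuring the $S_{t+1}$ register of the circuit of Figure~\ref{fig:qbn-belief-update} returns $s'$ with probability $\expval{\rho(b,a,o)}{s'} = \eta^{-1}\,P(o\mid s',a)\sum_{s\in\mathcal S}P(s'\mid s,a)\,b(s)$. Both $s'\mapsto\expval{\rho(b,a,o)}{s'}$ and $s'\mapsto b'(s')$ are probability distributions on $\mathcal S$ — the first because $\rho(b,a,o)$ is a reduced density operator and hence $\tr\rho(b,a,o)=1$, the second by construction — and, comparing the two displayed expressions, they are proportional, with ratio $P(o\mid b,a)/\eta$. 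Two proportional probability vectors coincide, so $\eta = P(o\mid b,a)$ and $\expval{\rho(b,a,o)}{s'} = b'(s')$ for every $s'\in\mathcal S$; in the compact notation of \S\ref{subsec:pomdps}, $\rho(b,a,o)$ is diagonal in the state basis with entries $\tau(b,a,o)$, which is precisely the asserted equivalence.

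The step that actually demands care is not this concluding comparison but the two structural facts on which Lemma~\ref{lemma:qbn-measurement-probs} rests, both to be handled in Appendix~\ref{app:q_rej_sampling}: that the amplitude‑amplification block $G^k(o)$ acts as a uniform rescaling of the ``evidence‑$o$'' subspace and therefore leaves the \emph{relative} amplitudes of the various $s'$ within it — and with them the conditional law of $S_{t+1}$ — intact; and that the registers traced out before measurement, notably the reward register written by $\mathcal U_3$, contribute only an $s'$‑independent factor to the marginal over $S_{t+1}$ — the two branches of each such controlled rotation summing to unit probability once the corresponding ancillas are left unobserved — so they cannot skew the outcome statistics. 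Once those points are secured, Lemma~\ref{lemma:qbn-measurement-probs} already carries all the quantum content and the theorem follows by the normalisation argument above.
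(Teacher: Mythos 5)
Your proposal is correct and follows essentially the same route as the paper: it invokes Lemma~\ref{lemma:qbn-measurement-probs} for the measurement statistics, writes the classical update with its normalising constant, and concludes from the fact that two proportional probability distributions over $\mathcal{S}$ must coincide that the normalisers agree --- which is exactly the paper's argument, with the small bonus that you identify the constant explicitly as the evidence probability $P(o\mid b,a)$. Your closing remarks correctly locate the real technical burden in the lemma itself (the uniform rescaling by $G^k(o)$ and the harmlessness of tracing out the reward register), which the paper likewise defers to Appendix~\ref{app:q_rej_sampling}.
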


\vspace{1em}

 Thus, we can replace the classical belief update with a quantum one. Note that this would also apply if we measured the state encoding qubits without applying $G^k(o)$. However, this amplitude amplification step increases the acceptance probability in rejection sampling, yielding a quantum speed up as demonstrated in Appendix~\ref{app:complexity}. The quantum inference subroutine can be used to calculate the expected reward  of \autoref{eq:b-value} and \autoref{eq:r-value} and both the observation probabilities and belief updates of \autoref{eq:r-value}.
\section{Complexity analysis}
\label{sec:complexity-analysis}

In the section, we provide our main theoretical results: the computational complexity analysis for both the classical and the quantum-classical POMDP look ahead algorithms. In subsection  \ref{subsec:mdp-pomdp} we discuss the possible sources of quantum advantage, after which subsection \ref{subsec:time-comp} presents our key theorems. We derive a bound on the number of samples for the purely classical case, as well as for our quantum-enhanced version. We additionally derive simplified expressions for the case where the quantum algorithm stands out the most: the belief update dominates the cost of the algorithm. 

\subsection{MDP disadvantage and POMDP advantage}
\label{subsec:mdp-pomdp}

The quantum rejection sampling inference algorithm can be used to approximate three probability distributions for POMDPs: the reward distribution $P\left(r_{t+1}\middle|b_t, a_t\right)$; the observation distribution $P\left(o_{t+1}\middle|b_t,a_t\right)$; the belief update $P\left(s_{t+1}\middle|b_t,a_t,o_{t+1}\right)$. It is assumed that they require a different number of samples $n$, $m$ and $l$, respectively, to be approximated.

The evidence of these probability distributions involves the state RV $S_t$, the action RV $A_t$, and the observation RV $O_{t+1}$. Note that from these RVs, only the observation RV is not a root node of the DDN (see \autoref{fig:ddn}). As such, the values of the state and action RVs can be directly encoded in their corresponding CPTs before performing inference, ensuring that the evidence of those RVs always matches and never leads to a rejected sample. 

For the reward and observation distributions, encoding the evidence implies that a sample can be extracted, without ever being rejected, by simply performing random sampling at each RV. This algorithm is called \textit{direct sampling} and is a simpler and more efficient algorithm when compared to rejection sampling. However, \citep{low2014} shows that the quantum analog of direct sampling is \textit{exponentially slower} than the classical counterpart:  the classical and quantum complexities are $\mathcal{O}\left(NM\right)$ and $\mathcal{O}\left(N2^M\right)$ respectively, where $N$ is the number of RVs in the DDN and $M$ is the maximum number of parents of a RV. This makes the classical algorithm the preferred approach for inferring these probability distributions.

In an MDP, observation RVs do not exist; probability distributions depend only on states and actions. This implies that direct sampling is the best inference algorithm, and thus using its classical version is the most efficient approach. On the other hand, belief updating in POMDPs does need to be performed using rejection sampling, and therefore the quantum rejection sampling algorithm poses an advantage. Likewise, there may be a quantum speed-up for the reward calculation if it requires belief updating.

\subsection{Time complexity}
\label{subsec:time-comp}

We now state the main analytical results of this work, regarding the time complexity of the classical and quantum algorithms. Proofs are provided in Appendix \ref{app:complexity}. 

To introduce the notation for the upcoming theorems, there are two important factors that distinguish the complexity for the classical and quantum-classical algorithms. Denoting by $\Phi$ the set of all observation probabilities in the lookahead tree, these factors are $c_l = \sum_{p \in \Phi} p^{-1}$ and $q_l = \sum_{p \in \Phi} p^{-\frac{1}{2}}$. Let $\mathcal{S}, \mathcal{A}, \Omega,$ and $\mathcal{R}$ denote the sizes of the state, action, observation, and reward spaces, respectively. As previously defined, $H$ denotes the look-ahead size, $n$ and $m$ are the numbers of samples for the reward and observation sampling respectively, $N$ is the number of RVs in the BN, and $M$ is the largest number of parents among all RVs. The complexity of the classical algorithm is then given by Theorem \ref{theorem:classical-comp}.

\begin{theorem}
The computational complexity of the classical lookahead algorithm is given by:
\begin{equation*}
    \mathcal{O} \left( n N M \mathcal{A}^{H-1} \Omega^{H-1} \left( \mathcal{A} + \left( \left(\mathcal{R} + \Omega \right)\mathcal{S}^{H-1} \right)^2 c_l \right) \right)
\end{equation*}
\label{theorem:classical-comp}
\end{theorem}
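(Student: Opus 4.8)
The plan is to write the total running time as a sum, over the nodes of the depth-$H$ lookahead tree, of the cost of evaluating that node's value via \eqref{eq:leaf-value}--\eqref{eq:r-value}, and then collect the resulting geometric sums. First I would record the tree combinatorics: since a belief node has $\mathcal{A}$ reward children and each reward node has $\Omega$ belief children, there are $(\mathcal{A}\Omega)^{t}$ belief nodes, $\mathcal{A}(\mathcal{A}\Omega)^{t}$ reward nodes, and $\mathcal{A}(\mathcal{A}\Omega)^{t}\Omega$ observation edges at depth $t$; summing over $0\le t\le H-1$ these geometric series are dominated by the deepest layer and contribute the $\mathcal{A}^{H-1}\Omega^{H-1}$ prefactor, while the leaf reward nodes are the $\mathcal{A}(\mathcal{A}\Omega)^{H-1}$ children of the deepest belief layer.

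Next I would bound the per-node cost according to node type, tracking carefully which quantities are obtained by \emph{direct} and which by \emph{rejection} sampling. A non-leaf belief node only performs the $\max$ of \eqref{eq:b-value}, costing $\mathcal{O}(\mathcal{A})$ once its children are known. A leaf reward node evaluates $\mathbb{E}(r\mid b,a)$ whose evidence --- the belief $b$ and the action $a$ --- sits at root RVs of the DDN, so it is computed by direct sampling at cost $\mathcal{O}(nNM)$, using the $\mathcal{O}(NM)$ per-sample bound recalled in \S\ref{subsec:qbn}--\S\ref{subsec:qrs}. A non-leaf reward node additionally, for each of the $\Omega$ observations, (i) estimates $P(o\mid b,a)$, again by direct sampling, and (ii) carries out the belief update \eqref{eq:belief-update}, which must use rejection sampling because the observation is not a root RV; by the classical rejection-sampling cost of \S\ref{subsec:qrs}, the update along an edge whose observation probability is $p$ costs $\mathcal{O}(lNMp^{-1})$.

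I would then account for the size of the belief states and the reward/observation tallies propagated down the tree: at depth $\le H-1$ these are represented over the unrolled state space, so one verifies they have size $\mathcal{O}((\mathcal{R}+\Omega)\mathcal{S}^{H-1})$, and since each rejection-sampling update both consumes one such object and produces another its arithmetic overhead is the square of this size. Putting the pieces together, the leaf-reward contribution is (number of leaf reward nodes) $\times\,\mathcal{O}(nNM)=\mathcal{O}(nNM\,\mathcal{A}\cdot\mathcal{A}^{H-1}\Omega^{H-1})$; the total belief-update contribution is $\sum_{p\in\Phi}\mathcal{O}(lNM((\mathcal{R}+\Omega)\mathcal{S}^{H-1})^{2}p^{-1})=\mathcal{O}(lNM((\mathcal{R}+\Omega)\mathcal{S}^{H-1})^{2}c_{l})$ by the definition $c_{l}=\sum_{p\in\Phi}p^{-1}$; and the remaining terms (expected rewards and observation probabilities at non-leaf nodes, and the $\mathcal{O}(\mathcal{A})$ maxes) are of equal or lower order. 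Absorbing $m$ and $l$ into $n$ as the governing sample count and factoring out the common $(\mathcal{A}\Omega)^{H-1}$ node count yields the stated $\mathcal{O}(nNM\,\mathcal{A}^{H-1}\Omega^{H-1}(\mathcal{A}+((\mathcal{R}+\Omega)\mathcal{S}^{H-1})^{2}c_{l}))$.

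The main obstacle is the bookkeeping around the propagated objects and the final collapse: one has to pin down exactly how the belief/CPT/tally representations grow with the lookahead depth (hence the $\mathcal{S}^{H-1}$ and the $(\mathcal{R}+\Omega)$ factors) and argue that a rejection-sampling belief update incurs only a quadratic, not higher, dependence on that size; and then check that the per-level geometric sums and the $c_{l}$ sum really do recombine into a single product with no stray factor of $H$. Getting the evidence/root-node distinction right for each of the three sampled quantities --- so that only the belief update pays the $p^{-1}$ rejection penalty while the reward and observation estimates do not --- is the other place where care is needed.
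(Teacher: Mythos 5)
There is a genuine gap, and it sits exactly where you flagged your own uncertainty: the origin of the factor $\left(\left(\mathcal{R}+\Omega\right)\mathcal{S}^{H-1}\right)^{2}$. You attribute it to the \emph{arithmetic} cost of a single belief update acting on a propagated object of size $\mathcal{O}\!\left(\left(\mathcal{R}+\Omega\right)\mathcal{S}^{H-1}\right)$, arguing that the update "consumes one such object and produces another" and hence pays the square of that size. No such object exists in the algorithm: a belief state is a distribution over $\mathcal{S}$, and the per-sample cost of a rejection-sampling belief update on the DDN slice is simply $\mathcal{O}\!\left(NMP(e)^{-1}\right)$, with no additional polynomial dependence on $\mathcal{R}$, $\Omega$, or $\mathcal{S}^{H}$. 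In the paper the squared factor is not a per-sample arithmetic cost at all; it is the \emph{ratio of sample counts} $l/n$. The sample-complexity analysis (Lemma \ref{lemma:belief-error} through Lemma \ref{lemma:sample-relation}) shows that the belief-approximation error compounds geometrically, picking up a factor $\mathcal{S}$ per level of the tree and a factor $\left(\mathcal{R}+\gamma\Gamma\Omega\right)$ when it propagates into the reward and observation estimates; equalizing the three error contributions under Hoeffding's $1/\sqrt{l}$ rate then forces $l=\Theta\!\left(\left(\left(\mathcal{R}+\Omega\right)\mathcal{S}^{H-1}\right)^{2}n\right)$, the square arising from inverting the Hoeffding rate rather than from any quadratic-time operation. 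Your step "absorbing $m$ and $l$ into $n$ as the governing sample count" discards precisely this relation, and your data-structure argument is then needed to reinstate the factor by a mechanism that cannot be justified.

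Everything else in your outline matches the paper's route: the $(\mathcal{A}\Omega)^{i}$ node counts and their geometric sums, the identification of which distributions admit direct sampling (reward and observation, since their evidence sits at root RVs) versus rejection sampling (the belief update, penalized by $p^{-1}$), and the collapse of the per-edge penalties into $c_{l}=\sum_{p\in\Phi}p^{-1}$. To close the gap you would need to import the sample-complexity results: define the regret target, bound the cumulative belief error as in Lemma \ref{lemma:belief-error}, derive the relations $m=\left(\gamma\Gamma\right)^{2}n$ and $l=\Theta\!\left(\left(\left(\mathcal{R}+\Omega\right)\mathcal{S}^{H-1}\right)^{2}n\right)$ of Lemma \ref{lemma:sample-relation}, and only then substitute into $\mathcal{O}\!\left(\mathcal{A}^{H-1}\Omega^{H-1}\left(lC_{l}+\mathcal{A}\left(nC_{n}+mC_{m}\right)\right)\right)$ with $C_{n}=C_{m}=\mathcal{O}(NM)$ and $C_{l}=\mathcal{O}\!\left(NMc_{l}\right)$.
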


In contrast, the complexity of the quantum-classical algorithm of section \ref{sec:algorithm} is  given by Theorem \ref{theorem:quantum-comp}.

\begin{theorem}
If $2^M \approx M$, the computational complexity of the quantum-classical lookahead algorithm is:
\begin{equation*}
    \mathcal{O} \left( n N M \mathcal{A}^{H-1} \Omega^{H-1} \left( \mathcal{A} + \left( \left(\mathcal{R} + \Omega \right)\mathcal{S}^{H-1} \right)^2 q_l \right) \right)
\end{equation*}
\label{theorem:quantum-comp}
\end{theorem}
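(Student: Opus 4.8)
The plan is to reuse almost the entire accounting of the classical bound (Theorem~\ref{theorem:classical-comp}) and to isolate the single subroutine in which quantum resources alter the count. First I would recall from the lookahead construction that the algorithm expands a tree of bounded depth $H$ whose belief-node and reward-node populations scale as the powers of $\mathcal{A}$ and $\Omega$ appearing in the bound, and that the only work done at a belief node is (i) a maximisation over the $\mathcal{A}$ actions (Eq.~\eqref{eq:b-value}), whose cost is independent of the inference model, and (ii) for each action, the evaluation of the expected reward, the observation probabilities, and the induced belief updates of Eqs.~\eqref{eq:leaf-value}--\eqref{eq:r-value}. By \S\ref{subsec:mdp-pomdp}, the reward and observation distributions are produced by classical direct sampling in \emph{both} algorithms --- the quantum analogue being strictly worse there --- so those contributions are textually identical to the classical proof. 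The one genuinely different subroutine is the belief update of Eq.~\eqref{eq:belief-update}, which must use rejection sampling and which the hybrid algorithm performs via the circuit of Figure~\ref{fig:qbn-belief-update}.

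Next I would substitute the cost of one \emph{quantum} belief update. By Lemma~\ref{lemma:qbn-measurement-probs} and the amplitude-amplification argument of \S\ref{subsec:qrs}, extracting one accepted state sample when the relevant observation (evidence) probability is $p$ takes $\mathcal{O}(p^{-1/2})$ applications of $G$, each of which applies $\mathcal{B}$, $\mathcal{B}^{\dagger}$ and the phase flips at cost $\mathcal{O}(N2^{M})$; under the hypothesis $2^{M}\approx M$ this becomes $\mathcal{O}(NM\,p^{-1/2})$ per sample --- the same per-traversal factor $NM$ as the classical rejection sampler, but with $p^{-1}$ replaced by $p^{-1/2}$. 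Because each measurement collapses the register, the samples requested for a belief update are independent and their number enters multiplicatively exactly as in the classical case, so in the classical derivation every factor $p^{-1}$ attached to a belief update is simply replaced by $p^{-1/2}$ and nothing else changes.

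Finally I would aggregate over the tree. In the classical proof the belief-update costs are summed over all reward-node/observation pairs that trigger an update --- precisely the index set $\Phi$ --- producing the factor $c_l=\sum_{p\in\Phi}p^{-1}$ that multiplies the combinatorial term $\bigl((\mathcal{R}+\Omega)\mathcal{S}^{H-1}\bigr)^{2}$, while the action-maximisation contributes the additive $\mathcal{A}$ and the remaining structure gives the outer factor $nNM\mathcal{A}^{H-1}\Omega^{H-1}$. Carrying out the identical summation with the quantum per-sample cost replaces $c_l$ by $q_l=\sum_{p\in\Phi}p^{-1/2}$ and leaves every other term verbatim, which is the claimed bound. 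I expect the main obstacle to be not the substitution itself but the argument that it is the \emph{only} substitution: one must check that the $2^{M}\approx M$ hypothesis is exactly what prevents a spurious $2^{M}$ from surviving in the quantum per-sample cost, and that the reward and observation inferences are deliberately kept classical (rather than routed through the quantum circuit), so that no term of the bound degrades relative to Theorem~\ref{theorem:classical-comp}.
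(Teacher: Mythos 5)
Your proposal is correct and follows essentially the same route as the paper: the appendix proof of this statement (Theorem \ref{theorem:quantum-complexity}) is precisely the substitution $C_n = C_m = \mathcal{O}(NM)$ and $C_l = \mathcal{O}\left(N2^M q_l\right) \approx \mathcal{O}\left(NM q_l\right)$ into the common cost expression \eqref{eq:comp-C}, so that the classical bound's $c_l$ becomes $q_l$ and every other factor is unchanged, exactly as you argue. One minor correction to your description of the classical accounting you are importing: the factor $\left(\left(\mathcal{R}+\Omega\right)\mathcal{S}^{H-1}\right)^2$ is not a combinatorial count over reward-node/observation pairs but the asymptotic ratio $l/n$ of belief-update samples to reward samples fixed by the error-balancing Lemma \ref{lemma:sample-relation}, and the additive $\mathcal{A}$ arises because there are $\mathcal{A}$ observation nodes per belief node (each incurring the $n+m$ direct-sampling cost), not from the max over actions — neither point affects the validity of your substitution.
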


Finally, corollary \ref{cor:simplified} provides simplified expressions under an ideal assumption for the quantum-classical lookahead algorithm: the contribution of the belief updating, the only part with a quantum speedup, is much more computationally expensive than the remaining inferences used in the algorithm.

\begin{corollary}
    If $\frac{1}{q_l} \ll \frac{\left( \left(\mathcal{R} + \Omega\right) \mathcal{S}^{H-1} \right)^2}{\mathcal{A}}$, the expressions of Theorems \ref{theorem:classical-comp} and \ref{theorem:quantum-comp} become respectively:
   
    \begin{minipage}{0.48\textwidth}
    \begin{equation*}
    \mathcal{O} \left( n N M \mathcal{A}^{H-1} \Omega^{H-1} \left( \left(\mathcal{R} + \Omega \right) \mathcal{S}^{H-1} \right)^2 c_l \right)
    \label{eq:classical-comp}
    \end{equation*}
    \end{minipage}
    \hfill
    \begin{minipage}{0.48\textwidth}
    \begin{equation*}
    \mathcal{O} \left( n N M \mathcal{A}^{H-1} \Omega^{H-1} \left( \left(\mathcal{R} + \Omega \right) \mathcal{S}^{H-1} \right)^2 q_l \right)
    \label{eq:quantum-comp}
    \end{equation*}
    \end{minipage}
    \label{cor:simplified}
\end{corollary}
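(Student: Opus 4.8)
The plan is to show that, under the stated hypothesis, the summand $\mathcal{A}$ appearing inside the parenthesised factor of both Theorems \ref{theorem:classical-comp} and \ref{theorem:quantum-comp} is asymptotically dominated by the other summand, so it can be absorbed into the big-$\mathcal{O}$ without weakening either bound. Everything else in the two expressions is a common prefactor, so the whole statement reduces to a single order-of-magnitude comparison.

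First I would rewrite the hypothesis $\frac{1}{q_l} \ll \frac{\left((\mathcal{R}+\Omega)\mathcal{S}^{H-1}\right)^2}{\mathcal{A}}$ in the equivalent product form $\mathcal{A} \ll \left((\mathcal{R}+\Omega)\mathcal{S}^{H-1}\right)^2 q_l$, which is legitimate since $q_l = \sum_{p\in\Phi} p^{-1/2} > 0$. Substituting this into the quantum bound of Theorem \ref{theorem:quantum-comp}, the inner factor obeys $\mathcal{A} + \left((\mathcal{R}+\Omega)\mathcal{S}^{H-1}\right)^2 q_l = \Theta\!\left(\left((\mathcal{R}+\Omega)\mathcal{S}^{H-1}\right)^2 q_l\right)$; multiplying through by the common prefactor $nNM\mathcal{A}^{H-1}\Omega^{H-1}$ then yields the stated quantum expression.

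For the classical bound I would first note that $q_l \le c_l$: each observation probability $p \in \Phi$ lies in $(0,1]$, hence $p^{-1/2} \le p^{-1}$, and summing over $\Phi$ gives $q_l \le c_l$. Therefore the hypothesis implies the stronger inequality $\mathcal{A} \ll \left((\mathcal{R}+\Omega)\mathcal{S}^{H-1}\right)^2 q_l \le \left((\mathcal{R}+\Omega)\mathcal{S}^{H-1}\right)^2 c_l$, so the $\mathcal{A}$ term is likewise negligible in Theorem \ref{theorem:classical-comp}; absorbing it and retaining the prefactor gives the stated classical expression.

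There is no real obstacle here: the argument is a straightforward order-of-magnitude simplification. The only point needing care is making the comparison $q_l \le c_l$ explicit — it rests solely on observation probabilities being at most $1$ — so that the single hypothesis, phrased in terms of $q_l$, suffices to justify discarding the $\mathcal{A}$ term in both theorems simultaneously.
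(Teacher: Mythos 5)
Your proposal is correct and follows essentially the same route as the paper: the hypothesis directly kills the $\mathcal{A}$ term in the quantum bound, and the inequality $q_l \le c_l$ (which the paper also invokes, via $p^{-1/2} \le p^{-1}$ for $p \le 1$) transfers the domination to the classical bound. Your explicit justification of $q_l \le c_l$ is a welcome addition, since the paper asserts it with only a passing reference to the chain $\sqrt{\textstyle\sum_x a_x} \le \sum_x \sqrt{a_x} \le \sum_x a_x$.
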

\vspace{1em}

In the scenario of Corollary \ref{cor:simplified}, the quantum lookahead would witness a speed up of a factor of $\frac{c_l}{q_l} \in [1,q_l]$ (shown in appendix \ref{appsub:time_comp}). From this ratio, we can derive upper and lower bounds for the quantum advantage: 
\begin{equation}
\label{eq:advantage}
    c_l = q_l^{\beta}, \; \beta \in [1, 2]
\end{equation}

This means that the quantum-classical algorithm has a range of possible speedups: it has the same computational complexity as its classical counterpart in the worst case, and is quadratically faster in the best case. The numerical simulations in section \ref{sec:experimental-results} illustrate this variability, demonstrating how this ratio varies depending on the problem and the impact it has on performance. 

We stress that the result of equation \ref{eq:advantage} corresponds to the most advantageous case for the quantum algorithm, which involves two assumptions: the rejection sampling sub-routine dominates the cost, and the maximum in-degree of the Bayesian network is very small. These conditions are formalized as  $2^M \approx M$ and $\frac{1}{q_l} \ll \frac{\left( \left(\mathcal{R} + \Omega\right) \mathcal{S}^{H-1} \right)^2}{\mathcal{A}}$, respectively.  If they do not hold, the quantum advantage is instead given by: 
\begin{equation}
        \frac{ \mathcal{A} + \left( \left(\mathcal{R} + \Omega \right)\mathcal{S}^{H-1} \right)^2 c_l}{ \mathcal{A} + \left( \left(\mathcal{R} + \Omega \right)\mathcal{S}^{H-1} \right)^2 q_l2^M/M}  
\end{equation}

The second condition is especially restrictive, as the difference between exponential and linear scaling is substantial even for small values of $M$. However, there is potential for an advantage even if this difference is not negligible, as long it is small. More specifically, if we apply the first condition only, we derive the following condition for quantum advantage:
\begin{equation}
    c_l \geq \frac{2^M}{M}q_l
\end{equation}

In other words, the quantum advantage is penalized exponentially in $M$, with a break-even point at $c_l=q_l 2^M/M$. As such, there will be a quantum advantage only where the ratio $c_l/q_l$ is large enough to compensate the fast-growing penalty incurred by the quantum algorithm, which is plausible for problems where $M$ is small. 
\section{Methods}
\label{sec:methods}

We perform numerical simulations of reinforcement learning problems, with the goal of comparing the performances of our quantum-classical algorithm with the fully classical one.  The code is publicly available on GitHub  \cite{QBRLgit}. The objective is a practical assessment of what the $\frac{c_l}{q_l}$ speed-up means in practice. This ratio, and thus the magnitude of the quantum advantage in the rejection sampling sub-routine, depends on the problem at hand; thus, so does the effective benefit it entails.

The aim of our empirical study is not exhaustive benchmarking against the full POMDP literature, but a controlled validation of the mechanism predicted by the complexity analysis: when belief updating relies on rejection sampling, quantum rejection sampling can improve the cost-to-quality trade-off. Thus, we evaluate the algorithms under two complementary protocols: fixed query budget and fixed effective-sample budget. We consider two benchmarks, detailed below, which were chosen to expose different operating regimes: the first test case exhibits greater sensitivity to the number of samples, whereas the second has a higher quantum speed-up for the isolated sampling sub-routine. With this, we relate the empirical behavior directly to the key quantity $c_l/q_l$, and additionally discuss other practical considerations that determine the scenarios where the quantum advantage is most fruitful.


The numerical results we are interested in concern cost-to-benefit ratios. The units of cost are queries, defined as generated (but not accepted) samples\footnote{In the quantum case, the cost of applying the encoding operator once is equated to the cost of generating one sample, as is standard in the literature \cite{brassard2002}.}. The quantum speed-up can be used to improve performance for a given cost; or conversely, to decrease cost (and thus decision making time) for a given performance level.  For the purpose of demonstrating these improvements, we consider the time evolution of two quantities. The first one is the cumulative reward, that is, the rewards obtained by the agents, accumulated over time steps. In this case, the number of queries is the same for both algorithms. This experiment illustrates how the quantum algorithm can improve the decision making for a fixed time.

The second quantity is the cumulative cost, fixing the average performance requirements (represented by the number of accepted samples). Here, cost is again measured in terms of queries. This experiment illustrates how the quantum algorithm can speed up decision-making for a given performance level, highlighting the resource savings it achieves when replicating its classical counterpart exactly.

For a thorough assessment, we will consider in total 4 types of plots:

\begin{enumerate}
    \item \emph{Source of quantum advantage}. Cost per sample as a function of the baseline success probability, as in figure \ref{fig:quadratic_adv_both}. This visually demonstrates the quadratic quantum advantage in the rejection sampling sub-routine.
    \item \emph{Performance comparison under a fixed query budget.} Cumulative reward difference as a function of the time step, as in figures \ref{fig:vary_csample}, \ref{fig:reward_evol} and \ref{fig:reward}. That is, the vertical variable is $\sum_{t=0}^{t_i}r_q - \sum_{t=0}^{t_i}r_c$ This shows the gains of the quantum-classical algorithm as compared to the classical one, for a fixed decision making cost or time. In other words, it isolates the decision quality under equal cost.
    \item \emph{Cost comparison under a fixed effective-sample budget.} Cumulative cost difference as a function of the time step, as in figure \ref{fig:cost}. That is, the vertical variable is $\sum_{t=0}^{t_i}\text{queries}_q - \sum_{t=0}^{t_i}\text{queries}_c$. This shows the savings of the quantum-classical algorithm as compared to the classical one, for a fixed number of obtained samples and thus quality of the estimates. In other words, it isolates the cost under equal estimator quality.
    \item \emph{Cost comparison under a fixed performance level.} Cumulative cost as a function of the cumulative reward, as in figure \ref{fig:cost_vs_reward}. This is similar to the previous point, with the same vertical variable; but the plot additionally factors in how the difference in the number of samples/quality of the estimates affects the agent's performance. In other words, it isolates the cost under equal performance.
\end{enumerate}

We run each algorithm $10^2$ to $10^3$ to obtain averages, and additionally represent the standard deviations among runs as shaded areas where relevant. In the case of the fourth plot category, direct averages cannot be taken due to dispersion in the x-coordinate. To regularize results, we divide the domain into bins, and average the coordinates independently among each bin, to create visually intelligible plots. 

Our two test cases are the tiger problem of \cite{kaelbling1998planning}, and a robot exploration problem, illustrated in figures \ref{fig:tiger} and \ref{fig:robot}, respectively. 

\captionsetup[subfigure]{aboveskip=0pt}
\begin{figure}[!htb]
\centering
\captionsetup{aboveskip=0pt}
\begin{subfigure}[t]{.47\textwidth}
  \centering
  \includegraphics[width=\linewidth]{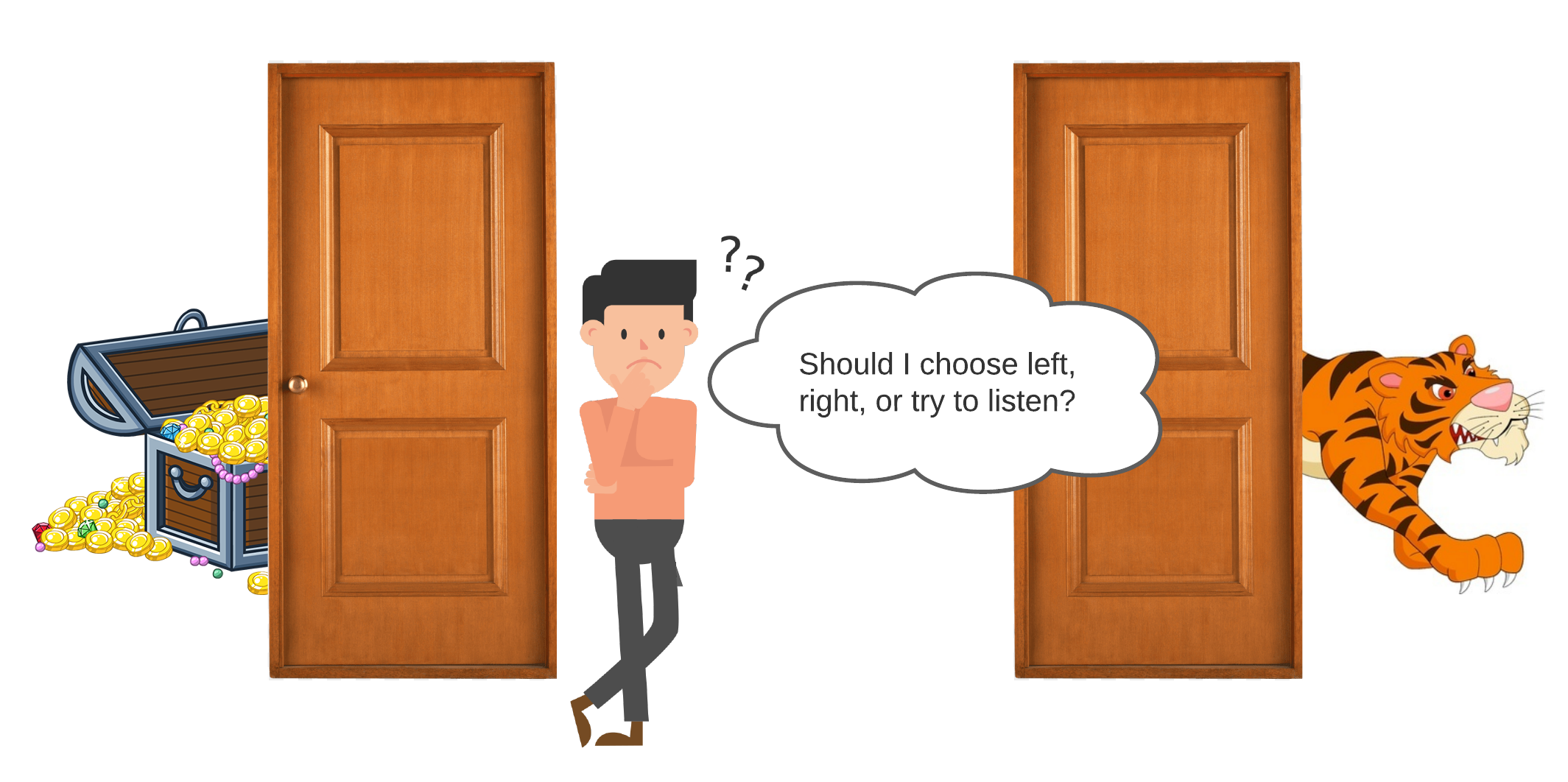}
  \caption{Tiger problem.}
  \label{fig:tiger}
\end{subfigure}\hfill
\begin{subfigure}[t]{.47\textwidth}
  \centering
  \includegraphics[width=0.7\linewidth]{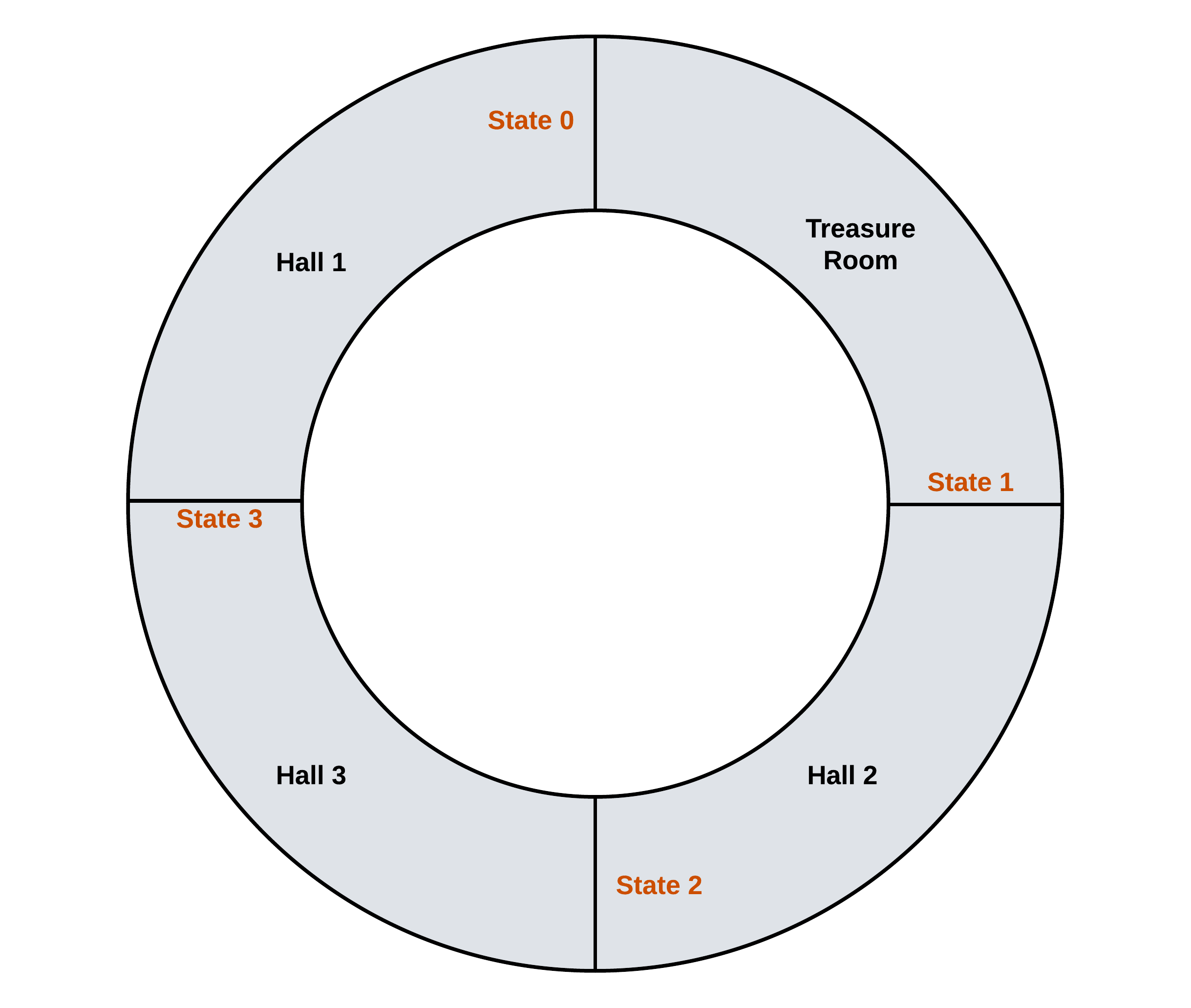}
  \caption{Robot problem.}
  \label{fig:robot}
\end{subfigure}
\caption{Illustration of the set ups for the considered test cases.}
\label{fig:illustrations}
\end{figure}

The tiger problem is as follows. There are two doors; one leads to a room with a tiger, the other to a room with a treasure. The possible actions are to open one door, the other, or try to listen through the doors. Finding the treasure gives a reward of $5$, whereas to find the tiger corresponds to a reward of $-10$. Listening has a reward of $-1$, and has $15\%$ failure probability. After each action except for listening, the tiger and treasure are randomly reassigned to the doors, so that the decision process can continue indefinitely.

For the robot problem, there are 4 contiguous rooms with a circular layout, one of which contains a treasure. The robot can identify whether it is in the treasure room up to $10\%$ error. At each timestep, the robot can move clockwise or counterclockwise, with a reward of $-1$. Alternatively, if it is in the treasure room, it can pull one of two levers. These levers have different chances of producing the treasure, which brings a reward of $10$: $70\%$ and $90\%$. As a caveat, they otherwise damage it to different degrees, yielding rewards of $-5$ and $-20$ for the lever with lower and higher success rates respectively. Pulling a lever in other rooms has no effect. After pulling a lever, the robot is placed back in the initial room.

\section{Experimental Results}
\label{sec:experimental-results}

In this section, we numerically test the performance of the QBRL algorithm applied to partially observable reinforcement learning problems, comparing it to the classical algorithm. 

The fundamental source of quantum advantage is depicted in figure \ref{fig:quadratic_adv_both}, which shows how the cost increases as the problem gets harder. The cost is higher for lower acceptance probabilities, since obtaining one effective sample requires more samples to be produced and rejected on average.  

\captionsetup[subfigure]{aboveskip=0pt}
\begin{figure}[!htb]
\centering
\captionsetup{aboveskip=0pt}
\begin{subfigure}[t]{.47\textwidth}
  \centering
  \includegraphics[width=\linewidth]{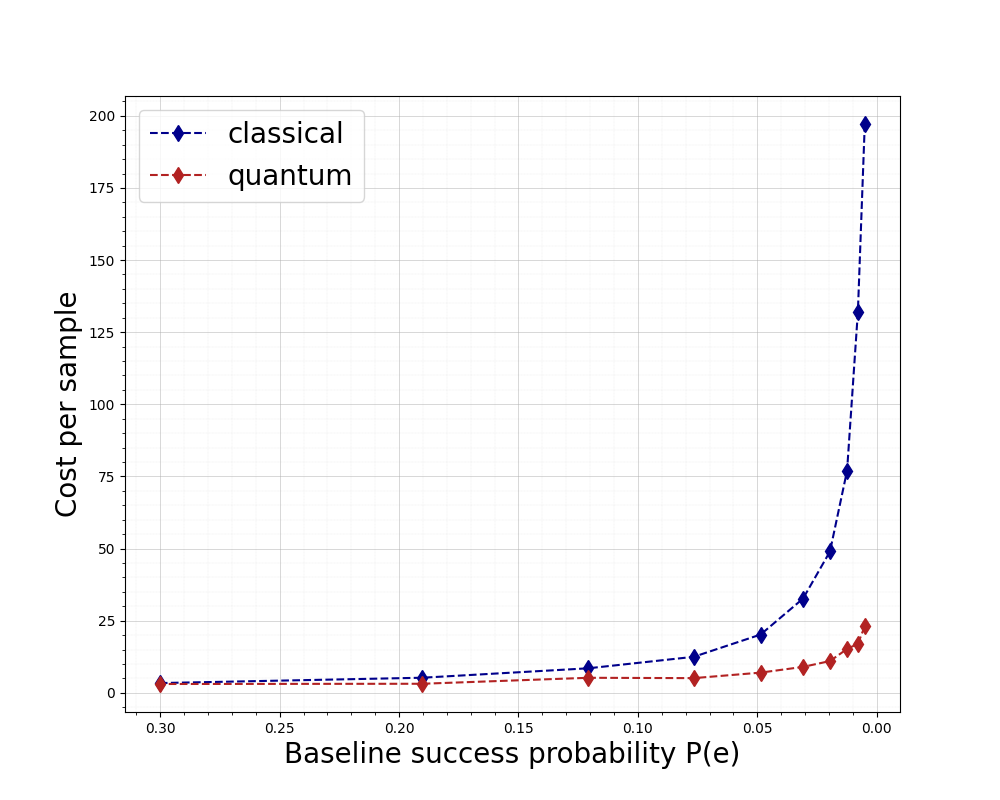}
  \caption{Evolution of the cost per sample with decreasing acceptance probability $P(e)$ for the classical and quantum rejection sampling routines.}
  \label{fig:quadratic_adv}
\end{subfigure}\hfill
\begin{subfigure}[t]{.47\textwidth}
  \centering
  \includegraphics[width=.88\linewidth]{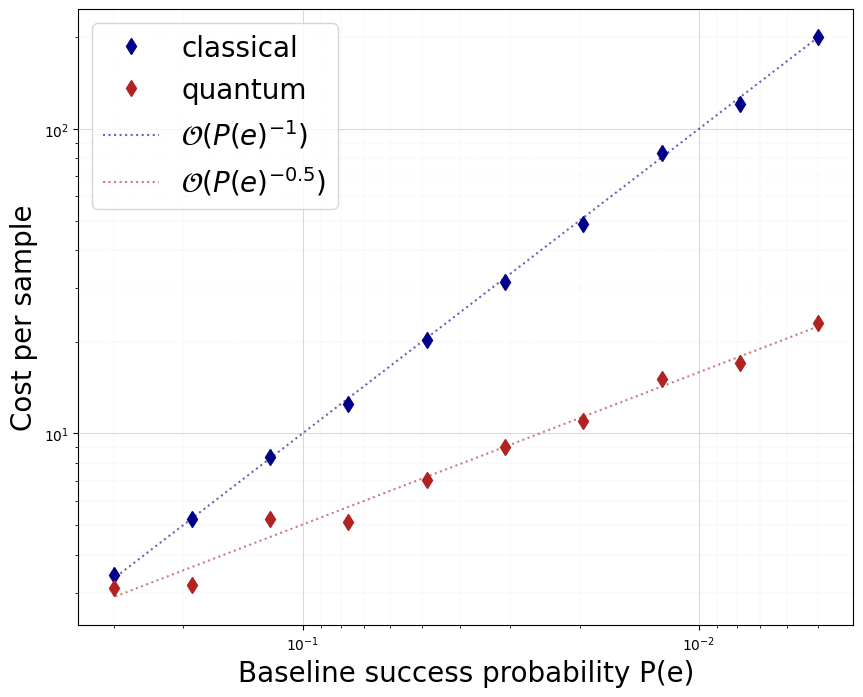}
  \caption{Evolution of the cost per sample with decreasing acceptance probability $P(e)$ for the classical and quantum rejection sampling routines, in a logarithmic scale and juxtaposed with reference lines for the complexity of each.}
  \label{fig:quadratic_adv_log}
\end{subfigure}
\caption{Quadratic quantum advantage in rejection sampling. The results are averaged over $10^3$ runs. The cost per sample corresponds to the average number of queries as defined in the text.}
\label{fig:quadratic_adv_both}
\end{figure}

For the classical case, the average cost per sample grows faster as the success probability decreases, as can be seen in figure \ref{fig:quadratic_adv}. Figure \ref{fig:quadratic_adv_log} clarifies the magnitude of this speed-up: the classical routine has complexity $\mathcal{O}(P(e)^{-1})$, whereas the quantum routine is quadratically faster, with complexity $\mathcal{O}(P(e)^{-0.5})$. This is due to the quantum amplitude amplification process.

In other words, if the acceptance probability becomes $100$ times smaller, the classical cost increases by a factor of $100$, and the quantum one by a factor of $10$. The magnitude of the quantum advantage for a specific instance of rejection sampling depends on the specific values of $P(e)$, which determine the $\frac{c_l}{q_l}$ ratio.  The operator generalization mentioned in section \ref{subsec:qc} would smooth out the irregularities in the plot \ref{fig:quadratic_adv}, which are due to the discrete nature of the amplification. 

The evolution of cumulative reward difference with the time step for the classical and quantum algorithms under a fixed query cost is presented in figure \ref{fig:vary_csample} for various numbers of classical samples ($\{5, 15, 50, 100\}$). The number of effective samples is higher for  the quantum algorithm due to the $\sfrac{c_l}{q_l}$ ratio, which enables a quantum advantage. This advantage depends on the number of classical samples, as well as on the problem. We note the large spread of the values due to the stochastic nature of the problem, which brings large inter-run variability.  

\captionsetup[subfigure]{aboveskip=0pt}
\begin{figure}[!htb]
\centering
\captionsetup{aboveskip=0pt}
\begin{subfigure}[t]{.47\textwidth}
  \centering
  \includegraphics[width=\linewidth]{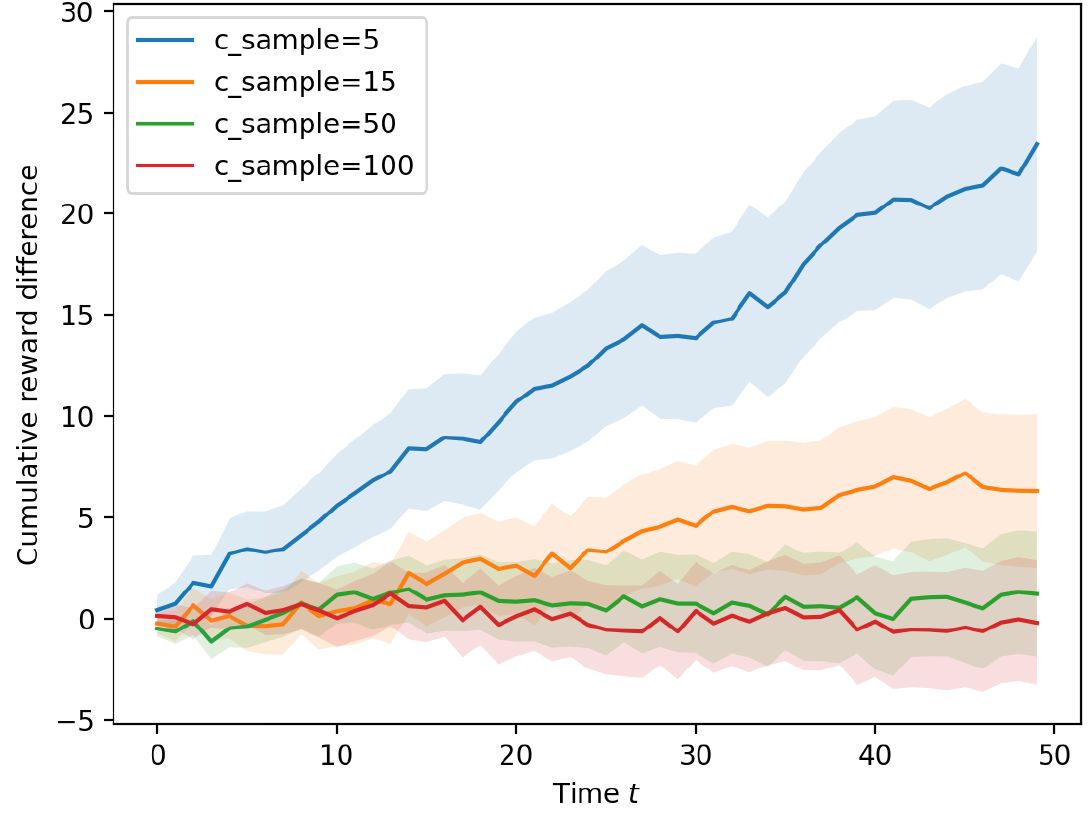}
  \caption{Tiger problem. }
  \label{fig:vary_csample_tiger}
\end{subfigure}\hfill
\begin{subfigure}[t]{.47\textwidth}
  \centering
  \includegraphics[width=\linewidth]{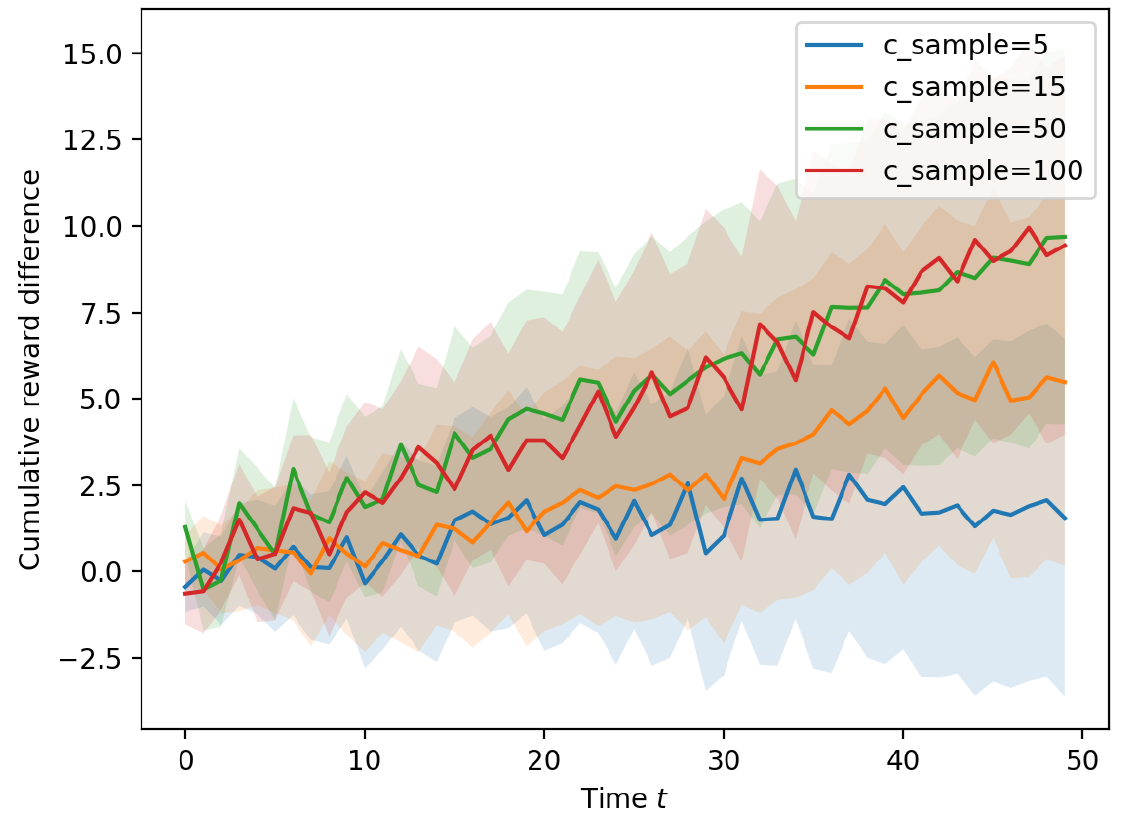}
  \caption{Robot problem. }
  \label{fig:vary_csample_robot}
\end{subfigure}
\caption{Evolution of the cumulative rewards for the quantum and classical algorithms with time, for the tiger (\ref{fig:vary_csample_tiger}) and robot (figure \ref{fig:vary_csample_robot}) problems. The cost is the same for the classical and quantum algorithms, but the latter may obtain higher rewards, depending on the number of samples.}
\label{fig:vary_csample}
\end{figure}
 
 In the tiger problem, the quantum reward advantage peaks for fewer classical samples: the decision-making is especially affected by low sample numbers for this problem, making increases in effective sample numbers particularly critical. This highlights the potential impact of the quantum speed-up in cases where minimal resource usage is pivotal, such as when decisions must be fast or lightweight.  For the robot problem, the advantage also stalls after a certain number of samples, as the benefit of extra samples is saturated; however, this number is higher. 

  We now focus on the sample counts that maximize the reward difference ($5$ and $50$ for the tiger and robot problems respectively), to highlight the most promising scenarios for quantum advantage. The evolution of the classical and quantum cumulative rewards is shown independently in figure  \ref{fig:reward_evol}, and quantitative results and specifications for each test case are given in table \ref{tab:performance}. The quantum algorithm achieved a $94\%$ and $8\%$ improvement in the cumulative reward after $50\%$ time-steps for the tiger and robot problems respectively. The smaller gains for the latter problem are related to the previous observation that the benefit of extra samples hits a plateau.
  
\captionsetup[subfigure]{aboveskip=0pt}
\begin{figure}[!htb]
\centering
\captionsetup{aboveskip=0pt}
\begin{subfigure}[t]{.47\textwidth}
  \centering
  \includegraphics[width=\linewidth]{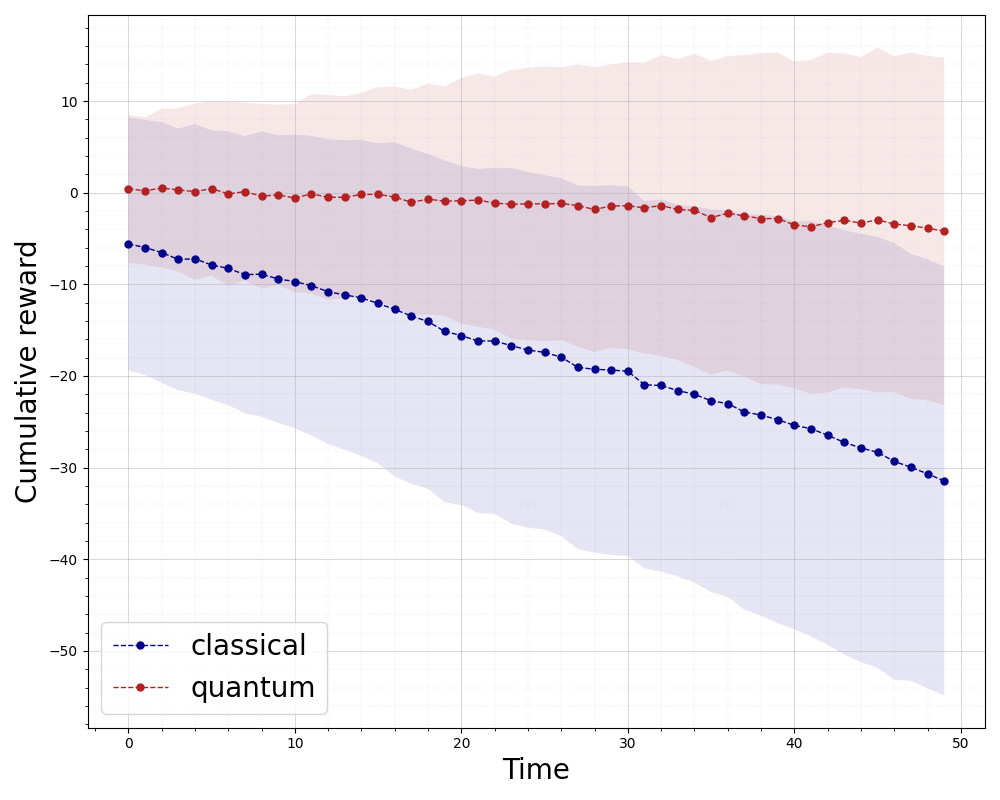}
  \caption{Tiger problem. }
  \label{fig:rewards_tiger}
\end{subfigure}\hfill
\begin{subfigure}[t]{.47\textwidth}
  \centering
  \includegraphics[width=\linewidth]{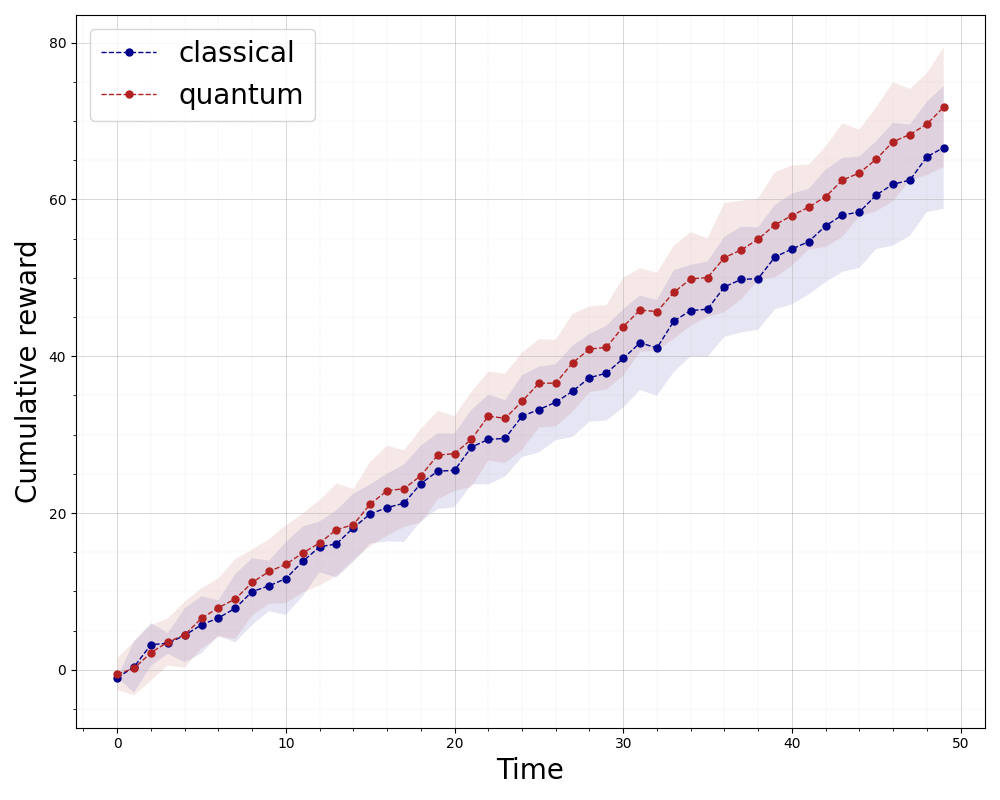}
  \caption{Robot problem.}
  \label{fig:rewards_robot}
\end{subfigure}
\caption{Evolution of the cumulative rewards for the quantum and classical algorithms with time, for the tiger (\ref{fig:rewards_tiger}) and robot (figure \ref{fig:rewards_robot}) problems. The cost is the same for the classical and quantum algorithms, but the quantum ones obtain higher rewards.}
\label{fig:reward_evol}
\end{figure}

 \begin{table}[htbp]
\centering
\begin{tabular}{lccccccccc}
\toprule
\multirow{2}{*}{\textbf{Case}} 
& \multirow{2}{*}{$|\mathcal{S}|$} 
& \multirow{2}{*}{$|\mathcal{A}|$} 
& \multirow{2}{*}{$|\Omega|$} 
& \multirow{2}{*}{$|\mathcal{R}|$} 
& \multirow{2}{*}{$H$} 
& \multirow{2}{*}{\textbf{Samples}} 
& \multirow{2}{*}{$\sfrac{c_l}{q_l}$} 
& \multicolumn{2}{c}{\textbf{Final reward}} \\
\cmidrule(lr){9-10}
& & & & & & & & Classical & Quantum \\
\midrule
\textbf{Tiger} & 2 & 3 & 2 & 3 & 2 & 5 & 1.5 & -31 & -4 \\
\midrule
\textbf{Robot} & 4 & 4 & 2 & 4 & 2 & 50 & 2.7 & 66 & 69 \\ 
\bottomrule
\end{tabular}
\caption{Specifications and final results for each of the test cases. The final reward was calculated after 50 time-steps. The quantum algorithm outperforms the classical one in both cases, with a greater margin in the tiger problem.}
\label{tab:performance}
\end{table}

To better interpret the results, further numerical results for these two problems are presented in figure \ref{fig:numresults}. These plots evidence the differences between the performances of the quantum and classical agents for each of the problems. Figure \ref{fig:reward} shows how the the cumulative reward difference between quantum and classical algorithms evolves over time, illustrating how the quantum-enhanced algorithm improves performance without increasing cost - which is again fixed. It is clear that the magnitude of this improvement is problem-dependent. While the quantum case presents a faster-than-classical increase in the rewards for both problems, the difference is especially marked for the tiger problem, in agreement with figure \ref{fig:reward_evol}. 

\captionsetup[subfigure]{aboveskip=0pt}
\begin{figure}[!htb]
\centering
\captionsetup{aboveskip=0pt}
\begin{subfigure}[t]{.47\textwidth}
  \centering
  \includegraphics[width=\linewidth]{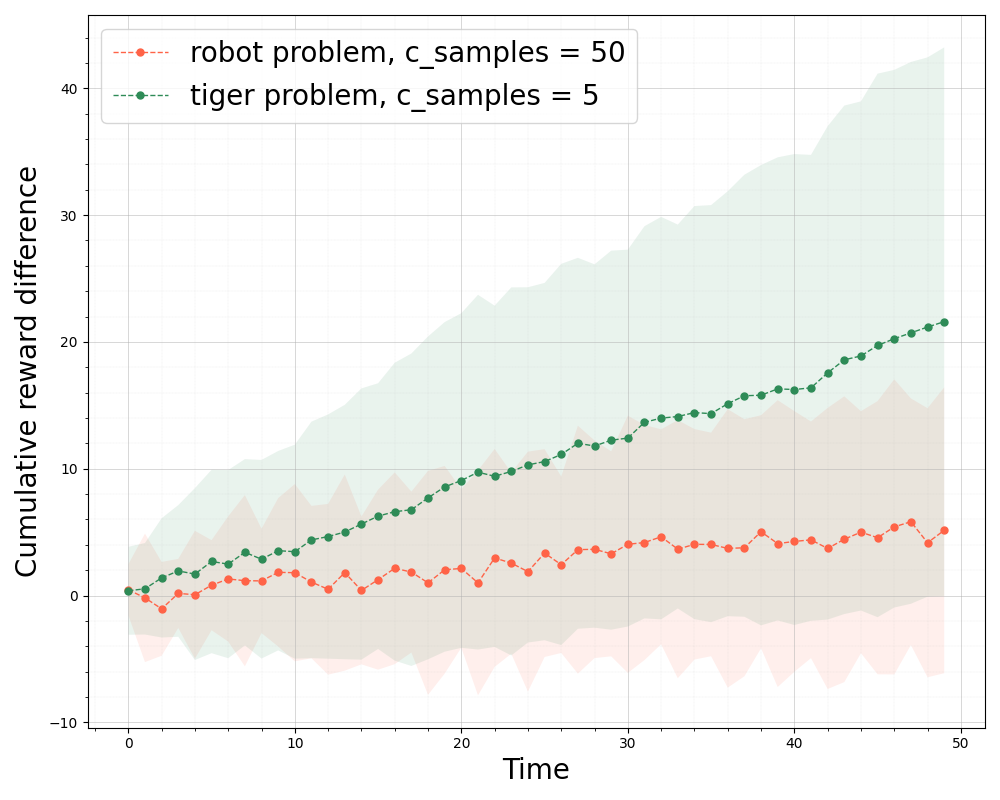}
  \caption{Evolution of the cumulative reward difference between the quantum and classical cases with time. The cost is the same for the classical and quantum algorithms. The quantum approaches become increasingly advantageous with the time. }
  \label{fig:reward}
\end{subfigure}\hfill
\begin{subfigure}[t]{.47\textwidth}
  \centering
  \includegraphics[width=\linewidth]{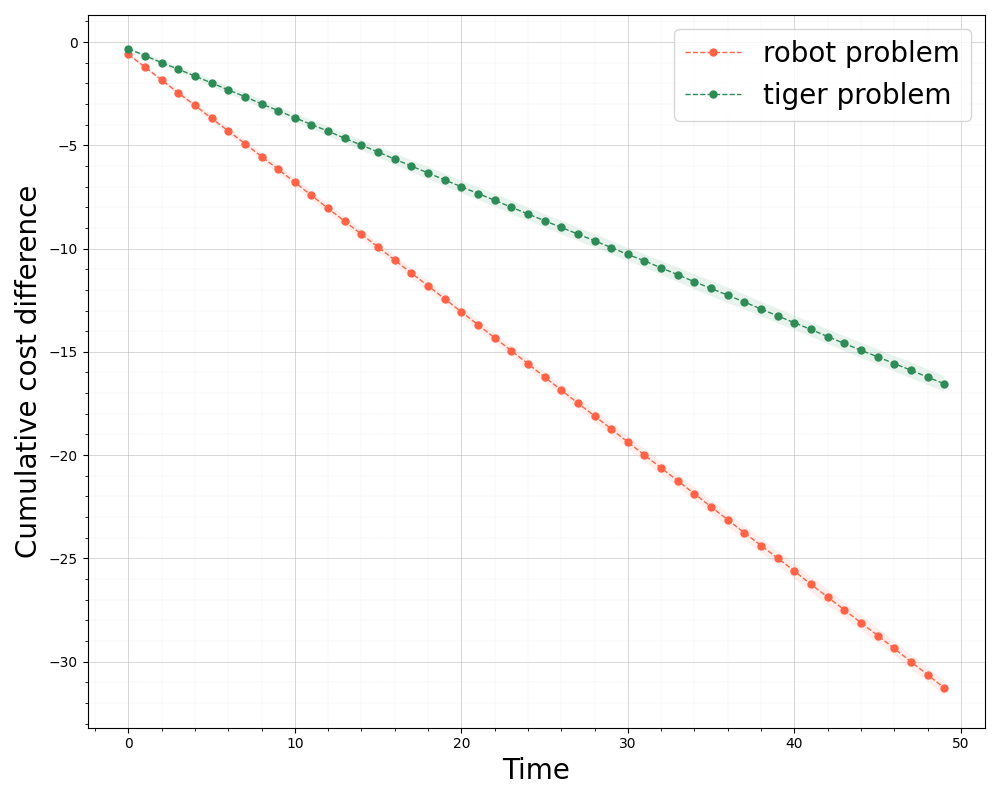}
  \caption{Evolution of the cumulative difference in cost between the quantum and classical cases with time. The expected performance is the same for the classical and quantum algorithms, but the cost is lower for the latter, and the difference increases with the time.}
  \label{fig:cost}
\end{subfigure}
\caption{Numerical assessment of the gains obtained by the quantum algorithm. It improves decision-making (figure \ref{fig:reward}, or equivalently, reduces the computational cost (figure \ref{fig:cost}). Mean results over $100$ runs are presented, with shaded areas marking one standard deviation. The look-ahead horizon is $H=2$ in both cases, and $250$ samples are used to approximate the reward distribution.}
\label{fig:numresults}
\end{figure}

Figure \ref{fig:cost} shows the cost difference over time. In this case, we disregard the number of samples, as it is a multiplicative factor; different values would hinder the visual analysis. This plot showcases how the quantum algorithm spends less resources for the same returns - the number of effective samples, and thus the quality of the decision making, being held constant. This quantity has a smoother evolution, owing to the fact that we are considering expected sample counts rather than effective outcomes. While the latter is related to the former, it is subject to noise, which is injected by the sampling process. In this case, neither of the axes is directly related to performance; as per figure \ref{fig:rewards_tiger}, an increase in time and cost may mean a decrease in rewards. Here, we are merely considering how the cost varies with the number of algorithm steps. 

For this metric, the robot problem shows a greater impact of the quantum routine, unlike before. This means that the $\frac{c_l}{q_l}$ ratio is larger. Clearly, this does not directly manifest as improved returns in practice. The reason is that this ratio quantifies the improvement in sampling efficiency, but not how this sampling affects performance. Some problems are more sensitive to sampling variations, similarly to how some numbers of samples show greater quantum advantage. In some cases, a small increase in the number of samples can bring significant returns, while others may require a larger difference to show a comparable effect. 

Figure \ref{fig:cost_vs_reward} shows how the average cost varies with the achieved rewards. This processing effectively injects noise back into figure \ref{fig:cost}, by considering the effective outcome rather than the expected performance.  

\captionsetup[subfigure]{aboveskip=0pt}
\begin{figure}[!htb]
\centering
\captionsetup{aboveskip=0pt}
\begin{subfigure}[t]{.47\textwidth}
  \centering
  \includegraphics[width=\linewidth]{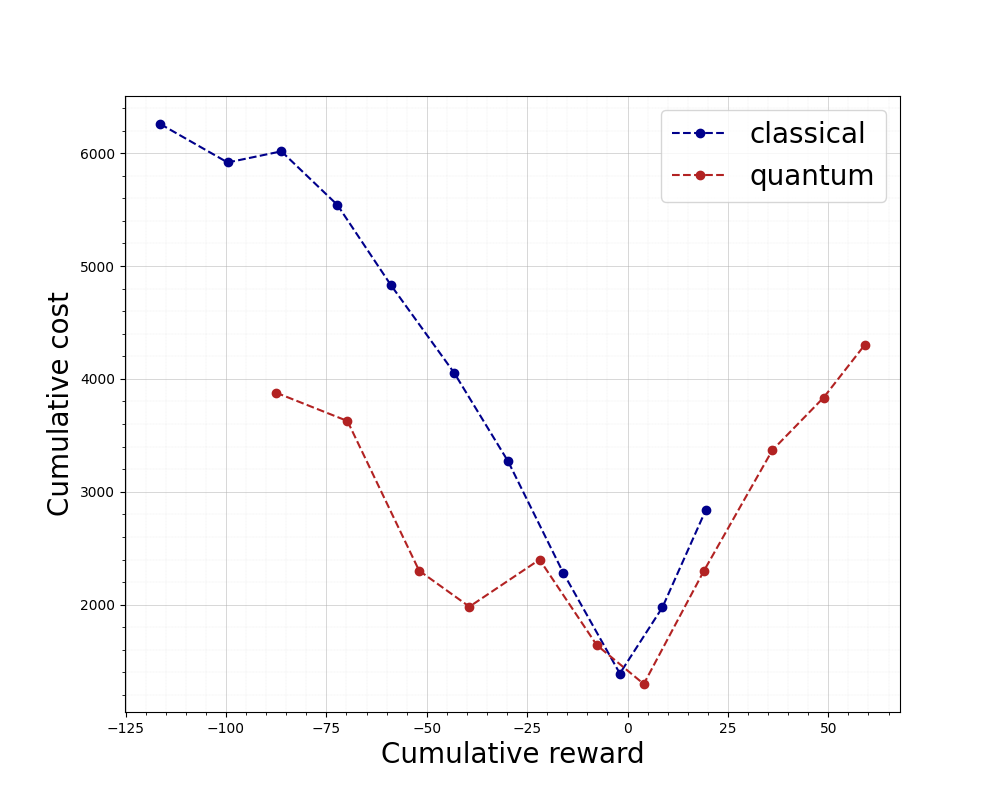}
  \caption{Tiger problem.}
  \label{fig:cost_vs_reward_tiger}
\end{subfigure}\hfill
\begin{subfigure}[t]{.47\textwidth}
  \centering
  \includegraphics[width=\linewidth]{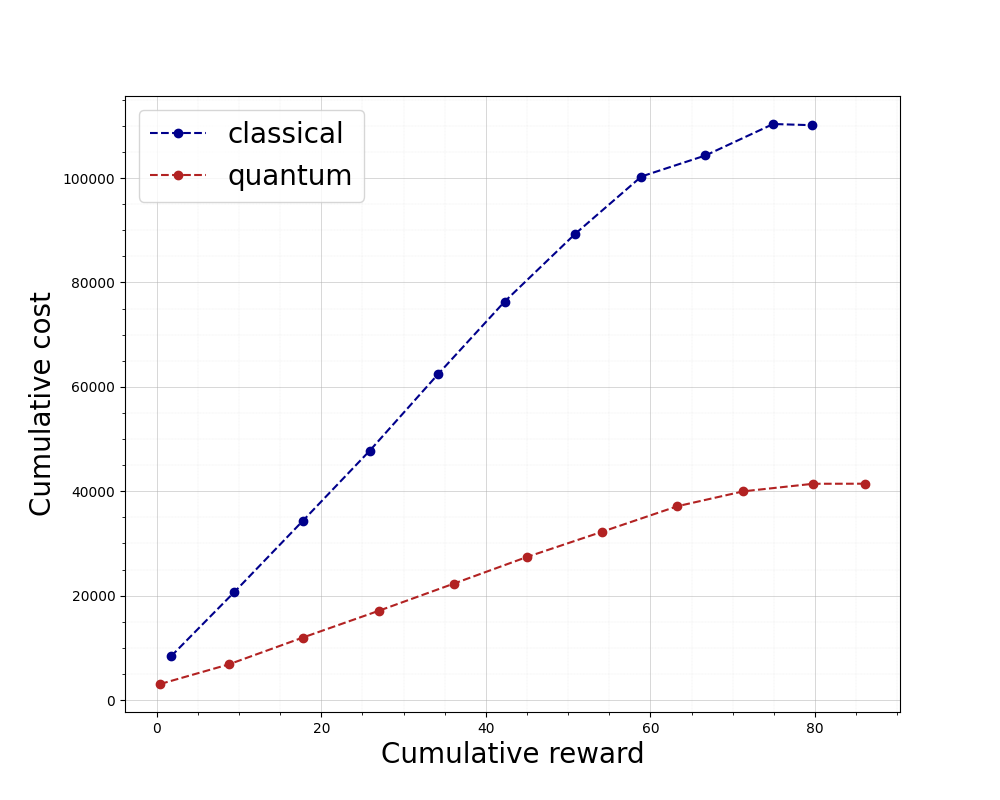}
  \caption{Robot problem.}
  \label{fig:cost_vs_reward_robot}
\end{subfigure}
\caption{Relationship between the achieved reward and the cost (number of  queries) for the classical and quantum algorithms, for the tiger (figure \ref{fig:cost_vs_reward_tiger}) and robot (\ref{fig:cost_vs_reward_robot}) problems. Results were processed by binning and averaging over $100$ runs..}
\label{fig:cost_vs_reward}
\end{figure}

In this case, taking averages is not straightforward, as the rewards take a very large range of possible values,  each of which does not necessarily have a substantial number of data points. As such, we instead bin the data, and average each coordinate independently within bins. This is why the ranges of x-coordinates are different for each algorithm. In the limit of infinitesimal bins, this reduces to taking averages. We empirically found that bins of non-null width still produce sensible results for the cases tested herein. 

Figures \ref{fig:cost_vs_reward_tiger} and \ref{fig:cost_vs_reward_robot} further elucidate the performance differences between the tiger and robot examples. In the tiger problem, there is a larger spread of rewards, namely over negative values. The classical algorithm has runs with more negative cumulative rewards, which are associated with large costs. This is not unexpected, as large absolute values of reward are associated with larger costs regardless of sign: accumulating them through time steps requires more updates.  

Furthermore, the classical algorithm never achieves rewards as good as the best performing quantum runs, even when expending more resources. This is because the agent's poor performance tends to contribute negatively towards the reward as the process advances; positive cumulative rewards are statistical deviations from the expected behavior, and have low costs because those deviations are more likely to occur early.

In the robot problem, figure \ref{fig:cost_vs_reward_robot}, both quantum and classical algorithms display more regular behavior. Higher costs are associated with higher cumulative rewards within each strategy, but the quantum algorithm spends less resources even while performing better. These results, along with those of figure \ref{fig:reward}, suggest that the greater advantage in the tiger problem is due to especially poor performance of the classical algorithm. The classical results are shifted towards more negative rewards, due to the small number of samples. The robot problem shows a smaller discrepancy in final rewards, but a larger discrepancy in cost. This occurs because the number of samples, which is larger for this case, is a multiplicative factor in the cost difference.  

Another salient point for this discussion are cost considerations beyond the concept of a query. In section \ref{sec:complexity-analysis}, we pointed out that the quantum algorithm is exponentially slower in the maximum number of parents of any random variable $M$. This cost is absorbed into each query, since the operator encoding the Bayesian network, which has an exponential dependence on $M$, is necessary to produce the quantum samples. Thus, when zooming in further than queries and looking at elementary operations, this cost may cancel out the quantum advantage for large enough $M$. This consideration harmonizes the analytical and numerical discussions. We note that the complexity analysis considers the worst case scenario; in practice, the full structure of the tree, and not just the largest among all parent counts, will determine this cost. 
\section{Conclusion}
\label{sec:conclusion}

Reinforcement learning underpins numerous achievements in AI, but its application to partially observable environments remains a major challenge. Due to the added complexity brought by the extra probabilistic layer, such applications often struggle with computational bottlenecks. Our work demonstrates how this problem can benefit from quantum resources, showcasing the sources and limitations of the quantum speedup. Specifically, we introduce a hybrid quantum-classical reinforcement learning algorithm that leverages quantum rejection sampling for belief updates within sparse Bayesian networks, obtaining a quantum speed-up for this key sub-routine.

This speed-up can range between null and quadratic, depending on the problem - more specifically, on the structure of the Bayesian network used to model the Markov decision process (largest in-degree among all nodes). The quantum rejection sampling sub-routine can produce samples at lower cost than its classical counterpart, with a complexity reduction that depends on the problem setting. We demonstrate this through a rigorous, oracle-free computational complexity analysis, producing bounds on the classical and hybrid algorithms. We stress that our algorithm is advantageous only when the decision making occurs in a partially observable environment which can be modeled as a sparse Bayesian network. 

To facilitate adoption, we provide all code necessary to reproduce and extend our results, as well as tutorials. In practice, the gains of the quantum approach also depend on the impact of extra samples on performance. Our algorithm is likely to stand out in cases where decisions must be made with limited resources,  the performance is most sensitive to variations in the number of samples used for the belief update, and/or the magnitude of the quantum speed-up is highest. We demonstrate this via numerical simulations for different problems, where the quantum resources improved performance to varying degrees. Our simulations illustrate the practical benefits of the quantum algorithm, which stand out in sample-constrained regimes and for environments where belief accuracy significantly affects the quality of the decisions. 

An interesting direction for future work is the analysis of model-free approaches to reinforcement learning, where the environment description must itself be learned by observation. Such an approach could use learning algorithms to construct a dynamic decision network based on data, in which case the algorithm we presented could be utilized. 

Another possible line of research is the development of a quantum algorithm for choosing an action based on the look ahead tree. This could be done by encoding paths as quantum states, and expected rewards as their amplitudes. This idea is similar to \cite{de2021quantum}, where a quantum unitary operator that weights probability amplitudes according to their utility is developed. However, this operator assumes a static utility, and would here need to be generalized for the case of sequential decision making where the utility is a sequence of rewards. 

Finally, a natural next step is testing our algorithm in real world scenarios, rather than the toy problems we studied. While these serve as illustrative test cases and a starting point for discussion, it would be interesting to assess the practical benefits for realistic applications. Such demonstrations will become both more relevant and more feasible as quantum devices are scaled up.

\bibliography{QBRL.bib}

\appendix 

\renewcommand{\thesection}{\Alph{section}}

\section{Quantum belief update}
\label{app:q_rej_sampling}

Here we detail how to use quantum resources for the quantum belief update, getting the quadratic quantum speed-up for rejection sampling. 

\subsection{Quantum Bayesian networks}

Encoding a BN representing a probability distribution $P \left( X_1, X_2, \dots, X_N \right)$ in a quantum circuit means being able to construct a quantum state $\ket{\Psi} = \mathcal{B} \ket{0}^{\otimes N}$ whose qubits represent the RVs. 

\begin{equation}
    \left| \braket{x_1 x_2 \dots x_N}{\Psi} \right|^2 = P \left( X_1 = x_1, X_2 = x_2, \dots, X_N = x_N \right)
    \label{eq:qbn-encoding}
\end{equation}

Once the circuit is properly encoded, sampling from the BN joint distribution in the quantum setting is equivalent to performing measurements in the qubits of the circuit. For simplicity, we assume binary RVs. For a description of this process for arbitrary discrete RVs, refer to \cite{borujeni2021quantum}. 

Consider a quantum circuit with $N$ qubits $q_i$, each representing a binary RV $X_i$. Each entry of the CPT encodes a probability $P \left( X_i = x_i \middle| \text{parents} \left( X_i \right) = x_p \right)$ of the occurrence of a value $x_i$ in a RV $X_i$, given a value $x_p$ for the RV's parents, $\text{parents}\left( X_i \right)$. In the quantum setting, for each RV $X_i$, a controlled $R_Y(\theta)$ rotation gate is applied for each possible value $x_p$ its parent RVs can take, where the control qubits of this gate are the qubits $q_i$ representing RVs $\text{parents} \left( X_i \right)$. This ensures that any RVs that share an edge in the BN are entangled in the quantum circuit, expressing the information they share.

\begin{figure}[ht]
\centering
\resizebox{0.7\linewidth}{!}{\begin{tikzpicture}[
  node distance=0.5cm and 0cm,
  mynode/.style={draw,ellipse,text width=2cm,align=center}
]
\node[mynode] (sp) {Sprinkler \\ ($S$)};
\node[mynode,below right=of sp] (gw) {Grass wet \\ ($W$)};
\node[mynode,above right=of gw] (ra) {Rain \\ ($R$)};
\path (ra) edge[-latex] (sp)
(sp) edge[-latex] (gw) 
(gw) edge[latex-] (ra);
\node[above left=of sp]
{
\begin{tabular}{|c|c|c|}
\hline
$R$ & $S$ & $P \left( S \middle| R \right)$ \\ \hline
False & False & $0.6$ \\ \hline
False & True & $0.4$ \\ \hline
True & False & $0.99$ \\ \hline
True & True & $0.01$ \\ \hline
\end{tabular}
};
\node[above right=of ra]
{
\begin{tabular}{|c|c|}
\hline
$R$ & $P \left( R \right)$ \\ \hline
False & $0.9$ \\ \hline
True & $0.1$ \\ \hline
\end{tabular}
};
\node[below=of gw]
{
\begin{tabular}{|c|c|c|c|}
\hline
$R$ & $S$ & $W$ & $P \left( W \middle| S, R \right)$ \\ \hline
False & False & False & $0.6$ \\ \hline
False & False & True & $0.4$ \\ \hline
False & True & False & $0.1$ \\ \hline
False & True & True & $0.9$ \\ \hline
True & False & False & $0.01$ \\ \hline
True & False & True & $0.99$ \\ \hline
True & True & False & $0.01$ \\ \hline
True & True & True & $0.99$ \\ \hline
\end{tabular}
};
\end{tikzpicture}}
\caption{A BN over three binary random variables}
\label{fig:Bayesian-Network}
\end{figure}

Consider the BN from Figure \ref{fig:Bayesian-Network} as an example. To encode it in a quantum circuit, the following gates have to be applied (see Figure \ref{fig:qbn-example}):
\begin{itemize}
    \item For the variable Rain $R$, with no parents, an uncontrolled rotation gate is applied.
    
    \item For Sprinkler $S$, with $R$ as a parent, two rotation gates are applied, controlled by the qubit representing $R$. The first one represents the case when $\ket{R} = \ket{0}$ \footnote{For $\ket{R} = \ket{0}$, an $X$ gate must be applied \textit{before} and \textit{after} the control operation on qubit $\ket{R}$. This is because a controlled gate, by definition, only changes the target state when its control qubit is in state $\ket{1}$, but in this case, the opposite effect is wanted. Therefore, the first $X$ gate flips state $\ket{R}$ before the control operation, and the second $X$ gate returns the $\ket{R}$ state to the way it was before the first $X$ gate was applied, leaving it unchanged.}, and the second one when $\ket{R} = \ket{1}$.
    
    \item For WetGrass $W$, which has two parents, four rotation gates, controlled by both $\ket{R}$ and $\ket{S}$, are needed. The first one for when $\ket{RS} = \ket{00}$, the second for when $\ket{RS} = \ket{01}$, the third for $\ket{RS} = \ket{10}$ and a final one for $\ket{RS} = \ket{11}$.
\end{itemize}

\usetikzlibrary{quotes}
\begin{figure}[ht]
    \centering
    \resizebox{\linewidth}{!}{
    \begin{tikzpicture}
      \begin{yquant}
        qubit {$R: \; \ket{q_0}$} q0;
        qubit {$S: \; \ket{q_1}$} q1;
        qubit {$W: \; \ket{q_2}$} q2;
        
        box {$R_Y(\theta_1)$} q0;
        
        x q0;
        box {$R_Y(\theta_2)$} q1 | q0;
        x q0;
       
       box {$R_Y(\theta_3)$} q1 | q0;
        
       x q0;
       x q1;
       box {$R_Y(\theta_4)$} q2 | q0, q1;
       x q0;
       
       box {$R_Y(\theta_5)$} q2 | q0, q1;
       x q1;
       
       x q0;
       box {$R_Y(\theta_6)$} q2 | q0, q1;
       x q0;
       
       box {$R_Y(\theta_7)$} q2 | q0, q1;
      \end{yquant}
    \end{tikzpicture}}
    \caption{Quantum circuit encoding the BN of Figure \ref{fig:Bayesian-Network}.}
    \label{fig:qbn-example}
\end{figure}
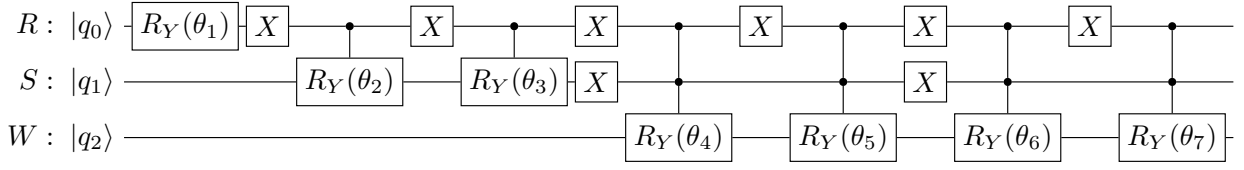 

Finally, the only missing step for the encoding is knowing the angle for each rotation gate. Consider the controlled rotation gate for a binary RV $X_i$ with value $x_p$ for its parents. The angle of rotation $\theta$ is given by:
\begin{equation}
    \theta = 2 \arctan \left( \sqrt{ \frac{ P\left( X_i = 1 \middle| \text{parents}\left(X_i\right) = x_p \right) }{ P\left( X_i = 0 \middle| \text{parents}\left(X_i\right) = x_p \right) } } \right)
    \tag{\ref{eq:qbn-rotation-angle}}
\end{equation}

As \cite{low2014} details, this encoding can be done with a complexity $\mathcal{O} \left( N 2^M \right)$, where $N$ represents the number of RVs in the BN and $M$ represents the greatest number of parents of any RV. Given that this encoding allows the sampling from the joint probability distribution, it can be directly compared to direct sampling, which has a complexity of $\mathcal{O} \left( N M \right)$. The quantum encoding is therefore generally slower  than direct sampling, due to the need to include at most $M$ qubit uniformly controlled rotations, each resulting in a decomposition of $\mathcal{O} \left( 2^M \right)$ elementary gates \cite{2005}.

\subsection{The evidence phase flip operator}
\label{subsec:QC-bn-oracle}

Having explained the state preparation circuit $\mathcal{B}$, the missing part of the algorithm is the oracle operator, which, in this particular example, is the evidence phase flip operator $\mathcal{S}_e$. In quantum Bayesian inference, the good states are states where the evidence qubits $\mathcal{E}$ match the evidence value $e$ of the conditional distribution $P \left( \mathcal{Q} \middle| \mathcal{E} = e \right)$ that is to be inferred.

In this sense, the amplitude amplification technique turns the joint probability distribution of the BN into the aforementioned conditional distribution by decreasing the amplitudes of states with incorrect evidence $\mathcal{E} \neq e$ and amplifying the amplitudes of states with the right evidence $\mathcal{E} = e$. Formally, the resulting state $\ket{\Psi} = \mathcal{B} \ket{0}^{\otimes N}$ of the BN quantum encoding can be expressed as:
\begin{equation}
    \ket{\Psi} = \sqrt{ P (e) } \ket{\mathcal{Q}, e} + \sqrt{1 - P(e)} \ket{\mathcal{Q}, \overline{e}}
    \label{eq:qbn-quantum-state}
\end{equation}

\noindent where $\ket{\mathcal{Q}, e}$ represents the quantum states with the correct evidence $e$, $\ket{\mathcal{Q}, \overline{e}}$ the quantum states with the wrong evidence and $P (e)$ the probability of occurrence of evidence $e$. Therefore, the effect of $\mathcal{S}_e$ on this quantum state is given by:
\begin{equation}
    \mathcal{S}_e \ket{\Psi} = - \sqrt{ P (e) } \ket{\mathcal{Q}, e} + \sqrt{1 - P(e)} \ket{\mathcal{Q}, \overline{e}}
    \label{eq:qbn-evidence-flip}
\end{equation}

Let $e = e_1 e_2 \dots e_k$ represent the evidence bit string, with $e_i \in \mathbb{B}$. Constructing this operator requires the help of two other operators:
\begin{itemize}
    \item The operator $\mathcal{X}_i = \mathbb{I} \otimes \dots \otimes X \otimes \dots \otimes \mathbb{I}$ (where $\mathbb I$ is the identity), which flips the $i$-th evidence qubit by applying an $X$ gate to it and leaving every other qubit unchanged,
    \item The controlled phase operator $P_{1 \dots k}$, which flips the phase of a quantum state if all evidence qubits are in state $\ket{1}$.
\end{itemize}

With the above two operators defined, the evidence phase flip operator $\mathcal{S}_e$ can be defined as follows:
\begin{equation}
    \mathcal{S}_e = \mathcal{F} P_{1 \dots k} \mathcal{F}, \; \mathcal{F} = \prod_{i = 1}^{k} \mathcal{X}_i^{1 - e_i}
    \label{eq:qbn-evidence-flip-operator}
\end{equation}

Consider once more the example of Figure \ref{fig:Bayesian-Network} with evidence qubits $R$ and $W$ and values $1$ and $0$, respectively, to infer the probability distribution $P\left( S \middle| R = 1, W = 0 \right)$. The evidence phase flip operator for this specific case is represented in Figure \ref{fig:qbn-phase-flip-operator}

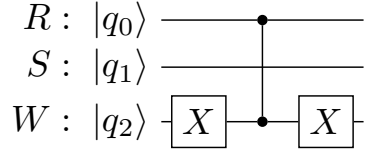
\begin{figure}[ht]
    \centering
    \resizebox{0.3 \linewidth}{!}{
    \begin{tikzpicture}
      \begin{yquant}
        qubit {$R: \; \ket{q_0}$} q0;
        qubit {$S: \; \ket{q_1}$} q1;
        qubit {$W: \; \ket{q_2}$} q2;
        
        x q2;
        zz (q0, q2);
        x q2;
      \end{yquant}
    \end{tikzpicture}}
    \caption{Quantum circuit for the evidence phase flip operator $\mathcal{S}_e$ for the BN of Figure \ref{fig:Bayesian-Network} with evidences $R=1$ and $W=0$.}
    \label{fig:qbn-phase-flip-operator}
\end{figure} 

Notice from Equation \eqref{eq:qbn-evidence-flip-operator} that the quantum operator $\mathcal{F}$ only flips the evidence qubits with evidence $e_i = 0$, otherwise it leaves them unchanged. This allows for the operator $\mathcal{S}_e$ to behave as expressed by Equation \eqref{eq:qbn-evidence-flip}. To show this, let $\ket{\mathcal{Q}, \tilde{e}} = \mathcal{F} \ket{\mathcal{Q}, \overline{e}}$, where $\ket{\tilde{e}} \neq \ket{1}^{\otimes k}$, represent the application of the quantum operator $\mathcal{F}$ to a superposition of quantum states whose evidence does not match $e$. Then: 
\begin{align}
    \begin{split}
        \mathcal{S}_e \ket{\Psi} &= \mathcal{F} P_{1 \dots k} \mathcal{F} \left( \sqrt{ P (e) } \ket{\mathcal{Q}, e} + \sqrt{1 - P(e)} \ket{\mathcal{Q}, \overline{e}} \right) \\
        &= \mathcal{F} P_{1 \dots k} \left( \sqrt{ P (e) } \ket{\mathcal{Q}, 1} + \sqrt{1 - P(e)} \ket{\mathcal{Q}, \tilde{e}} \right) \\
        &= \mathcal{F} \left( - \sqrt{ P (e) } \ket{\mathcal{Q} 1} + \sqrt{1 - P(e)} \ket{\mathcal{Q}, \tilde{e}} \right) \\
        &= - \sqrt{ P (e) } \ket{\mathcal{Q}, e} + \sqrt{1 - P(e)} \ket{\mathcal{Q}, \overline{e}}
    \end{split}
    \label{eq:qbn-evidence-flip-verification}
\end{align}

This operator, together with the BN encoding, leads to an amplitude amplification operator

\begin{equation}
    G = \mathcal{B} \mathcal{S}_0 \mathcal{B}^{\dagger} \mathcal{S}_e
\end{equation}

\noindent making the quantum inference in BNs quadratically faster than in the classical case, provided that the BN has a small maximum number of parents $M$:

\begin{table}[ht]
\centering
{\tabulinesep=1.2mm
\begin{tabu}{l|c|c|}
\cline{2-3}
 & Classical & Quantum \\ \hline
\multicolumn{1}{|l|}{Complexity} & $\mathcal{O} \left( N M P(e)^{-1} \right)$ & $\mathcal{O} \left( N 2^M P(e)^{- \frac{1}{2}} \right)$ \\ \hline
\end{tabu}}
\caption{Comparison between classical and quantum complexity for rejection sampling inference in BNs, taken from \cite{low2014}.}
\label{tab:inference-comparison}
\end{table}

Contrary to many other speedups in quantum algorithms, this particular one does not require any oracle queries. As the authors state, the complexity of this algorithm is completely determined by counting primitive quantum operations. Another important mention is that this algorithm can also be applied to DBNs, as \cite{borujeni2021modeling} shows, by also encoding observation nodes in the quantum circuit and considering a finite number of time-slices of the network at a time, performing inference over those finite slices. The only difference concerning performing inference in a regular BN is their relative size. Moreover, the same concept can be applied to DDNs by also encoding action and reward nodes into qubits, allowing this inference to be performed in the context of solving RL problems.

\subsection{Quantum belief update}

In this subsection, we break down the Bayesian network encoding operator $\mathcal{B}$ introduced in section \ref{subsec:qbn} for the particular case of a dynamic decision network encoding a POMDP.

Classically, a belief update is is performed according to:

\begin{equation}
    \tau_c \left( b, a, o \right) (s^{\prime}) \equiv P \left( s^{\prime} \middle| o, a, b \right) 
     \propto P \left( o \middle| s, a \right) \sum_{s \in \mathcal{S}} P \left( s^{\prime} \middle| a, s \right) b(s)
    \label{eq:qbu-classical}
\end{equation}

A  belief update can be performed in a quantum setting by encoding the DDN in Figure \ref{fig:qbn-belief-update} with the right action $A_t = a$, a superposition on RV $S_t$ according to its belief-state $b$ and performing amplitude amplification on observation $O_{t+1} = o$. If $\tau_q(b, a, o)$ represents the quantum belief update, then the following reasoning seeks to show that the following equation holds:
\begin{equation}
    \tau_q(b, a, o)(s^{\prime}) = \tau_c(b, a, o)(s^{\prime}), \; \forall s^{\prime} \in \mathcal{S}
    \label{eq:belief-update-equality}
\end{equation}

To prove this claim, first consider the quantum circuit corresponding to a belief update of Figure \ref{fig:qbn-belief-update}. The various $\mathcal{U}$ gates perform the encoding of the CPTs of the DDN, while the operator $G^k(o)$ performs the amplitude amplification of the quantum states with the observation $o$. To formally define these operators, a number of definitions are required.

To encode the belief state of the DDN, the operator $\mathcal{U}(b)$ is applied as in Definition \ref{def:quantum-belief-encoding}:
\begin{definition}
	The quantum operator $\mathcal{U}(b)$ encodes belief state $b$ into the qubits of $S_t$:
    \begin{equation*}
          \mathcal{U}(b) \ket{0}^{\otimes k_1} = \sum_{s \in \mathcal{S}} \sqrt{b(s)} \ket{s}
    \end{equation*}
	\label{def:quantum-belief-encoding}
\end{definition}

The quantum operator $\mathcal{U}(a)$ is used to encode action $a$ into the quantum circuit of the DDN, as in Definition \ref{def:quantum-action-encoding}:
\begin{definition}
	The quantum operator $\mathcal{U}(a)$ encodes action $a$ into qubits representing $A_t$, such that:
    \begin{equation*}
    	\mathcal{U}(a) \ket{0}^{\otimes k_2} = \ket{a}
    \end{equation*}
	\label{def:quantum-action-encoding}
\end{definition}

To encode the transition dynamics of the DDN, a controlled operator, $\mathcal{U}_1$ is applied to the quantum circuit, as in Definition \ref{def:quantum-transition-encoding}:
\begin{definition}
	The operator $\mathcal{U}_1$ is a controlled quantum operator that encodes the transition dynamics $P\left( S_{t+1} \middle| s_t, a_t \right)$ on the qubits of RV $S_{t+1}$, based on the values $s_t$ and $a_t$ of RVs $S_t$ and $A_t$, respectively:
	\begin{equation*}
		\mathcal{U}_1 \ket{s a} \ket{0}^{\otimes k_3} = \sum_{s^{\prime} \in \mathcal{S}} \sqrt{ P \left( s^{\prime} \middle| s, a \right) } \ket{s a s^{\prime}}
	\end{equation*}
	\label{def:quantum-transition-encoding}
\end{definition}

For the encoding of the sensor model of the DDN, the quantum operator $\mathcal{U}_2$ is used as in Definition \ref{def:quantum-sensor-encoding}:
\begin{definition}
	$\mathcal{U}_2$ is a controlled quantum operator encoding the sensor model $P \left( O_{t+1} \middle| s_{t+1}, a_t \right)$ on the qubits of RV $O_{t+1}$ given the values $s_{t+1}$ and $a_t$ of RVs $S_{t+1}$ and $A_t$, respectively:
	\begin{equation*}
		\mathcal{U}_2 \ket{a s^{\prime}} \ket{0}^{\otimes k_4} = \sum_{o \in \Omega} \sqrt{ P \left( o \middle| s^{\prime}, a \right) } \ket{a s^{\prime} o}
	\end{equation*}
	\label{def:quantum-sensor-encoding}
\end{definition}

The final encoding operator is $\mathcal{U}_3$, which encodes the reward dynamics of the DDN into the quantum circuit, as expressed by Definition \ref{def:quantum-reward-encoding}:
\begin{definition}
	The quantum operator $\mathcal{U}_3$ encodes the reward dynamics $P \left( R_{t+1} \middle| s_t, a_t \right)$ into the qubits of RV $R_{t+1}$ given the values $s_{t+1}$ and $a_t$ of RVs $S_t$ and $A_t$, respectively:
	\begin{equation*}
		\mathcal{U}_3 \ket{s a} \ket{0}^{\otimes k_5} = \sum_{r \in \mathcal{R}} \sqrt{ P \left( r \middle| s, a \right) } \ket{s a r}
	\end{equation*}
	\label{def:quantum-reward-encoding}
\end{definition}

Applying all these encoding operators, in the order they were presented in the definitions above, is the same as encoding the DDN. They can also be grouped into another quantum operator $\mathcal{B}$, the DDN encoding operator, as in Lemma \ref{lemma:ddn-encoding-operator}. This joint operator can be implemented by the encoding scheme discussed in section \ref{subsec:qbn}; the rotation operators are responsible for encoding the CPTs of a belief update DDN into the quantum circuit. 

\begin{lemma}
	The DDN encoding operator $\mathcal{B}$ is composed of the sequential application of quantum operators $\mathcal{U}(b)$, $\mathcal{U}(b)$, $\mathcal{U}_1$, $\mathcal{U}_2$ and $\mathcal{U}_3$ fom Definitions \ref{def:quantum-belief-encoding} up to \ref{def:quantum-reward-encoding}, in this order. Its application results in the following quantum state:
	\begin{equation*}
		\mathcal{B} \ket{0}^{\otimes N} = \sum_{s \in \mathcal{S}} \sqrt{ b(s) } \sum_{s^{\prime} \in \mathcal{S}} \sqrt{ P(s^{\prime}|s, a) } \sum_{o \in \Omega} \sqrt{ P(o|s^{\prime}, a) } \sum_{r \in \mathcal{R}} \sqrt{ P(r|s^{\prime}, a)} \ket{sas^{\prime} o r}
	\end{equation*}
	\label{lemma:ddn-encoding-operator}
\end{lemma}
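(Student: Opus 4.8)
The plan is to prove the lemma by a direct, step-by-step propagation of the joint state through the five registers $S_t, A_t, S_{t+1}, O_{t+1}, R_{t+1}$ (of $k_1,\dots,k_5$ qubits), applying the operators in the order prescribed by the statement and, at each stage, invoking Definitions \ref{def:quantum-belief-encoding}--\ref{def:quantum-reward-encoding}. The only ingredient beyond bookkeeping is linearity: once a register carries a superposition, each subsequent controlled operator is understood to act branch-by-branch on its computational-basis components, and as the identity on the registers outside its declared support.

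First I would write the input as $\ket{0}^{\otimes N} = \ket{0}^{\otimes k_1}\otimes\ket{0}^{\otimes k_2}\otimes\ket{0}^{\otimes k_3}\otimes\ket{0}^{\otimes k_4}\otimes\ket{0}^{\otimes k_5}$. Applying $\mathcal{U}(b)$ to the first register (Definition \ref{def:quantum-belief-encoding}) and $\mathcal{U}(a)$ to the second (Definition \ref{def:quantum-action-encoding})---these commute, acting on disjoint registers, and leave the last three untouched---yields $\sum_{s\in\mathcal{S}}\sqrt{b(s)}\,\ket{s}\ket{a}\ket{0}^{\otimes k_3}\ket{0}^{\otimes k_4}\ket{0}^{\otimes k_5}$. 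Next I would apply $\mathcal{U}_1$; since $S_t$ is now in a superposition, by linearity it acts on each branch $\ket{s}\ket{a}\ket{0}^{\otimes k_3}$ as in Definition \ref{def:quantum-transition-encoding}, producing $\sum_{s'\in\mathcal{S}}\sqrt{P(s'|s,a)}\,\ket{s}\ket{a}\ket{s'}$, so the global state becomes $\sum_{s}\sqrt{b(s)}\sum_{s'}\sqrt{P(s'|s,a)}\,\ket{s}\ket{a}\ket{s'}\ket{0}^{\otimes k_4}\ket{0}^{\otimes k_5}$. Applying $\mathcal{U}_2$ (controlled on $A_t,S_{t+1}$, acting on $O_{t+1}$; Definition \ref{def:quantum-sensor-encoding}) branch-by-branch over the $s,s'$ sum inserts the factor $\sum_{o\in\Omega}\sqrt{P(o|s',a)}$, and $\mathcal{U}_3$ (controlled on $S_t,A_t$, acting on $R_{t+1}$; Definition \ref{def:quantum-reward-encoding}) inserts $\sum_{r\in\mathcal{R}}\sqrt{P(r|s',a)}$; here I would note that $\mathcal{U}_3$ reads the still-unchanged $S_t$ and $A_t$ registers, so it commutes with the $\mathcal{U}_2$ layer. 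Collecting the nested sums gives exactly the claimed expression for $\mathcal{B}\ket{0}^{\otimes N}$.

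The hard part is not analytical but organizational: one must check that the stated order of operators is a topological order of the DDN, so that whenever a controlled operator fires every control register already holds the correct basis state, and one must keep the control/target assignments aligned with the edges of Figure \ref{fig:ddn}---in particular that $\mathcal{U}_3$ is controlled on $S_t$ rather than $S_{t+1}$. I would also remark in passing that the state remains normalized throughout, since $\sum_{s'}P(s'|s,a)=1$, $\sum_o P(o|s',a)=1$ and $\sum_r P(r|s',a)=1$ for every conditioning assignment, which additionally confirms consistency of $\mathcal{B}$ with the QBN encoding identity \eqref{eq:qbn-encoding}.
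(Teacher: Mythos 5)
Your proposal is correct and takes essentially the same route as the paper's own proof: a sequential application of $\mathcal{U}(b)\otimes\mathcal{U}(a)$, then $\mathcal{U}_1$, $\mathcal{U}_2$, $\mathcal{U}_3$, tracking the intermediate states by linearity and the definitions exactly as you describe. Your side remark that $\mathcal{U}_3$ is controlled on $S_t$ is consistent with the circuit and Definition \ref{def:quantum-reward-encoding}, which strictly yields the factor $\sqrt{P(r|s,a)}$ conditioned on the old state; the $\sqrt{P(r|s',a)}$ appearing in the lemma's displayed state is a discrepancy already present in the paper, not one introduced by your argument.
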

\begin{proof}
	First, the operators $\mathcal{U} (b)$ and $\mathcal{U}(a)$ are applied to the initial quantum state $\ket{0}^{\otimes n}$, according to Definitions \ref{def:quantum-belief-encoding} and \ref{def:quantum-action-encoding}:
    \begin{equation*}
        \ket{\psi_1 (b, a)} = \left( \mathcal{U}(b) \otimes \mathcal{U}(a) \otimes \mathbb{I} \right) \ket{0}^{\otimes n} = \sum_{s \in \mathcal{S}} \sqrt{ b(s) } \ket{s a} \ket{0}^{\otimes \left(k_3 + k_4 + k_5 \right) }
    \end{equation*}
    
    Then, the transition dynamics encoding operator $\mathcal{U}_1$ is applied to the quantum state $\ket{\psi_1 (b, a)}$ as in Definition \ref{def:quantum-transition-encoding}: 
    \begin{equation*}
        \ket{\psi_2 (b, a)} = \left( \mathcal{U}_1 \otimes \mathbb{I} \right) \ket{\psi_1 (b, a)} \stackrel{(\ref{def:quantum-transition-encoding})}{=} \sum_{s \in \mathcal{S}} \sqrt{ b(s) } \sum_{s^{\prime} \in \mathcal{S}} \sqrt{ P(s^{\prime}|s,a) } \ket{s a s^{\prime}} \ket{0}^{\otimes \left(k_4 + k_5\right)}
    \end{equation*}
    
    Afterwards, the sensor model encoding operator $\mathcal{U}_2$ is applied to $\ket{\psi_2 (b, a)}$, according to Definition \ref{def:quantum-sensor-encoding}, yielding:
    \begin{align*}
        \begin{split}
        \ket{\psi_3 (b, a)} &= \left( \mathbb{I} \otimes \mathcal{U}_2 \otimes \mathbb{I} \right) \ket{\psi_2 (b, a)} \\ &\stackrel{(\ref{def:quantum-sensor-encoding})}{=} \sum_{s \in \mathcal{S}} \sqrt{ b(s) } \sum_{s^{\prime} \in \mathcal{S}} \sqrt{ P(s^{\prime}|s, a) } \sum_{o \in \Omega} \sqrt{ P(o|s^{\prime}) } \ket{s a s^{\prime} o} \ket{0}^{\otimes k5}
        \end{split}
    \end{align*}
    
    Finally, using Definition \ref{def:quantum-reward-encoding}, the reward dynamics encoding operator $\mathcal{U}_3$ is applied to $\ket{\psi_3 (b, a)}$. The output quantum state is given by:
    \begin{align*}
        \begin{split}
        \ket{\psi_4 (b, a)} &= \left( \mathbb{I} \otimes \mathcal{U}_3 \right) \ket{\psi_3 (b, a)} \\ &\stackrel{(\ref{def:quantum-reward-encoding})}{=} \sum_{s \in \mathcal{S}} \sqrt{ b(s) } \sum_{s^{\prime} \in \mathcal{S}} \sqrt{ P(s^{\prime}|s, a) } \sum_{o \in \Omega} \sqrt{ P(o|s^{\prime}, a) } \sum_{r \in \mathcal{R}} \sqrt{ P(r|s^{\prime}, a)} \ket{sas^{\prime} o r}
        \end{split}
    \end{align*}
\end{proof}

The DDN encoding operator is therefore used to create the amplitude amplification operator $G(o) = \mathcal{B} \mathcal{S}_0 \mathcal{B}^{\dagger} \mathcal{S}_e$. In this section, it is assumed that the number $k$ always gives a solution state with $100\%$ probability, as it still proves that the quantum circuit of Figure \ref{fig:qbn-belief-update} is correct. 

Thus, assuming a perfect parameter $k$ for the number of applications of the amplitude amplification operator, this operator will completely zero all the amplitudes of the quantum states with the wrong observation $O_{t+1} \neq o$, amplifying the amplitudes of quantum states with the observation $O_{t+1} = o$, as follows:
\begin{assertion}
	Suppose the amplitude amplification operator $G(o)$ is applied to a quantum state $\ket{\psi} = \sum_{x \in \mathcal{X}} \sum_{o^{\prime} \in \Omega} \alpha_{x, o^{\prime}} \ket{x, o^{\prime}}$ (a superposition over values of the sets $\mathcal{X}$ and observations $\Omega$) a perfect number $k$ of times. In this scenario, this operator amplifies the amplitudes of quantum states with observations $o$, while leaving all other states with zero amplitude:
	\begin{equation*}
		G^k(o) \ket{\psi} = \frac{1}{\sqrt{\eta}} \sum_{x \in \mathcal{X}} \alpha_{x, o} \ket{x, o}
	\end{equation*}
	\label{asser:grover-qbu}
	
	\noindent where $\eta = \sum_{x \in \mathcal{X}} \left| \alpha_{x, o} \right|^2$ is a normalization constant such that the resulting quantum state $\ket{\psi^{\prime}} = G^k(o) \ket{\psi}$ obeys the relationship $\braket{\psi^{\prime}}{\psi^{\prime}} = 1$.
\end{assertion}
\vspace{1em}

The amplitude amplification operator can be implemented using the phase flip operator discussed in Subsection \ref{subsec:QC-bn-oracle}. Now, with every operator defined, Lemma \ref{lemma:qbu-quantum-circ} derives the final quantum state after applying the encoding circuit depicted in Figure \ref{fig:qbn-belief-update}.
\begin{lemma}
    The final quantum state $\ket{\psi_{\text{final}} (b, a, o)}$ of the quantum circuit depicted in Figure \ref{fig:qbn-belief-update} is given by the following equation:
    \begin{equation*}
        \ket{\psi_{\text{final}} (b, a, o)} = \frac{1}{\sqrt{\eta}} \sum_{s \in \mathcal{S}} \sqrt{ b(s) } \sum_{s^{\prime} \in \mathcal{S}} \sqrt{ P(s^{\prime}|s,a) } \sqrt{ P(o|s^{\prime}, a) } \sum_{r \in \mathcal{R}} \sqrt{P(r|s^{\prime}, a)} \ket{sas^{\prime} o r}
    \end{equation*}
    \label{lemma:qbu-quantum-circ}
    
    \noindent where $\eta = \sum_{s^{\prime} \in \mathcal{S}} P(o|s^{\prime}, a) \sum_{s \in \mathcal{S}} P(s^{\prime}|s,a) b(s)$ is a normalization constant.
\end{lemma}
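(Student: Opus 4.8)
The plan is to read off the final state in the two stages that mirror the circuit of Figure~\ref{fig:qbn-belief-update}: first the deterministic encoding block $\mathcal{B}=(\mathbb{I}\otimes\mathcal{U}_3)(\mathbb{I}\otimes\mathcal{U}_2\otimes\mathbb{I})(\mathcal{U}_1\otimes\mathbb{I})(\mathcal{U}(b)\otimes\mathcal{U}(a)\otimes\mathbb{I})$, then the amplitude-amplification block $G^k(o)$. For the first stage there is nothing new to do: Lemma~\ref{lemma:ddn-encoding-operator} already gives
\[
\mathcal{B}\ket{0}^{\otimes N}=\sum_{s\in\mathcal{S}}\sqrt{b(s)}\sum_{s'\in\mathcal{S}}\sqrt{P(s'|s,a)}\sum_{o'\in\Omega}\sqrt{P(o'|s',a)}\sum_{r\in\mathcal{R}}\sqrt{P(r|s',a)}\,\ket{s\,a\,s'\,o'\,r},
\]
so I would simply quote it, re-indexing the observation sum with a dummy $o'$ to distinguish it from the evidence value $o$ singled out by $G^k(o)$.

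The second stage is a direct application of Assertion~\ref{asser:grover-qbu}. The key bookkeeping step is to recognize the $\mathcal{B}$-state as a superposition of the required form $\sum_{x\in\mathcal{X}}\sum_{o'\in\Omega}\alpha_{x,o'}\ket{x,o'}$: take the composite index $x=(s,a,s',r)$ ranging over $\mathcal{X}=\mathcal{S}\times\{a\}\times\mathcal{S}\times\mathcal{R}$ (the action register is frozen at $a$), and read off $\alpha_{x,o'}=\sqrt{b(s)}\sqrt{P(s'|s,a)}\sqrt{P(o'|s',a)}\sqrt{P(r|s',a)}$. Assertion~\ref{asser:grover-qbu} then immediately yields $G^k(o)\,\mathcal{B}\ket{0}^{\otimes N}=\tfrac{1}{\sqrt{\eta}}\sum_{x\in\mathcal{X}}\alpha_{x,o}\ket{x,o}$, which after undoing the re-indexing is exactly the claimed expression for $\ket{\psi_{\mathrm{final}}(b,a,o)}$.

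It then remains to identify the normalization constant. By the Assertion, $\eta=\sum_{x\in\mathcal{X}}|\alpha_{x,o}|^2=\sum_{s,s',r}b(s)\,P(s'|s,a)\,P(o|s',a)\,P(r|s',a)$; factoring the sums and using that $\sum_{r\in\mathcal{R}}P(r|s',a)=1$ collapses this to $\eta=\sum_{s'\in\mathcal{S}}P(o|s',a)\sum_{s\in\mathcal{S}}P(s'|s,a)\,b(s)$, matching the statement. I do not expect a genuine obstacle, since the lemma is essentially the composition of two previously established facts. The only points needing care are (i) matching the operator order in $\mathcal{B}$ to both the circuit figure and Lemma~\ref{lemma:ddn-encoding-operator}, and (ii) checking that the hypotheses of Assertion~\ref{asser:grover-qbu} genuinely hold, i.e. that in the $\mathcal{B}$-state the observation register enters only through the factor $\sqrt{P(o'|s',a)}$ so that the state really factorizes as a superposition indexed by $(\mathcal{X},\Omega)$ — which is immediate from the form supplied by Lemma~\ref{lemma:ddn-encoding-operator}. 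The measurement of $S_{t+1}$ drawn in Figure~\ref{fig:qbn-belief-update} plays no role here: this lemma concerns the pre-measurement state, and it is precisely the input needed to obtain the measurement probabilities of Lemma~\ref{lemma:qbn-measurement-probs}.
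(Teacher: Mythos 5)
Your proposal is correct and follows essentially the same route as the paper's proof: invoke Lemma \ref{lemma:ddn-encoding-operator} for the state after the encoding block $\mathcal{B}$, then apply Assertion \ref{asser:grover-qbu} to $G^k(o)$ acting on that state. Your explicit identification of the composite index $x=(s,a,s',r)$ and your verification that $\sum_{x}|\alpha_{x,o}|^2$ collapses to the stated $\eta$ via $\sum_{r}P(r|s',a)=1$ are welcome extra care that the paper's two-line proof leaves implicit.
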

\begin{proof}
    The final quantum state is the application of the amplitude amplification operator to the output quantum state of the DDN encoding operator of Lemma \ref{lemma:ddn-encoding-operator}:
    \begin{align*}
        \begin{split}
        \ket{\psi_{\text{final}} (b, a, o)} &= G^k(o) \sum_{s \in \mathcal{S}} \sqrt{ b(s) } \sum_{s^{\prime} \in \mathcal{S}} \sqrt{ P(s^{\prime}|s, a) } \sum_{o \in \Omega} \sqrt{ P(o|s^{\prime}, a) } \sum_{r \in \mathcal{R}} \sqrt{ P(r|s^{\prime}, a)} \ket{sas^{\prime} o r} \\ 
        &\stackrel{(\ref{asser:grover-qbu})}{=} \frac{1}{\sqrt{\eta}} \sum_{s \in \mathcal{S}} \sqrt{ b(s) } \sum_{s^{\prime} \in \mathcal{S}} \sqrt{ P(s^{\prime}|s,a) } \sqrt{ P(o|s^{\prime}, a) } \sum_{r \in \mathcal{R}} \sqrt{P(r|s^{\prime}, a)} \ket{sas^{\prime} o r}
        \end{split}
    \end{align*}
\end{proof}

The quantum state $\ket{\psi_{\text{final}}}$ is the result of the application of every quantum operator in the circuit of Figure \ref{fig:qbn-belief-update}. The measurement, the final operation of this circuit, is performed only on the qubit of $S_{t+1}$. The likelihood of an outcome of this measurement can be derived from the partial density matrix $\rho (b, a, o)$ only representing qubit $S_{t+1}$. In turn, this matrix can be calculated by executing a partial trace (over every other RV except $S_{t+1}$) in the density matrix $\ket{\psi_{\text{final}} (b, a, o)} \bra{\psi_{\text{final}} (b, a, o)}$ representing the final quantum state. Using this approach, Lemma \ref{lemma:qbn-measurement-probs_repeat} calculates the measurement probabilities of the quantum circuit of Figure \ref{fig:qbn-belief-update}:

\begin{lemma}
    The probability of measuring $S_{t+1} = s^{\prime}$ in the quantum circuit of Figure \ref{fig:qbn-belief-update} is given by the following expression:
    \begin{equation}
        \expval{\rho (b, a, o)}{s^{\prime}} = \frac{1}{\eta} P(o|s^{\prime}, a) \sum_{s \in \mathcal{S}} P(s^{\prime}|s, a) b(s)
        \tag{\ref{eq:qbu-prob}}
    \end{equation}
    \label{lemma:qbn-measurement-probs_repeat}
\end{lemma}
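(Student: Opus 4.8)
The plan is to start from the explicit form of the final state $\ket{\psi_{\text{final}}(b,a,o)}$ established in Lemma \ref{lemma:qbu-quantum-circ}, build the associated density operator $\ket{\psi_{\text{final}}}\bra{\psi_{\text{final}}}$, and then trace out every register except $S_{t+1}$ to obtain the reduced density matrix $\rho(b,a,o)$. The probability of the measurement outcome $S_{t+1}=s^{\prime}$ is then simply the diagonal entry $\langle s^{\prime}|\rho(b,a,o)|s^{\prime}\rangle$. As a first step I would rewrite $\ket{\psi_{\text{final}}}$ as a single sum over the orthonormal product basis $\ket{s\,a\,s^{\prime}\,o\,r}$ (with $a$ and $o$ held fixed) with coefficients
\[
  \alpha_{s,s^{\prime},r} \;=\; \tfrac{1}{\sqrt{\eta}}\sqrt{b(s)}\,\sqrt{P(s^{\prime}|s,a)}\,\sqrt{P(o|s^{\prime},a)}\,\sqrt{P(r|s^{\prime},a)},
\]
so that $\ket{\psi_{\text{final}}}\bra{\psi_{\text{final}}}=\sum \alpha_{s,s^{\prime},r}\,\alpha^{*}_{\tilde s,\tilde s^{\prime},\tilde r}\,\ket{s\,a\,s^{\prime}\,o\,r}\bra{\tilde s\,a\,\tilde s^{\prime}\,o\,\tilde r}$.

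Next I would carry out the partial trace over the registers $S_t$, $A_t$, $O_{t+1}$, $R_{t+1}$. Since each of these registers appears in orthonormal basis states, the trace enforces $s=\tilde s$ and $r=\tilde r$ (the $A_t$ and $O_{t+1}$ labels being already pinned to $a$ and $o$, so they contribute trivially), collapsing the double sum to
\[
  \rho(b,a,o) \;=\; \sum_{s^{\prime},\tilde s^{\prime}}\Bigl(\textstyle\sum_{s,r}\alpha_{s,s^{\prime},r}\,\alpha^{*}_{s,\tilde s^{\prime},r}\Bigr)\ket{s^{\prime}}\bra{\tilde s^{\prime}}.
\]
Reading off the diagonal element at $s^{\prime}=\tilde s^{\prime}$ gives $\langle s^{\prime}|\rho(b,a,o)|s^{\prime}\rangle=\sum_{s,r}|\alpha_{s,s^{\prime},r}|^{2} = \tfrac{1}{\eta}\sum_{s\in\mathcal S}\sum_{r\in\mathcal R} b(s)\,P(s^{\prime}|s,a)\,P(o|s^{\prime},a)\,P(r|s^{\prime},a)$. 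Finally, pulling the $r$-independent factors out of the inner sum and invoking the normalization of the reward CPT, $\sum_{r\in\mathcal R}P(r|s^{\prime},a)=1$, reduces this to $\tfrac{1}{\eta}\,P(o|s^{\prime},a)\sum_{s\in\mathcal S}P(s^{\prime}|s,a)\,b(s)$, which is exactly \eqref{eq:qbu-prob}.

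I do not expect a deep obstacle here; the only step demanding a little care is the bookkeeping of the partial trace — keeping the off-diagonal terms in $s^{\prime}$ (so that $\rho$ is a genuine operator, not just its diagonal) while verifying that the cross terms in the traced-out registers vanish, and confirming that no residual $\sum_{o}$ survives because the amplitude-amplification step of Assertion \ref{asser:grover-qbu} has already projected onto the fixed observation $o$. A useful consistency check is that $\sum_{s^{\prime}}\langle s^{\prime}|\rho(b,a,o)|s^{\prime}\rangle = 1$, which holds precisely by the definition of $\eta$ in Lemma \ref{lemma:qbu-quantum-circ}; this both validates the normalization and confirms that \eqref{eq:qbu-prob} is a well-defined probability distribution over $s^{\prime}$.
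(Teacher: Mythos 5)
Your proposal is correct and follows essentially the same route as the paper: form $\ket{\psi_{\text{final}}}\bra{\psi_{\text{final}}}$ from Lemma \ref{lemma:qbu-quantum-circ}, take the partial trace over $S_t$, $A_t$, $O_{t+1}$, $R_{t+1}$ (the orthonormality deltas collapsing the cross terms exactly as you describe), read off the diagonal entry, and eliminate the reward sum via CPT normalization. The only cosmetic difference is whether the reward factor is written as $P(r|s,a)$ or $P(r|s^{\prime},a)$ — the paper is itself inconsistent on this point — but since that factor sums to one either way, the result is unaffected.
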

\begin{proof}
    \begingroup
    \allowdisplaybreaks
    First, let us compute the partial density matrix:
    \begin{align*}
        \rho (b, a, o) &= \sum_{s, a^{\prime}, o^{\prime}, r} \braket{s a^{\prime} o^{\prime} r}{\psi_{\text{final}} (b, a, o)} \braket{\psi_{\text{final}} (b, a, o)}{s a^{\prime} o^{\prime} r} \\
        &= \frac{1}{\eta} \sum_{s, a^{\prime}, o^{\prime}, r} \left( \sum_{s_{\star} \in \mathcal{S}} \sqrt{ b(s_{\star}) } \sum_{s^{\prime} \in \mathcal{S}} \sqrt{ P(s^{\prime}|s_{\star},a) } \sqrt{ P(o|s^{\prime}, a) } \sum_{r^{\prime} \in \mathcal{R}} \sqrt{P(r^{\prime}|s_{\star}, a)} \braket{sa^{\prime}o^{\prime}r}{s_{\star}as^{\prime} o r^{\prime}} \right) 
        \\ 
        &\left( \sum_{s_{\star} \in \mathcal{S}} \sqrt{ b(s_{\star}) } \sum_{s^{\prime} \in \mathcal{S}} \sqrt{ P(s^{\prime}|s_{\star},a) } \sqrt{ P(o|s^{\prime}, a) } \sum_{r^{\prime} \in \mathcal{R}} \sqrt{P(r^{\prime}|s_{\star}, a)} \braket{s_{\star}as^{\prime} o r^{\prime}}{sa^{\prime}o^{\prime}r} \right)
        \\
        &= \frac{1}{\eta} \sum_{s, a^{\prime}, o^{\prime}, r} \left( \sum_{s_{\star} \in \mathcal{S}} \sqrt{ b(s_{\star}) } \sum_{s^{\prime} \in \mathcal{S}} \sqrt{ P(s^{\prime}|s_{\star},a) } \sqrt{ P(o|s^{\prime}, a) } \sum_{r^{\prime} \in \mathcal{R}} \sqrt{P(r^{\prime}|s_{\star}, a)} \delta_{s s_{\star}} \delta_{a a^{\prime}} \delta_{o o^{\prime}} \delta_{r r^{\prime}} \ket{s^{\prime}} \right) 
        \\ 
        &\left( \sum_{s_{\star} \in \mathcal{S}} \sqrt{ b(s_{\star}) } \sum_{s^{\prime} \in \mathcal{S}} \sqrt{ P(s^{\prime}|s_{\star},a) } \sqrt{ P(o|s^{\prime}, a) } \sum_{r^{\prime} \in \mathcal{R}} \sqrt{P(r^{\prime}|s_{\star}, a)} \delta_{s s_{\star}} \delta_{a a^{\prime}} \delta_{o o^{\prime}} \delta_{r r^{\prime}} \bra{s^{\prime}} \right)
        \\
        &= \frac{1}{\eta} \sum_{s \in \mathcal{S}} b(s) \sum_{r \in \mathcal{R}} \left( \sum_{s^{\prime} \in \mathcal{S}} \sqrt{ P(s^{\prime}|s,a) } \sqrt{ P(o|s^{\prime}, a) } \sqrt{ P(r|s, a) } \ket{s^{\prime}} \right) 
        \\
        &\left( \sum_{s^{\prime} \in \mathcal{S}} \sqrt{ P(s^{\prime}|s,a) } \sqrt{ P(o|s^{\prime}, a) } \sqrt{ P(r|s, a) } \bra{s^{\prime}} \right)
    \end{align*}
    \endgroup
    
    Then, the probability of measuring state $S_{t+1}$ with value $s^{\prime}$ is computed as follows:
    \begin{align*}
        \begin{split}
            \expval{\rho (b, a, o)}{s^{\prime}} &= \frac{1}{\eta} \sum_{s \in \mathcal{S}} b(s) \sum_{r \in \mathcal{R}} \left( \sum_{s^{\star} \in \mathcal{S}} \sqrt{ P(s^{\star}|s,a) } \sqrt{ P(o|s^{\star}, a) } \sqrt{ P(r|s, a) } \braket{s^{\prime}}{s^{\star}} \right) 
            \\
            &\left( \sum_{s^{\star} \in \mathcal{S}} \sqrt{ P(s^{\star}|s,a) } \sqrt{ P(o|s^{\star}, a) } \sqrt{ P(r|s, a) } \braket{s^{\star}}{s^{\prime}} \right) 
            \\
            &= \frac{1}{\eta} \sum_{s \in \mathcal{S}} b(s) \sum_{r \in \mathcal{R}} \left( \sum_{s^{\star} \in \mathcal{S}} \sqrt{ P(s^{\star}|s,a) } \sqrt{ P(o|s^{\star}, a) } \sqrt{ P(r|s, a) } \delta_{s^{\prime} s^{\star}} \right) 
            \\
            &\left( \sum_{s^{\star} \in \mathcal{S}} \sqrt{ P(s^{\star}|s,a) } \sqrt{ P(o|s^{\star}, a) } \sqrt{ P(r|s, a) } \delta_{s^{\star} s^{\prime}} \right) 
            \\
            &= \frac{1}{\eta} \sum_{s \in \mathcal{S}} b(s) \sum_{r \in \mathcal{R}} P(s^{\prime}|s,a) P(o|s^{\prime}, a) P(r|s, a) \\
            &= \frac{1}{\eta} P(o|s^{\prime}, a) \sum_{s \in \mathcal{S}} P(s^{\prime}|s, a) b(s) \sum_{r \in \mathcal{R}} P(r|s, a)\\
            &= \frac{1}{\eta} P(o|s^{\prime}, a) \sum_{s \in \mathcal{S}} P(s^{\prime}|s, a) b(s)
        \end{split}
    \end{align*}
\end{proof}

Once the probabilities of measurement of the final quantum state of the belief update quantum circuit are known, let us rename $\tau_q (b, a, o) (s^{\prime}) = \expval{\rho (b, a, o)}{s^{\prime}}$ as the quantum belief update rule. Thus, Theorem \ref{theorem:quantum-belief-update_repeat} proves the initial claim, that the quantum and classical belief update rules, as in Equation \eqref{eq:belief-update-equality}, are equivalent:

\begin{theorem}
    The quantum and classical belief update rules $\tau_q (b, a, o)$ and $\tau_c (b, a, o)$, for DDNs using rejection sampling, are equivalent:
    \begin{equation*}
        \tau_q (b, a, o)(s^{\prime}) = \tau_c (b, a, o)(s^{\prime}), \; \forall s^{\prime} \in \mathcal{S}
    \end{equation*}
    \label{theorem:quantum-belief-update_repeat}
\end{theorem}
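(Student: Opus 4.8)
The plan is to derive the theorem as an immediate consequence of Lemma~\ref{lemma:qbn-measurement-probs_repeat}, which already supplies the measurement statistics of the circuit of Figure~\ref{fig:qbn-belief-update} in closed form. First I would recall the explicit expression just obtained, namely $\tau_q(b,a,o)(s^{\prime}) = \expval{\rho(b,a,o)}{s^{\prime}} = \frac{1}{\eta}\,P(o|s^{\prime},a)\sum_{s\in\mathcal{S}}P(s^{\prime}|s,a)\,b(s)$ with $\eta = \sum_{s^{\prime}\in\mathcal{S}}P(o|s^{\prime},a)\sum_{s\in\mathcal{S}}P(s^{\prime}|s,a)\,b(s)$, and place it side by side with the classical rule \eqref{eq:qbu-classical}, $\tau_c(b,a,o)(s^{\prime}) \propto P(o|s^{\prime},a)\sum_{s\in\mathcal{S}}P(s^{\prime}|s,a)\,b(s)$.

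Next I would observe that the two rules share the \emph{same} unnormalized numerator, so they can differ at most by a global multiplicative constant. Since $\tau_c(b,a,o)$ is by definition a probability distribution over $\mathcal{S}$ — it is the posterior $P(s^{\prime}|o,a,b)$ — that constant is pinned down uniquely by $\sum_{s^{\prime}\in\mathcal{S}}\tau_c(b,a,o)(s^{\prime}) = 1$. Evaluating it amounts to summing the common numerator over $s^{\prime}$, which yields exactly $\eta$; equivalently $\eta = \sum_{s^{\prime}}P(o,s^{\prime}|a,b) = P(o|a,b)$ is the evidence probability, i.e.\ the usual Bayes normalizer. Hence $\tau_c(b,a,o)(s^{\prime}) = \frac{1}{\eta}\,P(o|s^{\prime},a)\sum_{s}P(s^{\prime}|s,a)\,b(s) = \tau_q(b,a,o)(s^{\prime})$ for every $s^{\prime}\in\mathcal{S}$, which is the claim.

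The only point requiring care — more a matter of bookkeeping than a genuine obstacle — is confirming that the normalization $\eta$ appearing in Lemma~\ref{lemma:qbu-quantum-circ} (inherited from Assertion~\ref{asser:grover-qbu}, i.e.\ from the partial trace after amplitude amplification) is \emph{the same} quantity as the classical normalizer, rather than a circuit-specific artifact. This holds because, as noted above, $\eta = \sum_{s^{\prime}}P(o|s^{\prime},a)\sum_{s}P(s^{\prime}|s,a)\,b(s) = P(o|a,b)$ by construction, so the two normalizations coincide identically. I would also state explicitly the standing caveat of Appendix~\ref{app:q_rej_sampling}: the equivalence is proved under the idealized assumption that the amplification power $k$ maps the evidence subspace to unit probability; for non-ideal $k$ the same argument shows the distribution of \emph{accepted} samples is still $\tau_c$, so correctness is preserved and only the acceptance rate is affected.
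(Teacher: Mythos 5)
Your proposal is correct and follows essentially the same route as the paper's own proof: both invoke Lemma~\ref{lemma:qbn-measurement-probs_repeat} to get the closed form of $\tau_q$, note that $\tau_q$ and $\tau_c$ share the same unnormalized numerator, and then use the fact that both are probability distributions over $s^{\prime}$ to force the two normalizing constants to coincide. Your added observation that $\eta = P(o\mid a,b)$ is the Bayes evidence, and the caveat about the idealized amplification power $k$, are consistent with (and slightly more explicit than) the paper's treatment.
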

\begin{proof}
    Equation \eqref{eq:qbu-prob} can be re-written as:
    \begin{equation*}
        \tau_q (b, a, o) (s^{\prime}) = \frac{1}{\eta} P(o|s^{\prime}, a) \sum_{s \in \mathcal{S}} P(s^{\prime}|s, a) b(s)
    \end{equation*}
    
    \noindent It is also known from the classical case that the classical belief update can be re-written to incorporate a proportionality constant:
    \begin{equation*}
        \tau_c (b, a, o) (s^{\prime}) \stackrel{(\eqref{eq:qbu-classical})}{=} \frac{1}{\eta^{\prime}} P(o|s^{\prime}, a) \sum_{s \in \mathcal{S}} P(s^{\prime}|s, a) b(s)
        \label{eq:qbu-classical-prop}
    \end{equation*}
    
    Since both $\tau_q(b, a, o)(s^{\prime})$ and $\tau_c(b, a, o)(s^{\prime})$ are probability distributions, their sum over $s^{\prime}$ is the same, which is equal to one, and therefore:
    \begin{align*}
        \begin{split}
        \sum_{s^{\prime}} \tau_q (b, a, o) (s^{\prime}) &= \sum_{s^{\prime}} \tau_c (b, a, o) (s^{\prime})
        \\ 
        \Leftrightarrow \frac{1}{\eta} \sum_{s^{\prime} \in \mathcal{S}} P(o | s^{\prime}, a) \sum_{s \in \mathcal{S}} P(s^{\prime}|s,a) b(s) &= \frac{1}{\eta^{\prime}} \sum_{s^{\prime} \in \mathcal{S}} P(o | s^{\prime}, a) \sum_{s \in \mathcal{S}} P(s^{\prime}|s,a) b(s) \\
        \Leftrightarrow \eta &= \eta^{\prime}
        \end{split}
    \end{align*}
    
    Therefore, the quantum and the classical belief update are equivalent:
    \begin{equation*}
        \tau_q(b, a, o)(s^{\prime}) = \tau_c(b, a, o)(s^{\prime}), \; \forall s^{\prime} \in \mathcal{S} 
    \end{equation*}
\end{proof}

As a consequence of Theorem \ref{theorem:quantum-belief-update}, the belief updating procedure on any algorithm resorting to it can be performed in a quantum setting, leveraging the quadratic speedup of quantum rejection sampling. 

Naturally, the way to perform this quantum belief update operation is not to just run the circuit once and retrieve a measurement. What is shown in Theorem \ref{theorem:quantum-belief-update} is that the probabilities of obtaining a certain measurement match the probabilities of the classical belief update rule. Nonetheless, obtaining an approximation of the next belief state using this circuit should be done with a certain number of samples $n$.

The simplest way to proceed is to run the quantum circuit of Figure \ref{fig:qbn-belief-update} $n$ times ($n$ being the desired number of samples), extracting the samples $s^{\prime}_i \sim \tau_q(b, a, o), \; i = \{1,\dots,n\}$ for the next state, and approximating the next belief state $b^{\prime}(s^{\prime})$ by the following equation:
\begin{equation}
    b^{\prime}_n(s^{\prime}) = \frac{1}{n} \sum_{i = 1}^{n} \delta_{s^{\prime} s^{\prime}_i}
    \label{eq:belief-update-approx}
\end{equation}

\noindent where $b^{\prime}_n$ is an approximation of the belief state, computed resorting to $n$ samples. Equation \eqref{eq:belief-update-approx} simply states that the next belief state value for state $s^{\prime}$ is the fraction of extracted samples that match state $s^{\prime}$.


\section{Complexity Analysis}
\label{app:complexity}

In this Appendix we perform an execution time complexity analysis of both the classical and quantum-classical lookahead algorithms in order to compare the two and check for potential speedups in the quantum approach.

Both of these algorithms are probabilistic. To execute them, samples have to be extracted and used to approximate probability distributions, which are then used to calculate a near-optimal action. As such, to determine their computational complexity, we start by determining their \textit{sample complexity}: the total number of samples needed to execute the algorithm. 

This appendix is therefore divided into two sections. \ref{appsub:sample_comp} discusses the sample complexity of the lookahead algorithm, giving conditions under which there are formal guarantees for the algorithm to run with an error of at most $\epsilon$, and a confidence interval of at least $1 - \delta$. This analysis applies to both the classical and quantum algorithms, as the number of samples that need to be extracted is the same in both cases.

These results are then used in \ref{appsub:time_comp} to determine the computational complexity.

\ref{appsub:aux_lemmas} lays out four lemmas used in these analyses, which we present separately for ease of reading. 

\subsection{Sample Complexity}
\label{appsub:sample_comp}


The sample complexity of the quantum-classical algorithm counts the total number of samples \footnote{which corresponds to the total amount of times that the dynamic decision network (DDN) has to be called} required to achieve a certain goal. In this analysis, the chosen goal is to reach an $\epsilon$ approximation of the action value function of the optimal action, with a confidence interval of $1 - \delta$. 

To exemplify, consider that the value $r$ is approximated by the output $r^{\mathcal{L}}$ of a probabilistic algorithm $\mathcal{L}$. For a given precision $\epsilon$, the output of a single execution of $\mathcal{L}$ is said to be $\epsilon$-approximate if:
\begin{equation}
	\left| r - r^{\mathcal{L}} \right| \le \epsilon
	\label{eq:epsilon-approximation}
\end{equation}

However, if algorithm $\mathcal{L}$ is probabilistic, the error of the approximation varies from one execution to another. As a consequence, if this algorithm is executed an extremely high number of times, the approximation error of some of these executions is likely to be higher than the defined precision. Therefore, most probabilistic algorithms can not be called $\epsilon$-approximate, as not every execution obeys Equation \eqref{eq:epsilon-approximation}.

A better alternative is not to force the approximation of every execution to have a precision of $\epsilon$, but to guarantee instead that each execution has at least some probability $1 - \delta$ (confidence interval) of having an error no greater than $\epsilon$:

\begin{equation}
	P \left( \left| r - r^{\mathcal{L}} \right| \le \epsilon \right) \ge 1 - \delta
	\label{eq:pac-learnability}
\end{equation}

Consider the definition of the optimal action value function $Q^{\star}(\mathbf{b}_t, a_t)$\footnote{Vector quantities are represented in bold.}:

\begin{gather*}
    Q^{\star}(\mathbf{b}_t, a_t) = \mathbb{E} \left( r_{t+1} \middle| \mathbf{b}_t, a_t \right) + \gamma \sum_{o_{t+1}} P(o_{t+1}|\mathbf{b}_t,a_t) \max_{a_{t+1} \in \mathcal{A}} Q^{\star}(\mathbf{b}_{t+1},a_{t+1})\\
    \mathbf{b}_{t+1}(s_{t+1}) = P(s_{t+1}|o_{t+1},a_t,\mathbf{b}_t)
\end{gather*}

The analysis of the sample complexity of the algorithm entails many manipulations of these expressions. So, for the sake of compactness and comprehensibility of the proofs of the results ahead, it is useful to resort to vectorial notation \footnote{Although vectorial notation in preferred in this section, explicitly writing out the probability distributions is sometimes clearer and occasionally used.}. As such,  let $\mathbf{p}_{r_t} \in \Delta_{\mathcal{R}}$ \footnote{In this notation, $\mathbf{p} \in \Delta_{\mathcal{V}}$ means that $\mathbf{p}$ is a probability distribution vector over the space $\mathcal{V}$.} represent the probability distribution $P(r_{t+1}|\mathbf{b}_t, a_t), \forall r_{t+1} \in \mathcal{R}$, $\mathbf{p}_{o_t} \in \Delta_{\Omega}$ represent the probability distribution $P(o_{t+1}|\mathbf{b}_t,a_t), \forall o_{t+1} \in \Omega$, $\mathbf{r} \in \mathbb{R}^{\left|\mathcal{R}\right|}$ be a vector whose entries enumerate every possible reward and $\mathbf{V}^{\star}_{t+1} \in \mathbb{R}^{\left|\Omega\right|}$ be a vector whose entries are the values $\max_{a_{t+1} \in \mathcal{A}} Q^{\star}(\mathbf{b}_{t+1}, a_{t+1})$ for every belief update of $\mathbf{b}_t$ after taking action $a_t$ \footnote{Since the belief update rule is $\mathbf{b}_{t+1}(s_{t+1}) = P(s_{t+1}|o_{t+1},a_t,\mathbf{b}_t)$ and both $a_t$ and $\mathbf{b}_t$ are fixed, there are $\left|\Omega\right|$ different possible $\mathbf{b}_{t+1}$, one for each observation $o_{t+1} \in \Omega$.}. The previous equation can be re-written, using this notation, as\footnote{In vector notation, $\mathbf{v}^{\top}$ is the transpose of $\mathbf{v}$, such that the inner-product of two vectors $\mathbf{u}$ and $\mathbf{v}$ can be written as $\mathbf{u}^{\top} \mathbf{v}$.}:

\begin{equation}
    Q^{\star}(\mathbf{b}_t, a_t) = \mathbf{p}_{r_t}^{\top} \mathbf{r} + \gamma \mathbf{p}_{o_t}^{\top} \mathbf{V}_{t+1}^{\star}
    \label{eq:qfunction-vector}
\end{equation}

The lookahead algorithm $\mathcal{L}$ calculates a near-optimal value function $Q^{\mathcal{L}}$ similarly to that of Equation \eqref{eq:qfunction-vector}, but replacing $\mathbf{p}_{r_t}$ with its empirical estimation $\mathbf{p}_{r_t,n_t}$ using $n_t$ samples, $\mathbf{p}_{o_t}$ with its empirical estimation $\mathbf{p}_{o_t,m_t}$ using $m_t$ samples and $\mathbf{V}_{t+1}^{\star}$ with its empirical estimation $\mathbf{V}^{\mathcal{L}}_{t+1}$ calculated recursively. Thus, the lookahead algorithm, starting with root belief node $\mathbf{b}_t$ at time-step $t$ obeys the following recursive Equation \eqref{eq:qfunc-approx-vector-rec} with a terminal case described by Equation \eqref{eq:qfunc-approx-vector-stop} (according to the lookahead horizon $H$):
\begin{equation}
    Q^{\mathcal{L}}(\mathbf{b}_t, a_t) = \mathbf{p}_{r_t,n_t}^{\top} \mathbf{r} + \gamma \mathbf{p}_{o_t,m_t}^{\top} \mathbf{V}^{\mathcal{L}}_{t+1}
    \label{eq:qfunc-approx-vector-rec}
\end{equation}
\begin{equation}
    Q^{\mathcal{L}}(\mathbf{b}_{t+H}, a_{t+H}) = 0
    \label{eq:qfunc-approx-vector-stop}
\end{equation}

So far, three sources of error for the lookahead algorithm have been mentioned: 
\begin{itemize}
    \item Approximating the reward distribution $\mathbf{p}_{r_t}$ by its empirical estimation $\mathbf{p}_{r_t, n_t}$, using $n_t$ samples.
    \item Approximating the observation distribution $\mathbf{p}_{o_t}$ by its empirical estimation $\mathbf{p}_{o_t, m_t}$, using $m_t$ samples.
    \item The stop case for the lookahead algorithms in the horizon $H$ limit.
\end{itemize}

There is yet another source of error, coming from the estimation of the belief state. This estimation, however, differs slightly from the estimations of the two distributions mentioned above. Suppose we have an empirical estimation $\mathbf{b}_{l_t}$ of the belief state $\mathbf{b}_t$, using $l_t$ samples. When performing the belief update to get the next belief state $\mathbf{b}_{l_{t+1}}(s_{t+1}) = P_{l_{t+1}}(s_{t+1}|o_{t+1},a_t,\mathbf{b}_{l_t})$, this quantity estimates $P(s_{t+1}|o_{t+1},a_t,\mathbf{b}_{l_t})$ rather than the belief-update $P(s_{t+1}|o_{t+1},a_t,\mathbf{b}_t)$ of the true belief-state $\mathbf{b}_t$. As a consequence, $\mathbf{b}_{l_{t+1}}$ has two sources of error: the fact that the belief update is performed on a previous estimator and the sampling process that comes after it.

Definition \ref{def:lookahead-error} introduces the lookahead error using the notion of \textit{regret} \cite{liu2021regret}, a scalar quantity generally used in RL to define an algorithm's error. It is defined as the absolute difference between the utility of the best action that could be taken and the utility of the action that the agent actually took. The remainder of this section serves to provide and analyze an upper bound to this error in order to give theoretical guarantees concerning the performance of the algorithm. These theoretical guarantees are the conditions under which the algorithm can be considered $\epsilon$-approximate with a confidence interval of $1-\delta$, as per Equation \eqref{eq:pac-learnability}.

\begin{definition}
    (Lookahead error). Let $a_t^{\mathcal{L}} = \argmax_{a_t \in \mathcal{A}} Q^{\mathcal{L}}(\mathbf{b}_{l_t}, a_t)$ be the near optimal action chosen by the algorithm at time $t$. The lookahead error $\epsilon_t$ at time $t$ is defined as the regret:
    \begin{equation}
        \epsilon_t = \left| \max_{a_t \in \mathcal{A}} Q^{\star}(\mathbf{b}_t, a_t) - Q^{\star}(\mathbf{b}_{l_t}, a_t^{\mathcal{L}}) \right|
    \end{equation}
    \label{def:lookahead-error}
\end{definition}

Using Lemmas \ref{lemma:max-diff} and \ref{lemma:triangle-ineq}, Lemma \ref{lemma:error-split} decomposes the lookahead error into three distinct errors, each one easier to bound than the original one:
\begin{lemma}
    The lookahead error $\epsilon_t$ can be bounded by the following expression:
    \begin{equation*}
        \epsilon_t \le \max_{a_t \in \mathcal{A}} \left( \left| Q^{\star}(\mathbf{b}_t, a_t) - Q^{\mathcal{L}}(\mathbf{b}_t, a_t) \right| + \left| Q^{\mathcal{L}}(\mathbf{b}_t, a_t) - Q^{\mathcal{L}}(\mathbf{b}_{l_t}, a_t) \right| \right) + \left| Q^{\star}(\mathbf{b}_{l_t}, a_t^{\mathcal{L}}) - Q^{\mathcal{L}}(\mathbf{b}_{l_t}, a_t^{\mathcal{L}}) \right|
    \end{equation*}
    \label{lemma:error-split}
\end{lemma}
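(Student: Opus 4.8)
The plan is to peel off the three error terms one at a time by inserting intermediate quantities into the absolute value defining $\epsilon_t$ and applying the triangle inequality (Lemma \ref{lemma:triangle-ineq}), together with the max-difference bound (Lemma \ref{lemma:max-diff}). The observation that makes the whole decomposition work is that $a_t^{\mathcal{L}}$ is, by its very definition, the maximiser of $Q^{\mathcal{L}}(\mathbf{b}_{l_t},\cdot)$, so that $Q^{\mathcal{L}}(\mathbf{b}_{l_t}, a_t^{\mathcal{L}}) = \max_{a_t \in \mathcal{A}} Q^{\mathcal{L}}(\mathbf{b}_{l_t}, a_t)$. This identity is what lets me introduce $\max_{a_t} Q^{\mathcal{L}}(\mathbf{b}_{l_t}, a_t)$ as a pivot and thereby reduce the problem to a difference of two maxima, which is the setting Lemma \ref{lemma:max-diff} addresses.

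First I would start from $\epsilon_t = \left| \max_{a_t \in \mathcal{A}} Q^{\star}(\mathbf{b}_t, a_t) - Q^{\star}(\mathbf{b}_{l_t}, a_t^{\mathcal{L}}) \right|$ and add and subtract $Q^{\mathcal{L}}(\mathbf{b}_{l_t}, a_t^{\mathcal{L}})$, which by the remark above equals $\max_{a_t} Q^{\mathcal{L}}(\mathbf{b}_{l_t}, a_t)$. The triangle inequality then bounds $\epsilon_t$ by $\left| \max_{a_t} Q^{\star}(\mathbf{b}_t, a_t) - \max_{a_t} Q^{\mathcal{L}}(\mathbf{b}_{l_t}, a_t) \right|$ plus $\left| Q^{\star}(\mathbf{b}_{l_t}, a_t^{\mathcal{L}}) - Q^{\mathcal{L}}(\mathbf{b}_{l_t}, a_t^{\mathcal{L}}) \right|$. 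The second summand is already the last term of the claimed bound, so only the difference of maxima remains.

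Next I would apply Lemma \ref{lemma:max-diff} to bound $\left| \max_{a_t} Q^{\star}(\mathbf{b}_t, a_t) - \max_{a_t} Q^{\mathcal{L}}(\mathbf{b}_{l_t}, a_t) \right|$ by $\max_{a_t} \left| Q^{\star}(\mathbf{b}_t, a_t) - Q^{\mathcal{L}}(\mathbf{b}_{l_t}, a_t) \right|$. Then, inside the max, for each fixed $a_t$ I add and subtract $Q^{\mathcal{L}}(\mathbf{b}_t, a_t)$ and apply the triangle inequality once more, obtaining $\left| Q^{\star}(\mathbf{b}_t, a_t) - Q^{\mathcal{L}}(\mathbf{b}_t, a_t) \right| + \left| Q^{\mathcal{L}}(\mathbf{b}_t, a_t) - Q^{\mathcal{L}}(\mathbf{b}_{l_t}, a_t) \right|$. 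Using monotonicity of $\max$ to take the maximum over $a_t$ of this sum, and recombining with the split from the previous paragraph, yields exactly the stated inequality.

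There is no genuinely hard step here; the argument is pure bookkeeping with the triangle inequality and monotonicity of the maximum. The one place where I would be careful is the initial pivot: one must explicitly invoke the definition $a_t^{\mathcal{L}} = \argmax_{a_t} Q^{\mathcal{L}}(\mathbf{b}_{l_t}, a_t)$ to rewrite $Q^{\mathcal{L}}(\mathbf{b}_{l_t}, a_t^{\mathcal{L}})$ as a maximum, since otherwise Lemma \ref{lemma:max-diff} does not apply — we would be comparing a maximum to an arbitrary value rather than two maxima. Everything downstream of that observation is routine.
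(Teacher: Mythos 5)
Your proposal is correct and follows essentially the same route as the paper's proof: pivot on $\max_{a_t} Q^{\mathcal{L}}(\mathbf{b}_{l_t}, a_t) = Q^{\mathcal{L}}(\mathbf{b}_{l_t}, a_t^{\mathcal{L}})$ via the triangle inequality, apply Lemma \ref{lemma:max-diff} to the difference of maxima, then split once more by inserting $Q^{\mathcal{L}}(\mathbf{b}_t, a_t)$. Your explicit remark that the argmax definition of $a_t^{\mathcal{L}}$ is what licenses the pivot is the same (implicit) step the paper uses when it replaces $\max_{a_t} Q^{\mathcal{L}}(\mathbf{b}_{l_t}, a_t)$ by $Q^{\mathcal{L}}(\mathbf{b}_{l_t}, a_t^{\mathcal{L}})$ in its second term.
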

\begin{proof}
    \begin{align*}
        \begin{split}
            \epsilon_t \stackrel{(\ref{def:lookahead-error})}{=}& \; \left| \max_{a_t \in \mathcal{A}} Q^{\star}(\mathbf{b}_t, a_t) - Q^{\star}(\mathbf{b}_{l_t}, a_t^{\mathcal{L}}) \right| \\
            \stackrel{(\ref{lemma:triangle-ineq})}{\le}& \; \left| \max_{a_t \in \mathcal{A}} Q^{\star}(\mathbf{b}_t, a_t) - \max_{a_t \in \mathcal{A}} Q^{\mathcal{L}}(\mathbf{b}_{l_t}, a_t) \right| + \left| Q^{\star}(\mathbf{b}_{l_t}, a_t^{\mathcal{L}}) - \max_{a_t \in \mathcal{A}} Q^{\mathcal{L}}(\mathbf{b}_{l_t}, a_t) \right| \\
            \stackrel{(\ref{lemma:max-diff})}{\le}& \; \max_{a_t \in \mathcal{A}} \left| Q^{\star}(\mathbf{b}_t, a_t) - Q^{\mathcal{L}}(\mathbf{b}_{l_t}, a_t) \right| + \left| Q^{\star}(\mathbf{b}_{l_t}, a_t^{\mathcal{L}}) - Q^{\mathcal{L}}(\mathbf{b}_{l_t}, a_t^{\mathcal{L}}) \right|\\
            =& \; \max_{a_t \in \mathcal{A}} \left| \left( Q^{\star}(\mathbf{b}_t, a_t) - Q^{\mathcal{L}}(\mathbf{b}_t, a_t) \right) + \left( Q^{\mathcal{L}}(\mathbf{b}_t, a_t) - Q^{\mathcal{L}}(\mathbf{b}_{l_t}, a_t) \right) \right| + \left| Q^{\star}(\mathbf{b}_{l_t}, a_t^{\mathcal{L}}) - Q^{\mathcal{L}}(\mathbf{b}_{l_t}, a_t^{\mathcal{L}}) \right|\\
            \stackrel{(\ref{lemma:triangle-ineq})}{\le}& \; \max_{a_t \in \mathcal{A}} \left( \left| Q^{\star}(\mathbf{b}_t, a_t) - Q^{\mathcal{L}}(\mathbf{b}_t, a_t) \right| + \left| Q^{\mathcal{L}}(\mathbf{b}_t, a_t) - Q^{\mathcal{L}}(\mathbf{b}_{l_t}, a_t) \right| \right) + \left| Q^{\star}(\mathbf{b}_{l_t}, a_t^{\mathcal{L}}) - Q^{\mathcal{L}}(\mathbf{b}_{l_t}, a_t^{\mathcal{L}}) \right|
        \end{split}
    \end{align*}
\end{proof}

Lemma \ref{lemma:error-split} states that the total lookahead error can be decomposed into three distinct errors:
\begin{itemize}
    \item $\left| Q^{\star}(\mathbf{b}_t,a_t) - Q^{\mathcal{L}}(\mathbf{b}_t, a_t) \right|$: the error between the true action value function and the approximate one, both evaluated at belief state $\mathbf{b}_t$.
    
    \item $\left| Q^{\mathcal{L}}(\mathbf{b}_t, a_t) - Q^{\mathcal{L}}\left(\mathbf{b}_{l_t}, a_t\right) \right|$: the error between the approximate action value function evaluated at the true belief state $\mathbf{b}_t$ and at the approximate belief-state $\mathbf{b}_{l_t}$.
    
    \item $\left| Q^{\star}(\mathbf{b}_{l_t}, a_t^{\mathcal{L}}) - Q^{\mathcal{L}}(\mathbf{b}_{l_t}, a_t^{\mathcal{L}}) \right|$: the error between the true and approximate action value functions, both evaluated at the approximate belief $\mathbf{b}_{l_t}$ and the near-optimal action $a^{\star}_t$.
\end{itemize}

The remainder of this section determines upper bounds to all error terms in order to define an upper bound for the lookahead error. The main result that allows defining these bounds is Hoeffding's inequality, stated in Theorem \ref{theorem:hoeffding}\footnote{$\lVert \cdot \rVert_{\infty}$ denotes the $l_{\infty}$ norm of a vector, which extracts its maximum element. If $u = \lVert \mathbf{U} \rVert_{\infty}$, then $u \ge u^{\prime}, \forall u^{\prime} \in \mathbf{U}$.}.
\begin{theorem}
    (Hoeffding's Inequality \cite{sidford2018near}). Let $\mathbf{p} \in \Delta_{\mathcal{V}}$ be a probability vector and $\mathbf{V} \in \mathbb{R}^{\left| \mathcal{V} \right|}$ a vector. Let $\mathbf{p}_m \in \Delta_{\mathcal{V}}$ be an empirical estimation of $\mathbf{p}$ using $m$ i.i.d. \footnote{Independent and identically distributed} samples and $\delta \in (0, 1)$ a parameter. Then, with probability at least $1-\delta$:
    \begin{equation*}
        \left| \mathbf{p}^{\top} \mathbf{V} - \mathbf{p}_m^{\top} \mathbf{V} \right| \le \lVert V \rVert_{\infty} \mathcal{H}(m, \delta)
    \end{equation*}
    Where $\mathcal{H}(m, \delta) = \sqrt{ \frac{1}{2m} \log \left( \frac{2}{\delta}  \right) }$.
    \label{theorem:hoeffding}
\end{theorem}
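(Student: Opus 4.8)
The plan is to recognize $\mathbf{p}_m^{\top}\mathbf{V}$ as an empirical average of independent, bounded scalar random variables, and then invoke the classical scalar Hoeffding bound. Concretely, let $X_1,\dots,X_m$ be the i.i.d.\ draws from $\mathbf{p}$ underlying the empirical distribution $\mathbf{p}_m$, so that $\mathbf{p}_m(v)=\frac{1}{m}\#\{i : X_i=v\}$ and hence $\mathbf{p}_m^{\top}\mathbf{V}=\frac{1}{m}\sum_{i=1}^m Z_i$ with $Z_i:=V_{X_i}$. Each $Z_i$ is an i.i.d.\ copy of a random variable with mean $\mathbb{E}[Z_1]=\sum_{v\in\mathcal{V}}p_v V_v=\mathbf{p}^{\top}\mathbf{V}$, so the quantity we must control, $\mathbf{p}^{\top}\mathbf{V}-\mathbf{p}_m^{\top}\mathbf{V}$, is exactly the deviation of the sample mean $\frac{1}{m}\sum_i Z_i$ from its expectation.

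Next I would apply the standard Hoeffding inequality for a sum of independent random variables each confined to an interval of length $L$: $P\big(|\frac{1}{m}\sum_i Z_i-\mathbb{E}[Z_1]|\ge t\big)\le 2\exp(-2mt^2/L^2)$. Here each $Z_i$ takes values among the entries of $\mathbf{V}$, hence lies in an interval of length $L=\max_v V_v-\min_v V_v$. Bounding $L\le \lVert\mathbf{V}\rVert_{\infty}$ (valid when the entries of $\mathbf{V}$ are nonnegative, as is the case for the value/reward vectors to which the lemma is applied; otherwise one uses $L\le 2\lVert\mathbf{V}\rVert_{\infty}$ and absorbs the constant into the statement) gives $P\big(|\mathbf{p}^{\top}\mathbf{V}-\mathbf{p}_m^{\top}\mathbf{V}|\ge t\big)\le 2\exp\!\big(-2mt^2/\lVert\mathbf{V}\rVert_{\infty}^2\big)$.

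Finally I would calibrate $t$ so that the right-hand side equals $\delta$: solving $2\exp(-2mt^2/\lVert\mathbf{V}\rVert_{\infty}^2)=\delta$ gives $t=\lVert\mathbf{V}\rVert_{\infty}\sqrt{\tfrac{1}{2m}\log(2/\delta)}=\lVert\mathbf{V}\rVert_{\infty}\,\mathcal{H}(m,\delta)$. Passing to the complementary event yields $P\big(|\mathbf{p}^{\top}\mathbf{V}-\mathbf{p}_m^{\top}\mathbf{V}|\le \lVert\mathbf{V}\rVert_{\infty}\mathcal{H}(m,\delta)\big)\ge 1-\delta$, which is the claimed bound.

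The argument is essentially a packaging exercise, so there is no deep obstacle; the only point requiring care is the reduction from the vector statement to a scalar concentration bound — in particular, pinning down the effective range of the summands $Z_i$. Obtaining the clean constant $\tfrac{1}{2m}$ inside $\mathcal{H}$ relies on bounding that range by $\lVert\mathbf{V}\rVert_{\infty}$ rather than $2\lVert\mathbf{V}\rVert_{\infty}$, which is why the normalization convention for $\mathbf{V}$ (or the citation fixing the constant, here \cite{sidford2018near}) matters; everything else is a direct substitution into a textbook inequality.
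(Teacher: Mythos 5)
The paper itself offers no proof of this statement; it is imported directly from \cite{sidford2018near}, so there is no internal argument to compare against. Your derivation---viewing $\mathbf{p}_m^{\top}\mathbf{V}$ as the sample mean of the i.i.d.\ scalars $Z_i = V_{X_i}$ and invoking the classical scalar Hoeffding inequality---is the standard route and is sound. The caveat you flag is in fact the only substantive point: the $Z_i$ range over an interval of length $\max_v V_v - \min_v V_v$, which for a general $\mathbf{V}\in\mathbb{R}^{|\mathcal{V}|}$ can only be bounded by $2\lVert \mathbf{V}\rVert_{\infty}$, so the clean constant $\lVert \mathbf{V}\rVert_{\infty}\,\mathcal{H}(m,\delta)$ requires the entries of $\mathbf{V}$ to be one-signed; otherwise the correct bound is $2\lVert \mathbf{V}\rVert_{\infty}\,\mathcal{H}(m,\delta)$. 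That hypothesis is not stated in the theorem and is violated by the reward vectors used downstream (both test problems have negative rewards), and a two-point example with $\mathbf{V}=(1,-1)^{\top}$ and $\mathbf{p}=(1/2,1/2)^{\top}$ shows the stated constant is genuinely too small for small $\delta$. This is only a factor-of-two slack in the deviation (a factor of four in $m$) that washes out in all of the paper's asymptotic conclusions, but your proof, carried through with the $2\lVert\mathbf{V}\rVert_{\infty}$ range, is the version of the statement that actually holds for arbitrary $\mathbf{V}$.
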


\vspace{1em}

The results above already provide the necessary tools to bound two of the error sources that compose the lookahead error as in Lemma \ref{lemma:error-split}. Using these results, Lemma \ref{lemma:true-error} bounds the error between the true value function and the approximate one, both evaluated at the same belief state and action:
\begin{lemma}
    Let $\Gamma = \left(1-\gamma\right)^{-1}$ be the effective horizon, $r_{\text{max}} = \lVert \mathbf{r} \rVert_{\infty}$ be the maximum reward and $\sigma_t \in (0, 1)$ a parameter defining the confidence interval $1-\sigma_t$ for sampling both $\mathbf{p}_{r_t}$ and $\mathbf{p}_{o_t}$ at time-step $t$. The error $\left| Q^{\star}(\mathbf{b}_t, a_t) - Q^{\mathcal{L}}(\mathbf{b}_t, a_t) \right|$ has the following upper bound:
    \begin{equation*}
        \left| Q^{\star}(\mathbf{b}_t, a_t) - Q^{\mathcal{L}}(\mathbf{b}_t, a_t) \right| \le r_{\text{max}} \sum_{k=0}^{H-1} \gamma^k \left( \mathcal{H}(n_{t+k}, \sigma_{t+k}) + \gamma \Gamma \mathcal{H}(m_{t+k}, \sigma_{t+k}) \right) + \gamma^{H} \Gamma r_{\text{max}}
    \end{equation*}
    \label{lemma:true-error}
\end{lemma}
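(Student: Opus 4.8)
The plan is to prove the bound by induction on the lookahead depth $H$, mirroring the bottom-up evaluation of the tree: I would establish the statement for an arbitrary belief state $\mathbf{b}_t$, action $a_t$, and remaining horizon $H$, conditioned on the Hoeffding events of Theorem \ref{theorem:hoeffding} holding at every node involved (the union bound over these events being accounted for later in the sample-complexity analysis). For the base case $H=0$, the terminal rule \eqref{eq:qfunc-approx-vector-stop} gives $Q^{\mathcal{L}}(\mathbf{b}_t,a_t)=0$, while $|Q^{\star}(\mathbf{b}_t,a_t)| \le r_{\text{max}}\sum_{k\ge 0}\gamma^k = \Gamma r_{\text{max}}$; this is exactly the claimed bound with an empty sum and $\gamma^{H}=1$.

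For the inductive step I would subtract \eqref{eq:qfunc-approx-vector-rec} from \eqref{eq:qfunction-vector} and, inserting $\pm\,\mathbf{p}_{o_t}^{\top}\mathbf{V}^{\mathcal{L}}_{t+1}$ and applying the triangle inequality (Lemma \ref{lemma:triangle-ineq}), split the difference into a reward-estimation term $(\mathbf{p}_{r_t}-\mathbf{p}_{r_t,n_t})^{\top}\mathbf{r}$, an observation-estimation term $\gamma\,(\mathbf{p}_{o_t}-\mathbf{p}_{o_t,m_t})^{\top}\mathbf{V}^{\mathcal{L}}_{t+1}$, and a value-recursion term $\gamma\,\mathbf{p}_{o_t}^{\top}(\mathbf{V}^{\star}_{t+1}-\mathbf{V}^{\mathcal{L}}_{t+1})$. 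The first two are bounded directly by Hoeffding's inequality: the first by $r_{\text{max}}\,\mathcal{H}(n_t,\sigma_t)$ since $\lVert\mathbf{r}\rVert_{\infty}=r_{\text{max}}$, and the second by $\gamma\,\lVert\mathbf{V}^{\mathcal{L}}_{t+1}\rVert_{\infty}\,\mathcal{H}(m_t,\sigma_t)\le \gamma\Gamma r_{\text{max}}\,\mathcal{H}(m_t,\sigma_t)$, where $\lVert\mathbf{V}^{\mathcal{L}}_{t+1}\rVert_{\infty}\le \Gamma r_{\text{max}}$ follows because $V^{\mathcal{L}}$ is a $\gamma$-discounted sum of at most $H-1$ rewards each of magnitude at most $r_{\text{max}}$ (a routine geometric-series bound, of the kind collected in \ref{appsub:aux_lemmas}).

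The value-recursion term is where the recursion closes. Since $\mathbf{p}_{o_t}$ is a probability vector, its inner product with $\mathbf{V}^{\star}_{t+1}-\mathbf{V}^{\mathcal{L}}_{t+1}$ is at most the largest-in-magnitude entry, namely $\max_{o_{t+1}}\bigl|\max_{a}Q^{\star}(\mathbf{b}_{t+1},a)-\max_{a}Q^{\mathcal{L}}(\mathbf{b}_{t+1},a)\bigr|$, where $\mathbf{b}_{t+1}$ denotes the exact belief update of $\mathbf{b}_t$ under $a_t$ and $o_{t+1}$. Lemma \ref{lemma:max-diff} converts each such entry into $\max_{a}\bigl|Q^{\star}(\mathbf{b}_{t+1},a)-Q^{\mathcal{L}}(\mathbf{b}_{t+1},a)\bigr|$, to which the induction hypothesis at horizon $H-1$ applies uniformly, yielding $r_{\text{max}}\sum_{k=0}^{H-2}\gamma^{k}\bigl(\mathcal{H}(n_{t+1+k},\sigma_{t+1+k})+\gamma\Gamma\,\mathcal{H}(m_{t+1+k},\sigma_{t+1+k})\bigr)+\gamma^{H-1}\Gamma r_{\text{max}}$. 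Collecting the three contributions, multiplying the recursive one by $\gamma$, re-indexing the inner sum by $k\mapsto k+1$ (which merges the root-level $k=0$ reward/observation terms with the shifted sum into a single $\sum_{k=0}^{H-1}$), and noting $\gamma\cdot\gamma^{H-1}\Gamma r_{\text{max}}=\gamma^{H}\Gamma r_{\text{max}}$ reproduces the claimed expression.

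The main obstacle I anticipate is the careful handling of the value-recursion term: one must correctly reduce the difference of action-maxima to a maximum of action-differences via Lemma \ref{lemma:max-diff}, invoke the induction hypothesis uniformly over all actions and all reachable next beliefs $\mathbf{b}_{t+1}$, and keep the probabilistic bookkeeping consistent, since each Hoeffding invocation holds only with probability at least $1-\sigma$ so the stated bound is understood on the intersection of all such events (the union bound being deferred). It is also worth being explicit that $\mathbf{b}_{t+1}$ in this lemma is the \emph{exact} belief update of the \emph{exact} $\mathbf{b}_t$, so no belief-estimation error enters here; that error is isolated in the separate term of Lemma \ref{lemma:error-split}.
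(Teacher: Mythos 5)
Your proposal is correct and follows essentially the same route as the paper's proof: the same insertion of $\pm\,\mathbf{p}_{o_t}^{\top}\mathbf{V}^{\mathcal{L}}_{t+1}$ to split off the reward- and observation-estimation terms, the same Hoeffding bounds with $\lVert\mathbf{V}^{\mathcal{L}}_{t+1}\rVert_{\infty}\le\Gamma r_{\text{max}}$, and the same reduction of the value-recursion term via Lemma \ref{lemma:max-diff} followed by unrolling to the horizon, which the paper phrases as ``recursive application'' with the stop case \eqref{eq:qfunc-approx-vector-stop} and you phrase as induction on $H$ with base case $\gamma^{H}\Gamma r_{\text{max}}$. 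The two presentations are equivalent, and your explicit remarks on the probabilistic bookkeeping and on $\mathbf{b}_{t+1}$ being the exact belief update match the paper's intent.
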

\begin{proof}
    Using the definitions for $Q^{\star}$ and $Q^{\mathcal{L}}$ from Equations \eqref{eq:qfunction-vector} and \eqref{eq:qfunc-approx-vector-rec}, respectively:
    \begin{align*}
        \begin{split}
            \left| Q^{\star}(\mathbf{b}_t, a_t) - Q^{\mathcal{L}}(\mathbf{b}_t, a_t) \right| \stackrel{(\ref{lemma:triangle-ineq})}{\le}& \; \left| \mathbf{r}^{\top} \mathbf{p}_{r_t} - \mathbf{r}^{\top} \mathbf{p}_{r_t,n_t} \right| + \gamma \left| \mathbf{p}_{o_t}^{\top} \mathbf{V}_{t+1}^{\star} - \mathbf{p}_{o_t,m_t}^{\top} \mathbf{V}^{\mathcal{L}}_{t+1} \right|
        \end{split}
    \end{align*}
    
    Identity $\mathbf{p}_{o_t} - \mathbf{p}_{o_t, m_t} \le \left| \mathbf{p}_{o_t} - \mathbf{p}_{o_t, m_t} \right| \Leftrightarrow - \mathbf{p}_{o_t, m_t} \le \left| \mathbf{p}_{o_t} - \mathbf{p}_{o_t, m_t} \right| - \mathbf{p}_{o_t}$ yields:
    \begin{equation*}
        \left| Q^{\star}(\mathbf{b}_t, a_t) - Q^{\mathcal{L}}(\mathbf{b}_t, a_t) \right| \stackrel{(\ref{lemma:triangle-ineq})}{\le} \; \left| \mathbf{r}^{\top} \mathbf{p}_{r_t} - \mathbf{r}^{\top} \mathbf{p}_{r_t,n_t} \right| + \gamma \left| \mathbf{p}_{o_t}^{\top} \mathbf{V}^{\mathcal{L}}_{t+1} - \mathbf{p}_{o_t, m_t}^{\top} \mathbf{V}^{\mathcal{L}}_{t+1} \right| + \gamma \left| \mathbf{p}_{o_t}^{\top} \mathbf{V}_{t+1}^{\star} - \mathbf{p}_{o_t}^{\top} \mathbf{V}^{\mathcal{L}}_{t+1} \right|
    \end{equation*}
    
    The terms $\left| \mathbf{r}^{\top} \mathbf{p}_{r_t} - \mathbf{r}^{\top} \mathbf{p}_{r_t,n_t} \right|$ and $\left| \mathbf{p}_{o_t}^{\top} \mathbf{V}^{\mathcal{L}}_{t+1} - \mathbf{p}_{o_t, m_t}^{\top} \mathbf{V}^{\mathcal{L}}_{t+1} \right|$ can be bounded through Hoeffding's inequality in Theorem \ref{theorem:hoeffding} and the fact that, by definition, $\lVert \mathbf{V}^{\mathcal{L}}_{t+1} \rVert_{\infty} \le Q_{\text{max}} = \Gamma r_{\text{max}}$. The other term can be decomposed into a sum:
    \begin{align*}
        \begin{split}
            \left| Q^{\star}(\mathbf{b}_t, a_t) - Q^{\mathcal{L}}(\mathbf{b}_t, a_t) \right| \stackrel{(\ref{theorem:hoeffding})}{\le}& \; r_{\text{max}} \left( \mathcal{H}(n_{t}, \sigma_{t}) + \gamma \Gamma \mathcal{H}(m_{t}, \sigma_{t}) \right)\\
        &+ \gamma \max_{a_{t+1} \in \mathcal{A}} \sum_{o_{t+1} \in \Omega} P(o_{t+1}|\mathbf{b}_t, a_t) \left| Q^{\star}(\mathbf{b}_{t+1}, a_{t+1}) - Q^{\mathcal{L}}(\mathbf{b}_{t+1}, a_{t+1}) \right|
        \end{split}
    \end{align*}
    
    Let $\mathbf{\mu}_{t+1}$ be a belief-state and $a_{t+1}^{\prime}$ an action such that 
    $$\left| Q^{\star}(\mathbf{\mu}_{t+1}, a_{t+1}^{\prime}) - Q^{\mathcal{L}}(\mathbf{\mu}_{t+1}, a_{t+1}^{\prime}) \right| = \max_{a_{t+1}, \mathbf{b}_{t+1}} \left| Q^{\star}(\mathbf{b}_{t+1}, a_{t+1}) - Q^{\mathcal{L}}(\mathbf{b}_{t+1}, a_{t+1}) \right|$$
    Then:
    \begin{align*}
        \begin{split}
            \left| Q^{\star}(\mathbf{b}_t, a_t) - Q^{\mathcal{L}}(\mathbf{b}_t, a_t) \right| \le& \; r_{\text{max}} \left( \mathcal{H}(n_{t}, \sigma_{t}) + \gamma \Gamma \mathcal{H}(m_{t}, \sigma_{t}) \right)\\
            &+ \gamma \left| Q^{\star}(\mathbf{\mu}_{t+1}, a_{t+1}^{\prime}) - Q^{\mathcal{L}}(\mathbf{\mu}_{t+1}, a_{t+1}^{\prime}) \right| \sum_{o_{t+1} \in \Omega} P(o_{t+1}|\mathbf{b}_t,a_t)\\
            =& \; r_{\text{max}} \left( \mathcal{H}(n_{t}, \sigma_{t}) + \gamma \Gamma \mathcal{H}(m_{t}, \sigma_{t}) \right)\\
            &+ \gamma \left| Q^{\star}(\mathbf{\mu}_{t+1}, a_{t+1}^{\prime}) - Q^{\mathcal{L}}(\mathbf{\mu}_{t+1}, a_{t+1}^{\prime}) \right|
        \end{split}
    \end{align*}

   Finally, with the recursive application of this inequality to $\left| Q^{\star}(\mathbf{\mu}_{t+1}, a_{t+1}^{\prime}) - Q^{\mathcal{L}}(\mathbf{\mu}_{t+1}, a_{t+1}^{\prime}) \right|$, with Equation \eqref{eq:qfunc-approx-vector-stop} as a stop case to the approximate action value function, the result of this lemma is attained:
    \begin{align*}
        \begin{split}
            \left| Q^{\star}(\mathbf{b}_t, a_t) - Q^{\mathcal{L}}(\mathbf{b}_t, a_t) \right| \le& \; r_{\text{max}} \sum_{k=0}^{H-1} \gamma^k \left( \mathcal{H}(n_{t+k}, \sigma_{t+k}) + \gamma \Gamma \mathcal{H}(m_{t+k}, \sigma_{t+k}) \right)\\
            &+ \gamma^H \left| Q^{\star}(\mathbf{\mu}_{t+H},a_{t+H}^{\prime}) - Q^{\mathcal{L}}(\mathbf{\mu}_{t+H},a_{t+H}^{\prime}) \right|\\
            \stackrel{(\eqref{eq:qfunc-approx-vector-stop})}{=}& \; r_{\text{max}} \sum_{k=0}^{H-1} \gamma^k \left( \mathcal{H}(n_{t+k}, \sigma_{t+k}) + \gamma \Gamma \mathcal{H}(m_{t+k}, \sigma_{t+k}) \right)\\
            &+ \gamma^H \left| Q^{\star}(\mathbf{\mu}_{t+H},a_{t+H}^{\star})\right|\\
            \le& \; r_{\text{max}} \sum_{k=0}^{H-1} \gamma^k \left( \mathcal{H}(n_{t+k}, \sigma_{t+k}) + \gamma \Gamma \mathcal{H}(m_{t+k}, \sigma_{t+k}) \right) + \gamma^{H} \Gamma r_{\text{max}}
        \end{split}
    \end{align*}
\end{proof}

Lemma \ref{lemma:true-error} gives a bound to the difference between the true value function and the value function calculated by the lookahead algorithm, both evaluated at the same action and belief state. Notice, however, that the bound holds for any choice of action and belief state, because it is independent of both these quantities. Thus, both $\left| Q^{\star}(\mathbf{b}_t,a_t) - Q^{\mathcal{L}}(\mathbf{b}_t, a_t) \right|$ and $\left| Q^{\star}(\mathbf{b}_{l_t},a^{\mathcal{L}}_t) - Q^{\mathcal{L}}(\mathbf{b}_{l_t},a^{\mathcal{L}}_t) \right|$ can be bounded by this lemma. 

Before moving on to the next lemma, let us first count the total number of Hoeffding inequalities that were used in the proof of Lemma \ref{lemma:true-error}. This count will be useful later on to calculate the confidence interval of the lookahead algorithm's bounds. The bound in Lemma \ref{lemma:true-error} was attained by splitting the error source $\left| Q^{\star}(\mathbf{b}_{t}, a_{t}) - Q^{\mathcal{L}}(\mathbf{b}_{t}, a_{t}) \right|$ into two terms, one where the Hoeffding bounds are applied ($\left| \mathbf{r}^{\top} \mathbf{p}_{r_t} - \mathbf{r}^{\top} \mathbf{p}_{r_t,n_t} \right| + \gamma \left| \mathbf{p}_{o_t}^{\top} \mathbf{V}^{\mathcal{L}}_{t+1} - \mathbf{p}_{o_t, m_t}^{\top} \mathbf{V}^{\mathcal{L}}_{t+1} \right|$) and a recursive term ($\gamma \left| \mathbf{p}_{o_t}^{\top} \mathbf{V}_{t+1}^{\star} - \mathbf{p}_{o_t}^{\top} \mathbf{V}^{\mathcal{L}}_{t+1} \right|$) where the same procedure is applied until a stop case is found. Let's call these two terms \textit{immediate} and \textit{recursive}, respectively. To bound the immediate term, $2$ Hoeffding inequalities were used. For the first time-step in the recursive term, $2\mathcal{A}\Omega$ Hoeffding inequalities are needed ($2$ Hoeffding inequalities as per the immediate term, but for every action and every observation). As such, a total of $2\sum_{k=0}^{H-1}\mathcal{A}^k\Omega^k$ Hoeffding inequalities are used to derive the bound in Lemma \ref{lemma:true-error}.

To bound the full lookahead error, only the error of approximating the belief state remains to be bounded. As mentioned above, the belief states for the lookahead algorithm are approximated in a different manner from the reward and observation distributions. The only true belief state that is known to the lookahead algorithm is the prior $\mathbf{b}_0$, and therefore, the belief estimations $\mathbf{b}_{l_t}$ are calculated by performing a belief update on the previous estimation $\mathbf{b}_{l_{t-1}}$, yielding the distribution $\mathbf{p}_t \in \Delta_{\mathcal{S}}$ with entries $P(s_t|o_t,a_{t-1},\mathbf{b}_{l_{t-1}})$, and extracting $l_t$ samples from it, creating an empirical estimation distribution $\mathbf{p}_{t,l_t} = \mathbf{b}_{l_t} \in \Delta_{\mathcal{S}}$ with entries $P_{l_t}(s_t|o_t,a_{t-1},\mathbf{b}_{l_{t-1}})$. As such, Hoeffding's inequality can not be applied to the distributions $\mathbf{b}_t$ and $\mathbf{b}_{l_t}$, because they are not direct estimations of one another. It can be applied, however, to $\mathbf{p}_t$ and $\mathbf{p}_{t,l_t}$, since the latter approximates the former distribution via sampling, as stated above. 

Lemma \ref{lemma:belief-error} uses an indirect approach to applying the Hoeffding inequalities and still be able to bound the error between the true belief state $\mathbf{b}_t$ and the approximate belief state $\mathbf{b}_{l_t}$ of the algorithm at time $t$:
\begin{lemma}
    Given a distribution $\mathbf{p}_{s_t}$ from a family of distributions $P(\cdot|s_t,\cdot)$ conditioned on the state $s_t \in \mathcal{S}$ \footnote{In the results of this lemma and the lemmas and theorems that follow, the notation $\mathcal{X}$ for a set $\mathcal{X}$ can have two meanings, that depend on the context of use: it can refer to the set itself, or to the size of the set $\left| \mathcal{X} \right|$.}, the following inequality bounds the error of approximating the belief-state at time $t$, using $l_k$ samples and a confidence interval of $1 - \sigma_k$ to estimate it at time-step $k$:
    
    \begin{equation*}
        \left| \mathbf{p}_{s_t}^{\top} \mathbf{b}_t - \mathbf{p}_{s_t}^{\top} \mathbf{b}_{l_t} \right| \le \mathcal{S}^t 
        \sum_{k=0}^{t} \frac{1}{\mathcal{S}^k} \mathcal{H}(l_{k}, \sigma_{k})
    \end{equation*}
    \label{lemma:belief-error}
\end{lemma}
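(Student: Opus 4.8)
The plan is to prove the bound by induction on $t$, but to carry through the stronger statement that it holds \emph{simultaneously for every test vector with entries in $[0,1]$}, not merely for the one distinguished member of the family appearing in the statement. This uniformity is what lets the induction close: the inductive step will "pull back" $\mathbf{p}_{s_t}$ through one belief update and feed new functionals to the previous time step, so the inductive hypothesis must already cover those. Throughout I work on the event — of probability at least the stated confidence — on which all invoked instances of Hoeffding's inequality hold, deferring the count of how many such instances are used (and hence the precise aggregation of the $\sigma_k$'s) to the downstream confidence-interval bookkeeping, exactly as is done after Lemma \ref{lemma:true-error}. Write $e_t$ for the left-hand side maximized over admissible test vectors; the goal is the recursion $e_t \le \mathcal{H}(l_t,\sigma_t) + \mathcal{S}\, e_{t-1}$, which unrolls to $e_t \le \sum_{k=0}^{t}\mathcal{S}^{t-k}\mathcal{H}(l_k,\sigma_k) = \mathcal{S}^t\sum_{k=0}^{t}\mathcal{S}^{-k}\mathcal{H}(l_k,\sigma_k)$.

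For the base case $t=0$: either the prior $\mathbf{b}_0$ is known exactly, in which case $\mathbf{b}_{l_0}=\mathbf{b}_0$ and the left-hand side vanishes, or $\mathbf{b}_{l_0}$ is the empirical distribution of $l_0$ i.i.d.\ draws from $\mathbf{b}_0$, in which case Theorem \ref{theorem:hoeffding} with test vector $\mathbf{p}_{s_0}$ and $\lVert\mathbf{p}_{s_0}\rVert_\infty\le 1$ gives $\bigl|\mathbf{p}_{s_0}^\top\mathbf{b}_0-\mathbf{p}_{s_0}^\top\mathbf{b}_{l_0}\bigr|\le\mathcal{H}(l_0,\sigma_0)$. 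Either way the claim at $t=0$ holds.

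For the inductive step, let $\mathbf{p}_t=\tau(\mathbf{b}_{l_{t-1}},a_{t-1},o_t)$ be the \emph{exact} one-step update of the approximate previous belief, while $\mathbf{b}_t=\tau(\mathbf{b}_{t-1},a_{t-1},o_t)$; since $\mathbf{b}_{l_t}$ is the empirical distribution of $l_t$ samples from $\mathbf{p}_t$, Lemma \ref{lemma:triangle-ineq} gives
\begin{equation*}
\bigl|\mathbf{p}_{s_t}^\top\mathbf{b}_t-\mathbf{p}_{s_t}^\top\mathbf{b}_{l_t}\bigr|\le\bigl|\mathbf{p}_{s_t}^\top\mathbf{b}_t-\mathbf{p}_{s_t}^\top\mathbf{p}_t\bigr|+\bigl|\mathbf{p}_{s_t}^\top\mathbf{p}_t-\mathbf{p}_{s_t}^\top\mathbf{b}_{l_t}\bigr|.
\end{equation*}
The second term is bounded by $\mathcal{H}(l_t,\sigma_t)$ directly via Theorem \ref{theorem:hoeffding} (using $\lVert\mathbf{p}_{s_t}\rVert_\infty\le1$). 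For the first term, $\mathbf{p}_{s_t}^\top\mathbf{b}_t$ and $\mathbf{p}_{s_t}^\top\mathbf{p}_t$ are the \emph{same} function of the preceding belief — the one obtained by plugging the belief update rule (\ref{eq:belief-update}) into $\mathbf{p}_{s_t}^\top(\cdot)$, with $a_{t-1}$ and $o_t$ fixed — whose numerator and normalizing denominator are both linear in the prior. Hence their difference is a single bounded linear functional of $\mathbf{b}_{t-1}-\mathbf{b}_{l_{t-1}}$; bounding it by $\mathcal{S}\max_{s}\bigl|\mathbf{b}_{t-1}(s)-\mathbf{b}_{l_{t-1}}(s)\bigr|$ (passing through $\lVert\cdot\rVert_1\le\mathcal{S}\lVert\cdot\rVert_\infty$) and then applying the inductive hypothesis to each coordinate functional $\mathbf{e}_s$ — a point mass, hence admissible — gives $\bigl|\mathbf{p}_{s_t}^\top\mathbf{b}_t-\mathbf{p}_{s_t}^\top\mathbf{p}_t\bigr|\le\mathcal{S}\cdot\mathcal{S}^{t-1}\sum_{k=0}^{t-1}\mathcal{S}^{-k}\mathcal{H}(l_k,\sigma_k)$. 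Summing the two contributions yields $\mathcal{H}(l_t,\sigma_t)+\mathcal{S}^{t}\sum_{k=0}^{t-1}\mathcal{S}^{-k}\mathcal{H}(l_k,\sigma_k)=\mathcal{S}^t\sum_{k=0}^{t}\mathcal{S}^{-k}\mathcal{H}(l_k,\sigma_k)$, as claimed.

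The main obstacle is this first-term estimate. The belief update is \emph{not} linear — its normalizing denominator is the observation likelihood $P(o_t\mid\mathbf{b}_{\,\cdot},a_{t-1})$, which can be small — so a naive estimate would amplify the propagated error by a factor $1/P(o_t\mid\cdot)$ rather than the clean $\mathcal{S}$ appearing in the statement. Closing this gap requires checking that the pullback of $\mathbf{p}_{s_t}$ through one step of $\tau$ (both the likelihood-weighted transition map and the smoothing implicit in conditioning on $o_t$) has only $[0,1]$-valued coefficients \emph{once expressed relative to the normalized posteriors} $\mathbf{b}_t,\mathbf{p}_t$, so that the only amplification left is the crude $\lVert\cdot\rVert_1\le\mathcal{S}\lVert\cdot\rVert_\infty$ step governed purely by $\mathcal{S}=\lvert\mathcal{S}\rvert$. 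Pinning the constant to exactly $\mathcal{S}$ (rather than, say, $2\mathcal{S}$) is the delicate part; the remainder is just the triangle inequality, Hoeffding's bound, and unrolling the recursion.
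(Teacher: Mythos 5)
Your decomposition is the same as the paper's: split the one-step error into the fresh sampling error between the exact update $\mathbf{p}_t$ of the approximate prior and its empirical estimate $\mathbf{b}_{l_t}$ (bounded via Theorem \ref{theorem:hoeffding} with a test vector bounded by $1$), plus the propagated error between the exact updates of $\mathbf{b}_{t-1}$ and $\mathbf{b}_{l_{t-1}}$, which picks up a factor $\mathcal{S}$ from summing $\mathcal{S}$ coefficients each at most $1$ and then recurses; unrolling gives $\sum_{k=0}^{t}\mathcal{S}^{t-k}\mathcal{H}(l_k,\sigma_k)$, exactly the paper's chain of inequalities. Your induction with a hypothesis strengthened to all $[0,1]$-valued test vectors is a reorganized presentation of the paper's direct recursive unrolling, and your deferral of the $\sigma_k$ bookkeeping matches how the paper handles confidence intervals separately in Lemma \ref{lemma:confidence-bound}.

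However, the step you explicitly leave open --- verifying that the pullback of $\mathbf{p}_{s_t}$ through one application of $\tau$ has $[0,1]$-valued coefficients, so that the nonlinear normalization in \eqref{eq:belief-update} does not inject a $1/P(o_t\mid\cdot)$ amplification --- is precisely where the nontrivial content of the lemma lives, and you do not close it. The paper closes it by fiat: its proof identifies $P(s_t\mid o_t,a_{t-1},\mathbf{b}_{t-1}) - P(s_t\mid o_t,a_{t-1},\mathbf{b}_{l_{t-1}})$ with $\mathbf{p}_{s_{t-1}}^{\top}\mathbf{b}_{t-1} - \mathbf{p}_{s_{t-1}}^{\top}\mathbf{b}_{l_{t-1}}$ for a member $\mathbf{p}_{s_{t-1}}$ of the conditional family $P(\cdot\mid s_{t-1},\cdot)$, i.e., it treats the posterior at each $s_t$ as a fixed linear functional of the prior belief with conditional-probability coefficients --- exactly the linearity you correctly flag as unproven, since the normalizing denominator of the belief update also depends on the prior. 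So your reconstruction matches the paper's argument, including its weak point, and is arguably more candid about it; but as a standalone proof yours is incomplete at the one step you yourself identify as delicate, and completing it would require either the linearization the paper silently assumes or a genuinely different argument controlling the normalization constant.
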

\begin{proof}
    \begin{equation*}
        \left| \mathbf{p}_{s_t}^{\top} \mathbf{b}_t - \mathbf{p}_{s_t}^{\top} \mathbf{b}_{l_t} \right| = \left| \sum_{s_t \in \mathcal{S}} P(\cdot|s_t,\cdot) \left( P(s_t|o_t, a_{t-1}, \mathbf{b}_{t-1}) - P_{l_t}(s_t|o_t, a_{t-1}, \mathbf{b}_{l_{t-1}}) \right) \right|
    \end{equation*}
    
	Notice that the difference $P(s_t|o_t, a_{t-1}, \mathbf{b}_{t-1}) - P_{l_t}(s_t|o_t, a_{t-1}, \mathbf{b}_{l_{t-1}})$ in the equation above is between two distributions conditioned on different belief states. As such, Hoeffding's inequality can't be directly applied. To overcome this issue, the triangle inequality 
    
    $$P(s_t|o_t, a_{t-1}, \mathbf{b}_{t-1}) \le P(s_t|o_t, a_{t-1}, \mathbf{b}_{l_{t-1}}) + \left| P(s_t|o_t, a_{t-1}, \mathbf{b}_{t-1}) - P(s_t|o_t, a_{t-1}, \mathbf{b}_{l_{t-1}}) \right| $$
    $$ = P(s_t|o_t, a_{t-1}, \mathbf{b}_{l_{t-1}}) + \left| \mathbf{p}_{s_{t-1}}^{\top} \mathbf{b}_{t-1} - \mathbf{p}_{s_{t-1}}^{\top} \mathbf{b}_{l_{t-1}} \right|$$ 
    
    \noindent is used, allowing the expression above to be re-written as:
    \begin{align*}
        \begin{split}
            \left| \mathbf{p}_{s_t}^{\top} \mathbf{b}_t - \mathbf{p}_{s_t}^{\top} \mathbf{b}_{l_t} \right| \le& \; \sum_{s_t \in \mathcal{S}} P(\cdot|s_t, \cdot) \left( P(s_t|o_t,a_{t-1},\mathbf{b}_{l_{t-1}}) - P_{l_t}(s_t|o_t,a_{t-1},\mathbf{b}_{l_{t-1}}) + \left| \mathbf{p}_{s_{t-1}}^{\top} \mathbf{b}_{t-1} - \mathbf{p}_{s_{t-1}}^{\top} \mathbf{b}_{l_{t-1}} \right| \right)\\
            \le& \; \left| \mathbf{p}_{s_t}^{\top} \mathbf{p}_t - \mathbf{p}_{s_t}^{\top} \mathbf{p}_{t, l_t} \right| + \mathcal{S} \left| \mathbf{p}_{s_{t-1}}^{\top} \mathbf{b}_{t-1} - \mathbf{p}_{s_{t-1}}^{\top} \mathbf{b}_{l_{t-1}} \right|
        \end{split}
    \end{align*}
    
    Recursively applying the inequality above and using Hoeffding's inequality:
    \begin{align*}
        \begin{split}
            \left| \mathbf{p}_{s_t}^{\top} \mathbf{b}_t - \mathbf{p}_{s_t}^{\top} \mathbf{b}_{l_t} \right| \le& \; \sum_{k=0}^t \mathcal{S}^{t-k} \left| \mathbf{p}_{s_k}^{\top} \mathbf{p}_k - \mathbf{p}_{s_k}^{\top} \mathbf{p}_{k,l_k} \right|\\ \stackrel{(\ref{theorem:hoeffding})}{\le}& \; \mathcal{S}^t
            \sum_{k=0}^{t} \frac{1}{\mathcal{S}^k} \mathcal{H}(l_{k}, \sigma_{k})
        \end{split}
    \end{align*}
\end{proof}

This lemma states that the belief-approximation error bound is cumulative over time. This is intuitive, given that the approximation of a belief state at time-step $t$ is attained by estimating over the previous estimation. Therefore, estimation errors pile up as time progresses.

A few more bounds are useful to derive the full upper bound of the belief error. Due to the fact that an inaccurate belief state is used throughout every time step of the algorithm, not only does this cause errors in the algorithm's belief states themselves, but also in calculating other probability distributions that are conditioned on the belief state (because they should be conditioned on the true belief-state rather than an estimation of it). This implies that both reward sampling and observation sampling also suffer from this belief-state approximation error, and so these error contributions should be accounted for. Lemma \ref{lemma:belief-reward-error} bounds the reward sampling error caused by this belief-approximation:
\begin{lemma}
    Let $\mathbf{p} \in \Delta_{\mathcal{R}}$ represent the probability distribution $P(r_{t+1}|\mathbf{b}_t)$ and $\hat{\mathbf{p}} \in \Delta_{\mathcal{R}}$ represent distribution $P(r_{t+1}|\mathbf{b}_{l_t})$. Also, let $\mathbf{p}_{n_t}$ and $\hat{\mathbf{p}}_{n_t}$ be their respective estimations using $n_t$ samples. Then, the following inequality holds:
    \begin{equation*}
        \left| \mathbf{p}^{\top}_{n_t} \mathbf{r} - \hat{\mathbf{p}}^{\top}_{n_t} \mathbf{r} \right| \le r_{\text{max}} \left(2 \mathcal{H}(n_{t}, \sigma_{t}) + \mathcal{R} \mathcal{S}^t \sum_{k=0}^t \frac{1}{\mathcal{S}^k} \mathcal{H}(l_{k}, \sigma_{k}) \right)
    \end{equation*}
    \label{lemma:belief-reward-error}
\end{lemma}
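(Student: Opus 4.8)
The plan is to bound the left-hand side by inserting the two exact (infinite-sample) belief-conditioned reward distributions $\mathbf{p}$ and $\hat{\mathbf{p}}$ and applying the triangle inequality:
\[
  \bigl|\mathbf{p}^{\top}_{n_t}\mathbf{r} - \hat{\mathbf{p}}^{\top}_{n_t}\mathbf{r}\bigr|
  \le \bigl|\mathbf{p}^{\top}_{n_t}\mathbf{r} - \mathbf{p}^{\top}\mathbf{r}\bigr|
  + \bigl|\mathbf{p}^{\top}\mathbf{r} - \hat{\mathbf{p}}^{\top}\mathbf{r}\bigr|
  + \bigl|\hat{\mathbf{p}}^{\top}\mathbf{r} - \hat{\mathbf{p}}^{\top}_{n_t}\mathbf{r}\bigr|.
\]
The first and third terms are sampling errors of the fixed vector $\mathbf{r}$ against the empirical reweighting produced by $n_t$ i.i.d.\ samples, so Hoeffding's inequality (Theorem \ref{theorem:hoeffding}) together with $\lVert\mathbf{r}\rVert_\infty = r_{\text{max}}$ bounds each of them by $r_{\text{max}}\mathcal{H}(n_t,\sigma_t)$; their sum contributes the $2r_{\text{max}}\mathcal{H}(n_t,\sigma_t)$ term. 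Note that this uses two Hoeffding inequalities, which should be tallied for the later confidence-interval bookkeeping.

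For the middle term I would expand the reward distributions over states. Since $P(r_{t+1}\mid\mathbf{b}_t) = \sum_{s\in\mathcal{S}} P(r_{t+1}\mid s,a_t)\,b_t(s)$ and analogously for $\mathbf{b}_{l_t}$, for each fixed value $r\in\mathcal{R}$ the difference $P(r\mid\mathbf{b}_t) - P(r\mid\mathbf{b}_{l_t})$ equals $\mathbf{p}_{s_t}^{\top}\mathbf{b}_t - \mathbf{p}_{s_t}^{\top}\mathbf{b}_{l_t}$ with $\mathbf{p}_{s_t}$ the state-indexed vector $\bigl(P(r\mid s,a_t)\bigr)_{s\in\mathcal{S}}$ — precisely the quantity bounded by Lemma \ref{lemma:belief-error}, giving $\bigl|P(r\mid\mathbf{b}_t) - P(r\mid\mathbf{b}_{l_t})\bigr| \le \mathcal{S}^t\sum_{k=0}^t \mathcal{S}^{-k}\mathcal{H}(l_k,\sigma_k)$. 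Then
\[
  \bigl|\mathbf{p}^{\top}\mathbf{r} - \hat{\mathbf{p}}^{\top}\mathbf{r}\bigr|
  = \Bigl|\sum_{r\in\mathcal{R}} r\bigl(P(r\mid\mathbf{b}_t) - P(r\mid\mathbf{b}_{l_t})\bigr)\Bigr|
  \le r_{\text{max}}\sum_{r\in\mathcal{R}}\bigl|P(r\mid\mathbf{b}_t) - P(r\mid\mathbf{b}_{l_t})\bigr|
  \le r_{\text{max}}\,\mathcal{R}\,\mathcal{S}^t\sum_{k=0}^t \mathcal{S}^{-k}\mathcal{H}(l_k,\sigma_k),
\]
the factor $\mathcal{R}$ arising from summing the per-reward bound over all $\left|\mathcal{R}\right|$ values. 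Combining the three pieces and factoring out $r_{\text{max}}$ yields exactly the claimed inequality.

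The main point requiring care — more a bookkeeping subtlety than a genuine obstacle — is the identification of the middle-term difference with an instance of Lemma \ref{lemma:belief-error}: that lemma must be invoked once per reward value, with the common state-conditioned family $P(\cdot\mid s,a_t)$, and only afterwards summed over $\mathcal{R}$, which is where the (perhaps surprising) multiplicative $\mathcal{R}$ enters. One must also keep the confidence parameters aligned: the two new Hoeffding bounds use $\sigma_t$ at the current step, while the belief-error term inherits the telescoping parameters $\sigma_0,\dots,\sigma_t$ already present inside Lemma \ref{lemma:belief-error}, so when these guarantees are later unioned across the whole look-ahead tree one counts $2$ fresh Hoeffding inequalities here on top of those already accounted for in Lemma \ref{lemma:belief-error}.
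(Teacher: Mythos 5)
Your proof is correct and follows essentially the same route as the paper's: the same three-term triangle-inequality decomposition inserting $\mathbf{p}^{\top}\mathbf{r}$ and $\hat{\mathbf{p}}^{\top}\mathbf{r}$, Hoeffding's inequality for the two sampling terms giving $2r_{\text{max}}\mathcal{H}(n_t,\sigma_t)$, and Lemma \ref{lemma:belief-error} applied per reward value to the cross term, with the factor $\mathcal{R}$ arising from the sum over reward values. Your added bookkeeping of the two fresh Hoeffding invocations also matches the paper's subsequent confidence-interval accounting.
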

\begin{proof}
    \begin{align*}
        \begin{split}
            \left| \mathbf{p}^{\top}_{n_t} \mathbf{r} - \hat{\mathbf{p}}^{\top}_{n_t} \mathbf{r} \right| \le& \; \left| \left( \mathbf{p}^{\top} + \left| \mathbf{p}^{\top} - \mathbf{p}^{\top}_{n_t} \right| \right) \mathbf{r} + \left( - \hat{\mathbf{p}}^{\top} + \left| \hat{\mathbf{p}}^{\top} - \hat{\mathbf{p}}^{\top}_{n_t} \right| \right) \mathbf{r} \right|\\
            \stackrel{(\ref{lemma:triangle-ineq})}{\le}& \left| \mathbf{p}^{\top} \mathbf{r} - \mathbf{p}^{\top}_{n_t} \mathbf{r} \right| + \left| \hat{\mathbf{p}}^{\top} \mathbf{r} - \hat{\mathbf{p}}^{\top}_{n_t} \mathbf{r} \right| + \left| \mathbf{p}^{\top} \mathbf{r} - \hat{\mathbf{p}}^{\top} \mathbf{r} \right|\\ \stackrel{(\ref{theorem:hoeffding})}{\le}& \; 2 r_{\text{max}} \mathcal{H}(n_{t}, \sigma_{t}) + \sum_{r_t \in \mathcal{R}} r_t \left| \mathbf{p}_{s_t}^{\top} \mathbf{b}_t - \mathbf{p}_{s_t}^{\top} \mathbf{b}_{l_t} \right|\\ \stackrel{(\ref{lemma:belief-error})}{\le}& \; r_{\text{max}} \left(2 \mathcal{H}(n_{t}, \sigma_{t}) + \mathcal{R} \mathcal{S}^t \sum_{k=0}^t \frac{1}{\mathcal{S}^k} \mathcal{H}(l_{k}, \sigma_{k}) \right)
        \end{split}
    \end{align*}
\end{proof}

Analogously to the Lemma \ref{lemma:belief-reward-error}, Lemma \ref{lemma:belief-observation-error} bounds the observation sampling error due to the belief-state approximation:
\begin{lemma}
    Let $\mathbf{p}_{o_t} \in \Delta_{\Omega}$ be a probability distribution $P(o_{t+1}|\mathbf{b}_t,a_t)$ and $\hat{\mathbf{p}}_{o_t} \in \Delta_{\Omega}$ be $P(o_{t+1}|\mathbf{b}_{l_t}, a_t)$, such that $\mathbf{p}_{o_t,m_t}$ and $\hat{\mathbf{p}}_{o_t,m_t}$ are their empirical estimations with $m_t$ samples. Also, let $\mathbf{V} \in \mathbb{R}^{\left|\Omega\right|}$ be a vector of value functions with entries $V(\mathbf{b}_t)$ and $\mathbf{V}^{\mathcal{L}}$ a similar vector with entries $V^{\mathcal{L}}(\mathbf{b}_{l_t})$. Then, the following inequality holds:
    \begin{equation*}
        \left| \mathbf{p}^{\top}_{o_t,m_t} \mathbf{V} - \hat{\mathbf{p}}^{\top}_{o_t,m_t} \mathbf{V}^{\mathcal{L}} \right| \le \Gamma r_{\text{max}} \left( 2 \mathcal{H}(m_{t}, \sigma_{t}) + \Omega\mathcal{S}^t\sum_{k=0}^t \frac{1}{\mathcal{S}^k} \mathcal{H}(l_{k}, \sigma_{k}) \right) + \left| \hat{\mathbf{p}}^{\top}_{o_t,m_t} \mathbf{V} - \hat{\mathbf{p}}^{\top}_{o_t,m_t} \mathbf{V}^{\mathcal{L}} \right|
    \end{equation*}
    \label{lemma:belief-observation-error}
\end{lemma}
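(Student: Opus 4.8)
The plan is to follow the same peeling strategy used in the proof of Lemma~\ref{lemma:belief-reward-error}: isolate the error sources one at a time with the triangle inequality (Lemma~\ref{lemma:triangle-ineq}), then control the sampling errors with Hoeffding (Theorem~\ref{theorem:hoeffding}) and the belief-conditioned terms with the cumulative bound of Lemma~\ref{lemma:belief-error}. First I would split off the value-function mismatch by inserting $\hat{\mathbf{p}}_{o_t,m_t}^{\top}\mathbf{V}$, obtaining
\[
\left| \mathbf{p}^{\top}_{o_t,m_t} \mathbf{V} - \hat{\mathbf{p}}^{\top}_{o_t,m_t} \mathbf{V}^{\mathcal{L}} \right|
\le \left| \mathbf{p}^{\top}_{o_t,m_t} \mathbf{V} - \hat{\mathbf{p}}^{\top}_{o_t,m_t} \mathbf{V} \right|
  + \left| \hat{\mathbf{p}}^{\top}_{o_t,m_t} \mathbf{V} - \hat{\mathbf{p}}^{\top}_{o_t,m_t} \mathbf{V}^{\mathcal{L}} \right|.
\]
The second summand is exactly the residual kept explicit in the statement; it carries the recursive value-function difference (at the updated, observation-dependent belief states), which is resolved elsewhere in the analysis, so no further work is needed on it here. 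The task thus reduces to bounding the first summand, in which a single fixed vector $\mathbf{V}$ is contracted against the two empirical observation distributions.

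For that summand I would reuse the identity from Lemma~\ref{lemma:belief-reward-error}: applying $\mathbf{p}_{o_t,m_t} \le \mathbf{p}_{o_t} + \lvert \mathbf{p}_{o_t} - \mathbf{p}_{o_t,m_t}\rvert$ together with $-\hat{\mathbf{p}}_{o_t,m_t} \le -\hat{\mathbf{p}}_{o_t} + \lvert \hat{\mathbf{p}}_{o_t} - \hat{\mathbf{p}}_{o_t,m_t}\rvert$ bounds it by two sampling errors, $\lvert \mathbf{p}_{o_t}^{\top}\mathbf{V} - \mathbf{p}_{o_t,m_t}^{\top}\mathbf{V}\rvert$ and $\lvert \hat{\mathbf{p}}_{o_t}^{\top}\mathbf{V} - \hat{\mathbf{p}}_{o_t,m_t}^{\top}\mathbf{V}\rvert$, plus a belief-mismatch term $\lvert \mathbf{p}_{o_t}^{\top}\mathbf{V} - \hat{\mathbf{p}}_{o_t}^{\top}\mathbf{V}\rvert$. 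Each sampling error is at most $\lVert\mathbf{V}\rVert_{\infty}\,\mathcal{H}(m_t,\sigma_t)$ by Hoeffding, and since $\lVert\mathbf{V}\rVert_{\infty}\le Q_{\text{max}}=\Gamma r_{\text{max}}$, the two of them produce the $2\Gamma r_{\text{max}}\mathcal{H}(m_t,\sigma_t)$ contribution.

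It then remains to bound $\lvert \mathbf{p}_{o_t}^{\top}\mathbf{V} - \hat{\mathbf{p}}_{o_t}^{\top}\mathbf{V}\rvert$, which I would bound above by $\lVert\mathbf{V}\rVert_{\infty}\sum_{o_{t+1}\in\Omega}\lvert P(o_{t+1}\mid\mathbf{b}_t,a_t)-P(o_{t+1}\mid\mathbf{b}_{l_t},a_t)\rvert$. The crux — and the step I expect to require the most care — is that, unlike in the reward case, an observation is emitted from the next state $S_{t+1}$ rather than from $S_t$, so one must marginalise the transition dynamics to recover a quantity conditioned on $S_t$:
\[
P(o_{t+1}\mid\mathbf{b}_t,a_t)=\sum_{s_t} b_t(s_t)\,\Big[\sum_{s_{t+1}}P(o_{t+1}\mid s_{t+1},a_t)\,P(s_{t+1}\mid s_t,a_t)\Big],
\]
which is of the form $\mathbf{p}_{s_t}^{\top}\mathbf{b}_t$ for the state-conditioned family $\mathbf{p}_{s_t}$ to which Lemma~\ref{lemma:belief-error} applies (and the same with $\mathbf{b}_{l_t}$). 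Hence each term in the $\Omega$-sum is at most $\mathcal{S}^{t}\sum_{k=0}^{t}\mathcal{S}^{-k}\mathcal{H}(l_k,\sigma_k)$; summing over the $\Omega$ observations introduces the factor $\Omega$, and multiplying by $\lVert\mathbf{V}\rVert_{\infty}\le\Gamma r_{\text{max}}$ gives the $\Gamma r_{\text{max}}\,\Omega\,\mathcal{S}^{t}\sum_{k=0}^{t}\mathcal{S}^{-k}\mathcal{H}(l_k,\sigma_k)$ term. Collecting the three pieces reproduces the stated inequality. For the subsequent confidence-interval bookkeeping I would also note that this step adds only the two Hoeffding events used for the observation sampling; the belief-estimation events invoked through Lemma~\ref{lemma:belief-error} are those already counted for the belief updates and are not duplicated across observations.
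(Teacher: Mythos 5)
Your proposal is correct and follows essentially the same route as the paper's proof: the same three-way decomposition into two Hoeffding-controlled sampling errors, a belief-mismatch term handled via Lemma~\ref{lemma:belief-error} after recognising $P(o_{t+1}\mid\mathbf{b}_t,a_t)$ as a state-conditioned contraction $\mathbf{p}_{s_t}^{\top}\mathbf{b}_t$, and the explicit residual carrying the value-function difference. The only (cosmetic) difference is that you peel off the value-function mismatch first, which has the minor advantage of landing the residual directly on the empirical distribution $\hat{\mathbf{p}}_{o_t,m_t}$ as in the statement, whereas the paper's intermediate steps carry $\hat{\mathbf{p}}_{o_t}$ and swap to the empirical version only in the final line.
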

\begin{proof}
    \begin{align*}
        \begin{split}
            \left| \mathbf{p}^{\top}_{o_t,m_t} \mathbf{V} - \hat{\mathbf{p}}^{\top}_{o_t,m_t} \mathbf{V}^{\mathcal{L}} \right| \le& \; \left| \left( \mathbf{p}_{o_t}^{\top} + \left| \mathbf{p}_{o_t}^{\top} - \mathbf{p}_{o_t,m_t}^{\top} \right| \right) \mathbf{V} + \left( - \hat{\mathbf{p}}_{o_t}^{\top} + \left| \hat{\mathbf{p}}_{o_t}^{\top} - \hat{\mathbf{p}}_{o_t,m_t}^{\top} \right| \right) \mathbf{V}^{\mathcal{L}} \right|\\
            \stackrel{(\ref{lemma:triangle-ineq})}{\le}& \; \left| \mathbf{p}_{o_t}^{\top} \mathbf{V} - \mathbf{p}_{o_t,m_t}^{\top} \mathbf{V} \right| + \left| \hat{\mathbf{p}}_{o_t}^{\top} \mathbf{V}^{\mathcal{L}} - \hat{\mathbf{p}}_{o_t,m_t}^{\top} \mathbf{V}^{\mathcal{L}} \right| + \left| \mathbf{p}_{o_t}^{\top} \mathbf{V} - \hat{\mathbf{p}}_{o_t}^{\top} \mathbf{V}^{\mathcal{L}} \right|\\
            \stackrel{(\ref{theorem:hoeffding})}{\le}& \; 2 \Gamma r_{\text{max}} \mathcal{H}(m_{t}, \sigma_{t}) + \left| \mathbf{p}_{o_t}^{\top} \mathbf{V} - \hat{\mathbf{p}}_{o_t}^{\top} \mathbf{V} \right| + \left| \hat{\mathbf{p}}_{o_t}^{\top} \mathbf{V} - \hat{\mathbf{p}}_{o_t}^{\top} \mathbf{V}^{\mathcal{L}} \right|\\
            =& \; 2 \Gamma r_{\text{max}} \mathcal{H}(m_{t}, \sigma_{t}) + \sum_{o_{t+1} \in \Omega} V^{\mathcal{L}}(\mathbf{b}_{t+1}) \left| \mathbf{p}_{s_t}^{\top} \mathbf{b}_{t} - \mathbf{p}_{s_t}^{\top} \mathbf{b}_{l_t} \right| + \left| \hat{\mathbf{p}}_{o_t}^{\top} \mathbf{V} - \hat{\mathbf{p}}_{o_t}^{\top} \mathbf{V}^{\mathcal{L}} \right|\\
            \le& \; 2 \Gamma r_{\text{max}} \mathcal{H}(m_{t}, \sigma_{t}) + \Gamma r_{\text{max}} \Omega \left| \mathbf{p}_{s_t}^{\top} \mathbf{b}_{t} - \mathbf{p}_{s_t}^{\top} \mathbf{b}_{l_t} \right| + \left| \hat{\mathbf{p}}_{o_t}^{\top} \mathbf{V} - \hat{\mathbf{p}}_{o_t}^{\top} \mathbf{V}^{\mathcal{L}} \right|\\
            \stackrel{(\ref{lemma:belief-error})}{\le}& \; \Gamma r_{\text{max}} \left( 2\mathcal{H}(m_{t}, \sigma_{t}) + \Omega\mathcal{S}^t\sum_{k=0}^t \frac{1}{\mathcal{S}^k} \mathcal{H}(l_{k}, \sigma_{k})\right) + \left| \hat{\mathbf{p}}^{\top}_{o_t,m_t} \mathbf{V} - \hat{\mathbf{p}}^{\top}_{o_t,m_t} \mathbf{V}^{\mathcal{L}} \right|
        \end{split}
    \end{align*}
    
    The final recursive term $\left| \hat{\mathbf{p}}^{\top}_{o_t,m_t} \mathbf{V} - \hat{\mathbf{p}}^{\top}_{o_t,m_t} \mathbf{V}^{\mathcal{L}} \right|$ is expanded and used to give an upper bound to the full belief error in Lemma \ref{lemma:lookahead-belief-error}.
\end{proof}

Finally, using the results of Lemmas \ref{lemma:belief-reward-error} and \ref{lemma:belief-observation-error}, Lemma \ref{lemma:lookahead-belief-error} provides an upper bound to the belief approximation error:
\begin{lemma}
    The belief error $\left| Q^{\mathcal{L}}(\mathbf{b}_t, a_t) - Q^{\mathcal{L}}(\mathbf{b}_{l_t},a_t) \right|$ is bounded by the following expression \footnote{In this bound, and throughout the remaining of this section, it is assumed that every sampling operation on the same time-step $t$ (be it reward, observation or belief-state sampling) has the same confidence interval $1 - \sigma_t$.}:
    \begin{multline*}
        \left| Q^{\mathcal{L}}(\mathbf{b}_t, a_t) - Q^{\mathcal{L}}(\mathbf{b}_{l_t},a_t) \right| \le r_{\text{max}}\\
        \sum_{k=0}^{H-1} \gamma^k \left(2 \mathcal{H}(n_{t+k}, \sigma_{t+k}) + 2\gamma\Gamma \mathcal{H}(m_{t+k}, \sigma_{t+k}) + \left( \mathcal{R} + \gamma \Gamma \Omega \right) \mathcal{S}^{t+k} \sum_{j=0}^{t+k} \frac{1}{\mathcal{S}^j} \mathcal{H}(l_{j}, \sigma_{j}) \right)
    \end{multline*}
    \label{lemma:lookahead-belief-error}
\end{lemma}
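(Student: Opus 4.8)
The plan is to unroll the recursion \eqref{eq:qfunc-approx-vector-rec} defining $Q^{\mathcal{L}}$ and to invoke the two preceding lemmas at each level of the lookahead tree. First I would write, for the root,
\[
Q^{\mathcal{L}}(\mathbf{b}_t,a_t) - Q^{\mathcal{L}}(\mathbf{b}_{l_t},a_t) = \bigl(\mathbf{p}_{r_t,n_t}^{\top}\mathbf{r} - \hat{\mathbf{p}}_{r_t,n_t}^{\top}\mathbf{r}\bigr) + \gamma\bigl(\mathbf{p}_{o_t,m_t}^{\top}\mathbf{V}^{\mathcal{L}}_{t+1} - \hat{\mathbf{p}}_{o_t,m_t}^{\top}\hat{\mathbf{V}}^{\mathcal{L}}_{t+1}\bigr),
\]
where the hatted quantities denote the estimations conditioned on $\mathbf{b}_{l_t}$ rather than $\mathbf{b}_t$, the vector $\mathbf{V}^{\mathcal{L}}_{t+1}$ has entries $V^{\mathcal{L}}(\mathbf{b}_{t+1})$ at the one-step belief updates of $\mathbf{b}_t$, and $\hat{\mathbf{V}}^{\mathcal{L}}_{t+1}$ has entries $V^{\mathcal{L}}(\mathbf{b}_{l_{t+1}})$ at the (sampled) one-step belief updates of $\mathbf{b}_{l_t}$. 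Taking absolute values and applying the triangle inequality (Lemma~\ref{lemma:triangle-ineq}), the first term is bounded directly by Lemma~\ref{lemma:belief-reward-error} at time-step $t$, yielding $r_{\text{max}}\bigl(2\mathcal{H}(n_t,\sigma_t) + \mathcal{R}\,\mathcal{S}^{t}\sum_{j=0}^{t}\mathcal{S}^{-j}\mathcal{H}(l_j,\sigma_j)\bigr)$, while the second is bounded by Lemma~\ref{lemma:belief-observation-error}, producing $\Gamma r_{\text{max}}\bigl(2\mathcal{H}(m_t,\sigma_t) + \Omega\,\mathcal{S}^{t}\sum_{j=0}^{t}\mathcal{S}^{-j}\mathcal{H}(l_j,\sigma_j)\bigr)$ together with the leftover term $\bigl|\hat{\mathbf{p}}_{o_t,m_t}^{\top}\mathbf{V}^{\mathcal{L}}_{t+1} - \hat{\mathbf{p}}_{o_t,m_t}^{\top}\hat{\mathbf{V}}^{\mathcal{L}}_{t+1}\bigr|$, which drives the recursion.

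To handle that leftover term I would note it equals $\bigl|\hat{\mathbf{p}}_{o_t,m_t}^{\top}(\mathbf{V}^{\mathcal{L}}_{t+1} - \hat{\mathbf{V}}^{\mathcal{L}}_{t+1})\bigr| \le \sum_{o_{t+1}\in\Omega}\hat{p}_{o_t,m_t}(o_{t+1})\,\bigl|V^{\mathcal{L}}(\mathbf{b}_{t+1}) - V^{\mathcal{L}}(\mathbf{b}_{l_{t+1}})\bigr|$, and then use $V^{\mathcal{L}} = \max_{a}Q^{\mathcal{L}}$ together with Lemma~\ref{lemma:max-diff} to replace each summand by $\max_{a_{t+1}\in\mathcal{A}}\bigl|Q^{\mathcal{L}}(\mathbf{b}_{t+1},a_{t+1}) - Q^{\mathcal{L}}(\mathbf{b}_{l_{t+1}},a_{t+1})\bigr|$ --- precisely the quantity this lemma bounds, one time-step ahead. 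As in the proof of Lemma~\ref{lemma:true-error}, I would then pick the observation and action attaining the overall maximum, factor the resulting error out of the sum, and use $\sum_{o_{t+1}}\hat{p}_{o_t,m_t}(o_{t+1}) \le 1$; the decisive point, exactly as there, is that the bound produced at each level is independent of the particular action and belief state, so it can be reused uniformly when the recursion descends.

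Iterating this one-step expansion then gives the result. At recursion depth $k$ a factor $\gamma^{k}$ accumulates in front of the reward and observation contributions, which are now evaluated at time $t+k$ (with the cumulative belief-sampling tail running from $j=0$ to $t+k$, as delivered by Lemmas~\ref{lemma:belief-reward-error} and~\ref{lemma:belief-observation-error}); since the observation contribution carries an extra $\gamma$ from \eqref{eq:qfunc-approx-vector-rec}, its $\mathcal{H}(m)$ part enters with coefficient $\gamma^{k}\cdot 2\gamma\Gamma$ and its $\mathcal{S}^{t+k}\sum_j$ part with coefficient $\gamma^{k}\cdot\gamma\Gamma\Omega$. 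The recursion terminates at $k=H$: by \eqref{eq:qfunc-approx-vector-stop} we have $Q^{\mathcal{L}}(\mathbf{b}_{t+H},\cdot) = Q^{\mathcal{L}}(\mathbf{b}_{l_{t+H}},\cdot) = 0$, so the depth-$H$ leftover term vanishes and --- unlike Lemma~\ref{lemma:true-error} --- there is no stray $\gamma^{H}$ remainder. Collecting levels $k=0,\dots,H-1$, factoring out $r_{\text{max}}$, and combining the two $\mathcal{S}^{t+k}\sum_j$ contributions (coefficient $\mathcal{R}$ from the reward bound, $\gamma\Gamma\Omega$ from the observation bound) into the single factor $\mathcal{R}+\gamma\Gamma\Omega$ reproduces exactly the stated expression.

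I expect the main obstacle to be the recursive step of the second paragraph: rewriting the leftover term of Lemma~\ref{lemma:belief-observation-error} as the belief error one time-step ahead and closing the recursion via the worst-case-over-observations argument used for $Q^{\star}$ in Lemma~\ref{lemma:true-error}, while keeping careful track of how the $\gamma$, $\Gamma$, $\mathcal{S}^{t+k}$, $\mathcal{R}$ and $\Omega$ factors propagate so that they land precisely as in the statement and the recursion unrolls exactly $H$ times. Everything else is bookkeeping layered on top of the already-established Lemmas~\ref{lemma:belief-reward-error}, \ref{lemma:belief-observation-error}, \ref{lemma:max-diff} and~\ref{lemma:triangle-ineq}.
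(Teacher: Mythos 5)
Your proposal follows essentially the same route as the paper's proof: the same decomposition of the root difference into a reward term and a discounted observation term, the same invocation of Lemmas~\ref{lemma:belief-reward-error} and~\ref{lemma:belief-observation-error}, the same worst-case-over-observations-and-actions argument to close the recursion, and the same termination via the stop case \eqref{eq:qfunc-approx-vector-stop}. The bookkeeping of the $\gamma^k$, $\mathcal{R}$, and $\gamma\Gamma\Omega$ coefficients is also handled as in the paper, so the argument is correct as stated.
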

\begin{proof}
    \begin{equation*}
        \left| Q^{\mathcal{L}}(\mathbf{b}_t, a_t) - Q^{\mathcal{L}}(\mathbf{b}_{l_t},a_t) \right| \le \left| \mathbf{p}^{\top}_{n_t} \mathbf{r} - \hat{\mathbf{p}}^{\top}_{n_t} \mathbf{r} \right| + \gamma \left| \mathbf{p}_{o_t,m_t}^{\top} \mathbf{V} - \hat{\mathbf{p}}_{o_t,m_t}^{\top} \mathbf{V}^{\mathcal{L}} \right|
    \end{equation*}
    
    \noindent using Lemmas \ref{lemma:belief-reward-error} and \ref{lemma:belief-observation-error}:
    \begin{align*}
        \begin{split}
            \left| Q^{\mathcal{L}}(\mathbf{b}_t, a_t) - Q^{\mathcal{L}}(\mathbf{b}_{l_t},a_t) \right| \le& \; r_{\text{max}} \left( 2 \mathcal{H}(n_t,\sigma_t) + 2\gamma\Gamma\mathcal{H}(m_t,\sigma_t) + \left( \mathcal{R} + \gamma \Gamma \Omega \right) \mathcal{S}^t \sum_{j=0}^{t} \frac{1}{\mathcal{S}^j} \mathcal{H}(l_j,\sigma_j) \right)\\
            &+ \gamma \left| \hat{\mathbf{p}}^{\top}_{o_t,m_t} \mathbf{V} - \hat{\mathbf{p}}^{\top}_{o_t,m_t} \mathbf{V}^{\mathcal{L}} \right|\\
            \le& \; r_{\text{max}} \left( 2 \mathcal{H}(n_t,\sigma_t) + 2\gamma\Gamma\mathcal{H}(m_t,\sigma_t) + \left( \mathcal{R} + \gamma \Gamma \Omega \right) \mathcal{S}^t \sum_{j=0}^{t} \frac{1}{\mathcal{S}^j} \mathcal{H}(l_j,\sigma_j) \right)\\
            &+ \gamma \sum_{o_{t+1} \in \Omega} P_{m_t}(o_{t+1}|\mathbf{b}_{l_t},a_t) \max_{a_{t+1}} \left| Q^{\mathcal{L}}(\mathbf{b}_{t+1}, a_{t+1}) - Q^{\mathcal{L}}(\mathbf{b}_{l_{t+1}}, a_{t+1}) \right|
        \end{split}
    \end{align*}
    
    \noindent let $\mathbf{\mu}_{t+1}$ and $a^{\prime}_{t+1}$ be a belief-state and an action, respectively, such that\\ $\left| Q^{\mathcal{L}}(\mathbf{\mu}_{t+1}, a^{\prime}_{t+1}) - Q^{\mathcal{L}}(\mathbf{\mu}_{l_{t+1}}, a^{\prime}_{t+1}) \right| \ge \left| Q^{\mathcal{L}}(\mathbf{b}_{t+1}, a_{t+1}) - Q^{\mathcal{L}}(\mathbf{b}_{l_{t+1}}, a_{t+1}) \right|, \forall \mathbf{b}_{t+1}, a_{t+1}$. Then:
    \begin{multline*}
    		\left| Q^{\mathcal{L}}(\mathbf{b}_t, a_t) - Q^{\mathcal{L}}(\mathbf{b}_{l_t},a_t) \right| \le \; \gamma \left| Q^{\mathcal{L}}(\mathbf{\mu}_{t+1}, a^{\prime}_{t+1}) - Q^{\mathcal{L}}(\mathbf{\mu}_{l_{t+1}}, a^{\prime}_{t+1}) \right| \\
    		+ r_{\text{max}} \left( 2 \mathcal{H}(n_t,\sigma_t) + 2\gamma\Gamma\mathcal{H}(m_t,\sigma_t) + \left( \mathcal{R} + \gamma \Gamma \Omega \right) \mathcal{S}^t \sum_{j=0}^{t} \frac{1}{\mathcal{S}^j} \mathcal{H}(l_j,\sigma_j) \right)
    \end{multline*}
    
    Using the inequality above recursively, and the stop case from Equation \eqref{eq:qfunc-approx-vector-stop}:
    \begin{multline*}
        \left| Q^{\mathcal{L}}(\mathbf{b}_t, a_t) - Q^{\mathcal{L}}(\mathbf{b}_{l_t},a_t) \right| \le r_{\text{max}}\\
        \sum_{k=0}^{H-1} \gamma^k \left(2 \mathcal{H}(n_{t+k}, \sigma_{t+k}) + 2\gamma\Gamma \mathcal{H}(m_{t+k}, \sigma_{t+k}) + \left( \mathcal{R} + \gamma \Gamma \Omega \right) \mathcal{S}^{t+k} \sum_{j=0}^{t+k} \frac{1}{\mathcal{S}^j} \mathcal{H}(l_{j}, \sigma_{j}) \right)
    \end{multline*}
\end{proof}

Once again, let us count the number of Hoeffding inequalities used to prove this result. Analogously to the count done for Lemma \ref{lemma:true-error}, consider the \textit{immediate} and \textit{recursive} components. Notice that, for the immediate term, $(t+1)+2$ Hoeffding inequalities are used. The first time-step in the recursive term needs $\mathcal{A}\Omega \left((t+2)+2\right)$ Hoeffding inequalities (it should hold for every action and observation, and the $t+1$ Hoeffding inequalities from the belief-error bound now become $t+2$ because the time-step has increased by one). In total, $\sum_{k=0}^{H-1}\mathcal{A}^k\Omega^k \left(t+k+2\right)$ Hoeffding inequalities are needed to prove this bound.

Finally, with the two error terms of the lookahead already bounded, the full lookahead error can be derived:

\begin{theorem}
    The lookahead error $\epsilon_t$ has the following upper bound:
    \begin{multline*}
        \epsilon_t \le r_{\text{max}} \sum_{k=0}^{H-1} \gamma^k \left( 4 \mathcal{H}(n_{t+k},\sigma_{t+k}) + 4\gamma\Gamma\mathcal{H}(m_{t+k},\sigma_{t+k}) + \left( \mathcal{R} + \gamma \Gamma \Omega \right) \mathcal{S}^{t+k} \sum_{j=0}^{t+k} \frac{1}{\mathcal{S}^j} \mathcal{H}(l_j,\sigma_{j}) \right)\\
         + 2 \gamma^H \Gamma r_{\text{max}}
    \end{multline*}
    \label{lemma:lookahead-error-bound}
\end{theorem}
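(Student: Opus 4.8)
The plan is to assemble the bound from the decomposition in Lemma~\ref{lemma:error-split} together with the component bounds of Lemmas~\ref{lemma:true-error} and \ref{lemma:lookahead-belief-error}. Recall that Lemma~\ref{lemma:error-split} gives
\[
    \epsilon_t \le \max_{a_t \in \mathcal{A}} \left( \left| Q^{\star}(\mathbf{b}_t, a_t) - Q^{\mathcal{L}}(\mathbf{b}_t, a_t) \right| + \left| Q^{\mathcal{L}}(\mathbf{b}_t, a_t) - Q^{\mathcal{L}}(\mathbf{b}_{l_t}, a_t) \right| \right) + \left| Q^{\star}(\mathbf{b}_{l_t}, a_t^{\mathcal{L}}) - Q^{\mathcal{L}}(\mathbf{b}_{l_t}, a_t^{\mathcal{L}}) \right| .
\]
First I would bound the two terms inside the maximum: the first is exactly the quantity controlled by Lemma~\ref{lemma:true-error}, and the second is exactly the belief error controlled by Lemma~\ref{lemma:lookahead-belief-error}. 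Since both of those bounds are independent of $a_t$ (and, in the case of Lemma~\ref{lemma:true-error}, also of the belief state at which the difference is evaluated), the outer $\max_{a_t}$ can be discarded and replaced by the sum of those two bounds.

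Next I would treat the remaining term $\left| Q^{\star}(\mathbf{b}_{l_t}, a_t^{\mathcal{L}}) - Q^{\mathcal{L}}(\mathbf{b}_{l_t}, a_t^{\mathcal{L}}) \right|$. As noted in the discussion following Lemma~\ref{lemma:true-error}, the bound proved there holds for an arbitrary belief state and action, so it applies verbatim with $\mathbf{b}_{l_t}$ and $a_t^{\mathcal{L}}$ substituted in; hence this term inherits the same upper bound as the first term.

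Finally I would carry out the arithmetic. Summing two copies of the Lemma~\ref{lemma:true-error} bound (for the first and third terms) and one copy of the Lemma~\ref{lemma:lookahead-belief-error} bound (for the second term), and collecting coefficients inside the common prefactor $r_{\text{max}}\sum_{k=0}^{H-1}\gamma^k$, the reward-sampling contributions add to $4\,\mathcal{H}(n_{t+k},\sigma_{t+k})$, the observation-sampling contributions to $4\gamma\Gamma\,\mathcal{H}(m_{t+k},\sigma_{t+k})$, the belief-approximation contribution remains $\left(\mathcal{R}+\gamma\Gamma\Omega\right)\mathcal{S}^{t+k}\sum_{j=0}^{t+k}\mathcal{S}^{-j}\mathcal{H}(l_j,\sigma_j)$, and the two horizon-truncation terms $\gamma^H\Gamma r_{\text{max}}$ add to $2\gamma^H\Gamma r_{\text{max}}$. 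This is precisely the claimed inequality.

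I do not expect a genuine obstacle, since the statement is an assembly of earlier results; the only points needing care are recognizing that the action-uniformity of the two component bounds lets us drop the maximum in Lemma~\ref{lemma:error-split}, and that the third term of that decomposition is structurally another application of Lemma~\ref{lemma:true-error} rather than a new quantity. Once these are noted, the proof reduces to a single line of addition.
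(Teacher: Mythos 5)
Your proposal is correct and follows essentially the same route as the paper: it invokes the decomposition of Lemma~\ref{lemma:error-split}, bounds the first and third terms by Lemma~\ref{lemma:true-error} (using its uniformity in the belief state and action, as the paper notes after that lemma) and the middle term by Lemma~\ref{lemma:lookahead-belief-error}, and adds. The coefficient bookkeeping ($1+2+1=4$ for the $n$- and $m$-terms, one belief term, and $2\gamma^H\Gamma r_{\text{max}}$ from the two truncation terms) is exactly right; the paper's own proof is the same argument stated more tersely.
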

\begin{proof}
    Lemma \ref{lemma:error-split} yields:
    \begin{equation*}
        \epsilon_t \le \max_{a_t \in \mathcal{A}} \left( \left| Q^{\star}(\mathbf{b}_t, a_t) - Q^{\mathcal{L}}(\mathbf{b}_t, a_t) \right| + \left| Q^{\mathcal{L}}(\mathbf{b}_t, a_t) - Q^{\mathcal{L}}(\mathbf{b}_{l_t}, a_t) \right| \right) + \left| Q^{\star}(\mathbf{b}_{l_t}, a_t^{\mathcal{L}}) - Q^{\mathcal{L}}(\mathbf{b}_{l_t}, a_t^{\mathcal{L}}) \right|
    \end{equation*}
    
    Applying Lemmas \ref{lemma:true-error} and \ref{lemma:lookahead-belief-error} to the previous expression concludes the proof.
\end{proof}
    
The error bound estimated in Theorem \ref{lemma:lookahead-error-bound} encompasses all four errors of the quantum-classical lookahead algorithm. They can be easily identified from the expression in Theorem \ref{lemma:lookahead-error-bound}:
\begin{itemize}
    \item \textit{Approximating the reward distribution}: this error term is captured by the $n_{t+k}$ term in the sum.
    
    \item \textit{Approximating the observation distribution}: this error term is captured by the $m_{t+k}$ term in the sum.
    
    \item \textit{Approximating the belief states}: this error is captured by the $l_j$ term of the bound and increases over time due to repeated approximation of the belief state.
    
    \item \textit{The finite horizon of the lookahead}: the lookahead tree has a fixed depth, given by the finite horizon $H$, introducing a systematic error that can't be compensated by increasing the number of samples. It is captured by the $2\gamma^H \Gamma R_{\text{max}}$ term in the bound.
\end{itemize}

Theorem \ref{lemma:lookahead-error-bound} provides an upper bound to the lookahead error of Definition \ref{def:lookahead-error}. This upper bound takes into account the use of multiple sampling operations throughout the lookahead tree: reward sampling, observation sampling, and belief-state sampling. Given that the bounds used for each sampling operator fail with a probability of $\sigma_t$, the bound for the whole lookahead error also has a non-zero chance of failure, which depends on the confidence interval of each sampling operation and is yet to be defined. This is because the confidence interval $1 - \sigma_t$ only holds locally for each sampling operation, and not for the whole expression presented in Theorem \ref{lemma:lookahead-error-bound}. Naturally, if none of the sampling operation bounds fail, the lookahead error bound holds. As such, the probability of failure $\delta_t$ of the lookahead error bound is at least as large as the probability that any one of the sampling bounds fails. 

One way to relate the confidence interval $1 - \delta_t$ of the bound in Theorem \ref{lemma:lookahead-error-bound} to the confidence interval $1 - \sigma_t$ of each sampling operation is by using Theorem \ref{theorem:union-bound}. It states that the probability of the union of multiple events is at least as small as the sum of the probabilities of each individual event:
\begin{theorem}
    (Boole's Inequality \cite{seneta1992history}). Let $Z_1, Z_2, \dots, Z_n$ be probabilistic events that occur with probability $P\left(Z_1\right), P\left(Z_2\right), \dots, P\left(Z_n\right)$. Then:
    \begin{equation*}
        P \left( \bigcup_{i=1}^n Z_i \right) \le \sum_{i=1}^n P\left(Z_i\right)
    \end{equation*}
    \label{theorem:union-bound}
\end{theorem}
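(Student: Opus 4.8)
The plan is to prove the inequality by induction on $n$, reducing everything to the two-event subadditivity statement $P(A \cup B) \le P(A) + P(B)$.

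First I would establish that two-event case directly from the probability axioms. Writing $A \cup B$ as the disjoint union $A \sqcup (B \setminus A)$ and using finite additivity gives $P(A \cup B) = P(A) + P(B \setminus A)$. Since $B = (B \setminus A) \sqcup (A \cap B)$, additivity again yields $P(B \setminus A) = P(B) - P(A \cap B) \le P(B)$, because $P(A \cap B) \ge 0$. Combining the two identities gives $P(A \cup B) \le P(A) + P(B)$.

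For the induction, the base case $n = 1$ is trivial (it is an equality). Assuming the claim holds for $n-1$ events, put $A = \bigcup_{i=1}^{n-1} Z_i$ and $B = Z_n$; the two-event case gives $P\bigl(\bigcup_{i=1}^{n} Z_i\bigr) \le P\bigl(\bigcup_{i=1}^{n-1} Z_i\bigr) + P(Z_n)$, and applying the inductive hypothesis to the first term on the right yields $P\bigl(\bigcup_{i=1}^{n} Z_i\bigr) \le \sum_{i=1}^{n-1} P(Z_i) + P(Z_n) = \sum_{i=1}^{n} P(Z_i)$, which closes the induction.

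An equivalent non-inductive route is to disjointify directly: set $W_1 = Z_1$ and $W_i = Z_i \setminus \bigcup_{j < i} Z_j$ for $i \ge 2$. The sets $W_i$ are pairwise disjoint, satisfy $\bigcup_i W_i = \bigcup_i Z_i$, and obey $W_i \subseteq Z_i$, so finite additivity together with monotonicity of $P$ gives $P\bigl(\bigcup_i Z_i\bigr) = \sum_i P(W_i) \le \sum_i P(Z_i)$. The only point demanding care in either version is verifying the disjointness of the pieces fed into the additivity axiom; beyond that bookkeeping the argument is entirely routine, so I do not anticipate a genuine obstacle.
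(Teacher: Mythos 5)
Your proof is correct. The paper does not actually prove this statement --- it is Boole's inequality, stated as a classical result and attributed to the cited reference \cite{seneta1992history} --- so there is no in-paper argument to compare against. Both routes you give are the standard ones: the two-event subadditivity $P(A \cup B) \le P(A) + P(B)$ followed by induction, and the direct disjointification $W_i = Z_i \setminus \bigcup_{j<i} Z_j$ with finite additivity and monotonicity. Each step you invoke (finite additivity on disjoint pieces, nonnegativity of $P(A \cap B)$, $W_i \subseteq Z_i$) is justified, and the bookkeeping on disjointness is handled. Nothing further is needed.
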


In the formulation of Theorem \ref{theorem:union-bound}, the probabilistic events $Z_i$ could have a higher sampling error than the one provided in Theorem \ref{theorem:hoeffding}. By attributing this meaning to the probabilistic events, Boole's inequality can be used to relate the sampling bound's confidence interval $1 - \sigma_t$ to the confidence interval of the algorithm's upper bound presented in Theorem \ref{lemma:lookahead-error-bound}.

\begin{lemma}
    The confidence interval $1 - \delta_t$ of the algorithm's upper bound given in Theorem \ref{lemma:lookahead-error-bound} and the confidence interval $1 - \sigma_t$ of the Hoeffding's bound for each sampling operation obeys the following inequality:
    \begin{equation*}
        \delta_t \le \sum_{i=0}^{H-1} \mathcal{A}^i \Omega^i \left( 2 + \mathcal{A} \left( t + i + 4 \right) \right) \sigma_{t+i}
    \end{equation*}
    \label{lemma:confidence-bound}
\end{lemma}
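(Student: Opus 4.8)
The plan is to combine the deterministic error bound of Theorem~\ref{lemma:lookahead-error-bound} with Boole's inequality (Theorem~\ref{theorem:union-bound}), the only real work being a careful count of how many Hoeffding estimates feed into that bound at each time-step. The key observation is that the bound of Theorem~\ref{lemma:lookahead-error-bound} holds \emph{with certainty} once we condition on the event that every Hoeffding inequality used in its derivation is satisfied; consequently the failure event of that bound is contained in the union of the failure events of the individual Hoeffding estimates, and a single estimate performed at time-step $t+i$ fails with probability at most $\sigma_{t+i}$ by Theorem~\ref{theorem:hoeffding}. Applying Theorem~\ref{theorem:union-bound} to this union gives $\delta_t \le \sum \sigma_{(\cdot)}$, the sum running over all invoked estimates, so the problem reduces to counting, for each offset $i\in\{0,\dots,H-1\}$, the number of Hoeffding estimates at time-step $t+i$ used to prove Theorem~\ref{lemma:lookahead-error-bound}.

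To carry out the count I would follow the three-term decomposition of Lemma~\ref{lemma:error-split}. The term $\max_{a_t}|Q^{\star}(\mathbf{b}_t,a_t)-Q^{\mathcal{L}}(\mathbf{b}_t,a_t)|$ is bounded by Lemma~\ref{lemma:true-error}, whose proof (as tallied immediately after that lemma) spends $2\mathcal{A}^k\Omega^k$ Hoeffding estimates at recursion depth $k$, i.e. at time-step $t+k$; since the bound must hold simultaneously for each of the $\mathcal{A}$ root actions sitting under the outer maximum, this contributes $2\mathcal{A}^{i+1}\Omega^i$ estimates at offset $i$. The term $\max_{a_t}|Q^{\mathcal{L}}(\mathbf{b}_t,a_t)-Q^{\mathcal{L}}(\mathbf{b}_{l_t},a_t)|$ is bounded by Lemma~\ref{lemma:lookahead-belief-error}, whose proof uses $\mathcal{A}^k\Omega^k(t+k+2)$ estimates at depth $k$; multiplying again by $\mathcal{A}$ for the outer maximum gives $\mathcal{A}^{i+1}\Omega^i(t+i+2)$ at offset $i$. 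The residual term $|Q^{\star}(\mathbf{b}_{l_t},a_t^{\mathcal{L}})-Q^{\mathcal{L}}(\mathbf{b}_{l_t},a_t^{\mathcal{L}})|$ is once more controlled by Lemma~\ref{lemma:true-error}, but only for the single action $a_t^{\mathcal{L}}$ actually returned by the algorithm, so it carries no extra factor of $\mathcal{A}$ and contributes just $2\mathcal{A}^i\Omega^i$ estimates at offset $i$.

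Adding the three contributions at a fixed offset $i$ gives $2\mathcal{A}^{i+1}\Omega^i + \mathcal{A}^{i+1}\Omega^i(t+i+2) + 2\mathcal{A}^i\Omega^i = \mathcal{A}^i\Omega^i\left(2 + \mathcal{A}(t+i+4)\right)$. Weighting each estimate at offset $i$ by its failure probability $\sigma_{t+i}$ and summing over $i$ then yields the claim,
\[
  \delta_t \;\le\; \sum_{i=0}^{H-1}\mathcal{A}^i\Omega^i\left(2 + \mathcal{A}(t+i+4)\right)\sigma_{t+i}.
\]

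I expect the main obstacle to be this multiplicity bookkeeping rather than any analytic step: one must be careful that the two terms under the outer $\max_{a_t}$ of Lemma~\ref{lemma:error-split} require their bounds to hold uniformly over all $\mathcal{A}$ candidate root actions (producing the $\mathcal{A}^{i+1}\Omega^i$ counts), whereas the residual term involving $a_t^{\mathcal{L}}$ only needs validity along the single realised branch (producing the bare $2\mathcal{A}^i\Omega^i$). Correctly identifying recursion depth $k$ with the time offset $i=k$, so that the per-level tallies already established in the proofs of Lemmas~\ref{lemma:true-error} and~\ref{lemma:lookahead-belief-error} slot in directly, is precisely what makes the coefficient collapse to $2+\mathcal{A}(t+i+4)$ rather than a looser expression.
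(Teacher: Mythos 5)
Your proposal is correct and follows essentially the same route as the paper's proof: decompose via Lemma~\ref{lemma:error-split}, take the per-depth Hoeffding tallies already recorded after Lemmas~\ref{lemma:true-error} and~\ref{lemma:lookahead-belief-error}, multiply the two terms under the outer $\max_{a_t\in\mathcal{A}}$ by $\mathcal{A}$ while leaving the residual $a_t^{\mathcal{L}}$ term bare, and apply Boole's inequality with failure probability $\sigma_{t+i}$ at depth $i$. Your per-level sum $2\mathcal{A}^{i+1}\Omega^i + \mathcal{A}^{i+1}\Omega^i(t+i+2) + 2\mathcal{A}^i\Omega^i = \mathcal{A}^i\Omega^i\left(2+\mathcal{A}(t+i+4)\right)$ reproduces the paper's count exactly.
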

\begin{proof}
    Recall, from Lemma \ref{lemma:error-split}, that the lookahead error is bounded by the following expression:
    \begin{equation*}
        \epsilon_t \le \max_{a_t \in \mathcal{A}} \left( \left| Q^{\star}(\mathbf{b}_t, a_t) - Q^{\mathcal{L}}(\mathbf{b}_t, a_t) \right| + \left| Q^{\mathcal{L}}(\mathbf{b}_t, a_t) - Q^{\mathcal{L}}(\mathbf{b}_{l_t}, a_t) \right| \right) + \left| Q^{\star}(\mathbf{b}_{l_t}, a_t^{\mathcal{L}}) - Q^{\mathcal{L}}(\mathbf{b}_{l_t}, a_t^{\mathcal{L}}) \right|
    \end{equation*}
    
    The terms $\left| Q^{\star}(\mathbf{b}_t, a_t) - Q^{\mathcal{L}}(\mathbf{b}_t, a_t) \right|$ and $\left| Q^{\star}(\mathbf{b}_{l_t}, a_t^{\mathcal{L}}) - Q^{\mathcal{L}}(\mathbf{b}_{l_t}, a_t^{\mathcal{L}}) \right|$ can both be bounded using Lemma \ref{lemma:true-error}, which needs $2 \sum_{i=0}^{H-1} \mathcal{A}^i \Omega^i$ Hoeffding inequalities to be derived. The term $\left| Q^{\mathcal{L}}(\mathbf{b}_t, a_t) - Q^{\mathcal{L}}(\mathbf{b}_{l_t}, a_t) \right|$ can be bounded using Lemma \ref{lemma:lookahead-belief-error}, which requires $\sum_{i=0}^{H-1} \mathcal{A}^i \Omega^i (t+i+2)$ Hoeffding inequalities. However, two of the terms are under a $\text{max}_{a_t \in \mathcal{A}}$ operation in the lookahead error bound above, so their number of Hoeffding inequalities should be multiplied by a factor $\mathcal{A}$, since these bounds should hold for any possible action. As such, the total number of Hoeffding inequalities to bound the lookahead error is given by:
    \begin{equation*}
        2 \sum_{i=0}^{H-1} \mathcal{A}^i \Omega^i + 2 \mathcal{A} \sum_{i=0}^{H-1} \mathcal{A}^i \Omega^i + \mathcal{A} \sum_{i=0}^{H-1} \mathcal{A}^i \Omega^i (t + i + 2) = \sum_{i=0}^{H-1} \mathcal{A}^i \Omega^i \left( 2 + \mathcal{A} \left( t + i + 4 \right) \right)
    \end{equation*}
    
    Each term of the previous sum represents the total number of Hoeffding inequalities at that particular depth in the lookahead tree (if $i=2$, the corresponding term represents the number of Hoeffding inequalities at depth $2$). Given that, for a depth $i$ at time-step $t$, the probability of a sampling operation failing is $\sigma_{t+1}$, Boole's inequality in Theorem \ref{theorem:union-bound} yields:
    \begin{equation*}
        \delta_t \le \sum_{i=0}^{H-1} \mathcal{A}^i \Omega^i \left( 2 + \mathcal{A} \left( t + i + 4 \right) \right) \sigma_{t+i}
    \end{equation*}
\end{proof}

The bounds derived in Lemmas \ref{lemma:lookahead-error-bound} and \ref{lemma:confidence-bound} depend on the definition of four successions:
\begin{itemize}
    \item $n_t$, $m_t$, and $l_t$: the successions that define how the number of samples for each sampling operation varies over time.
    \item $\sigma_t$: the succession defining how the confidence interval of each sampling operation varies over time.
\end{itemize}

Unfortunately, due to the cumulative nature of the belief-state error, the lookahead error grows exponentially as time progresses (it grows with $\mathcal{S}^t$, and the state size $\mathcal{S}$ of the POMDP can be extremely large by itself!). This effect could be counteracted by an appropriate choice for both $l_t$ and $\sigma_t$, but this work does not try to answer this question. Instead, a much simpler case is considered: both $n_t, m_t, l_t$, and $\sigma_t$ are fixed, such that they do not change over time, as stated in Definition \ref{def:fixed-successions}.

\begin{definition}
    The lookahead algorithm's samples $n_t, m_t$ and $l_t$ and the confidence interval $1-\sigma_t$ are considered to be time-invariant:
    \begin{equation*}
        n_t = n, \quad m_t = m, \quad l_t = l, \quad \sigma_t = \sigma
    \end{equation*}
    \label{def:fixed-successions}
\end{definition}

Moreover, a further simplification is assumed to reduce the degrees of freedom in the choice of the number of samples: the contribution of each sampling operation to the whole lookahead error at time-step $t=0$ is assumed to be the same, such that the number of samples $n$, $m$, and $l$ can be related to each other according to Lemma \ref{lemma:sample-relation}:
\begin{lemma}
    If the contribution to the error bound in Theorem \ref{lemma:lookahead-error-bound} of each sampling operation is considered to be the same at time-step $t=0$, the number of samples can be related to each other according to the following expressions:
    \begin{equation*}
        m = \left( \gamma \Gamma \right)^2 n, \quad l = \left( \frac{1}{4} \frac{\mathcal{R} + \gamma \Gamma \Omega}{\mathcal{S} - 1} \left( \frac{\mathcal{S}}{\Gamma \left(1 - \gamma^H\right)} \frac{\left(\gamma\mathcal{S}\right)^H - 1}{\gamma \mathcal{S} - 1} - 1 \right) \right)^2 n
    \end{equation*}
    \label{lemma:sample-relation}
\end{lemma}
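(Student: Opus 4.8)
The plan is to specialise the error bound of Theorem~\ref{lemma:lookahead-error-bound} to $t=0$ and to impose Definition~\ref{def:fixed-successions}, under which every Hoeffding term collapses to a single constant: $\mathcal{H}(n_{k},\sigma_{k})=\mathcal{H}(n,\sigma)$, $\mathcal{H}(m_{k},\sigma_{k})=\mathcal{H}(m,\sigma)$, and $\mathcal{H}(l_{j},\sigma_{j})=\mathcal{H}(l,\sigma)$ for all indices. The bound then decomposes as $\epsilon_0\le C_n\mathcal{H}(n,\sigma)+C_m\mathcal{H}(m,\sigma)+C_l\mathcal{H}(l,\sigma)+2\gamma^{H}\Gamma r_{\max}$, where the last, sample-independent term plays no role, and $C_n,C_m,C_l$ are the numerical prefactors of the three sampling contributions read directly off Theorem~\ref{lemma:lookahead-error-bound}. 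The hypothesis that the three sampling operations contribute equally at $t=0$ is exactly the pair of equations $C_n\mathcal{H}(n,\sigma)=C_m\mathcal{H}(m,\sigma)$ and $C_n\mathcal{H}(n,\sigma)=C_l\mathcal{H}(l,\sigma)$, so the proof reduces to computing $C_n,C_m,C_l$ and solving for $m$ and $l$ in terms of $n$.

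For the reward--observation relation, $C_n=4 r_{\max}\sum_{k=0}^{H-1}\gamma^{k}$ and $C_m=4\gamma\Gamma r_{\max}\sum_{k=0}^{H-1}\gamma^{k}$, so cancelling the common factor $4 r_{\max}\sum_{k=0}^{H-1}\gamma^{k}$ the first equation becomes $\mathcal{H}(n,\sigma)=\gamma\Gamma\,\mathcal{H}(m,\sigma)$. Since $\mathcal{H}(x,\delta)=\sqrt{\tfrac{1}{2x}\log(2/\delta)}$, the ratio $\mathcal{H}(n,\sigma)/\mathcal{H}(m,\sigma)$ equals $\sqrt{m/n}$, and squaring yields $m=(\gamma\Gamma)^{2}n$, the first identity.

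For the reward--belief relation, the only nontrivial step is evaluating the double sum in $C_l$, namely $\sum_{k=0}^{H-1}\gamma^{k}\mathcal{S}^{k}\sum_{j=0}^{k}\mathcal{S}^{-j}$ (at $t=0$ the exponent $\mathcal{S}^{t+k}$ and the inner range $\sum_{j=0}^{t+k}$ become $\mathcal{S}^{k}$ and $\sum_{j=0}^{k}$). The finite geometric series gives $\sum_{j=0}^{k}\mathcal{S}^{-j}=\frac{\mathcal{S}-\mathcal{S}^{-k}}{\mathcal{S}-1}$, hence $\mathcal{S}^{k}\sum_{j=0}^{k}\mathcal{S}^{-j}=\frac{\mathcal{S}^{k+1}-1}{\mathcal{S}-1}$; summing against $\gamma^{k}$ and using $\sum_{k=0}^{H-1}(\gamma\mathcal{S})^{k}=\frac{(\gamma\mathcal{S})^{H}-1}{\gamma\mathcal{S}-1}$ (with $\gamma\mathcal{S}\neq 1$, implicit in the stated formula) and $\sum_{k=0}^{H-1}\gamma^{k}=\Gamma(1-\gamma^{H})$ yields $\frac{1}{\mathcal{S}-1}\bigl(\mathcal{S}\tfrac{(\gamma\mathcal{S})^{H}-1}{\gamma\mathcal{S}-1}-\Gamma(1-\gamma^{H})\bigr)$. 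Thus $C_l=r_{\max}(\mathcal{R}+\gamma\Gamma\Omega)\cdot\frac{1}{\mathcal{S}-1}\bigl(\mathcal{S}\tfrac{(\gamma\mathcal{S})^{H}-1}{\gamma\mathcal{S}-1}-\Gamma(1-\gamma^{H})\bigr)$, while $C_n=4 r_{\max}\Gamma(1-\gamma^{H})$.

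Substituting into $C_n\mathcal{H}(n,\sigma)=C_l\mathcal{H}(l,\sigma)$, dividing by $r_{\max}$, and factoring $\Gamma(1-\gamma^{H})$ out of the bracket in $C_l$ so that it cancels the matching factor in $C_n$, one gets $4\,\mathcal{H}(n,\sigma)=\frac{\mathcal{R}+\gamma\Gamma\Omega}{\mathcal{S}-1}\bigl(\frac{\mathcal{S}}{\Gamma(1-\gamma^{H})}\frac{(\gamma\mathcal{S})^{H}-1}{\gamma\mathcal{S}-1}-1\bigr)\mathcal{H}(l,\sigma)$. Since $\mathcal{H}(n,\sigma)/\mathcal{H}(l,\sigma)=\sqrt{l/n}$, squaring gives exactly the claimed expression for $l$ in terms of $n$. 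The argument is bookkeeping rather than mathematics; the only steps requiring care are the $t=0$ substitution (which is what makes the double sum telescope into clean geometric sums) and the factoring of $\Gamma(1-\gamma^{H})$ from the belief bracket so that the final formula appears in the precise normalised form stated in the lemma. I do not expect any genuine obstacle beyond this clerical accuracy.
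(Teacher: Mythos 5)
Your proposal is correct and follows essentially the same route as the paper's own proof: both impose $S_n=S_m=S_l$ on the $t=0$ bound, cancel the common factor $4r_{\max}\sum_{k=0}^{H-1}\gamma^{k}$ to obtain $m=(\gamma\Gamma)^{2}n$, and then telescope the double sum $\sum_{k}\gamma^{k}\mathcal{S}^{k}\sum_{j\le k}\mathcal{S}^{-j}$ into $\frac{1}{\mathcal{S}-1}\bigl(\mathcal{S}\frac{(\gamma\mathcal{S})^{H}-1}{\gamma\mathcal{S}-1}-\Gamma(1-\gamma^{H})\bigr)$ before solving for $l$. No gaps; the bookkeeping matches the paper's derivation step for step.
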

\begin{proof}
    According to Theorem \ref{lemma:lookahead-error-bound} and Definition \ref{def:fixed-successions}, the lookahead error at time $t=0$ has the following upper-bound:
    \begin{multline*}
            \epsilon_0 \le \; r_{\text{max}} \sum_{k=0}^{H-1} \gamma^k \left( 4 \mathcal{H}(n,\sigma) + 4\gamma\Gamma\mathcal{H}(m,\sigma) + \left( \mathcal{R} + \gamma \Gamma \Omega \right) \mathcal{S}^{k} \mathcal{H}(l,\sigma) \sum_{j=0}^{k} \frac{1}{\mathcal{S}^j} \right)\\
            + 2 \gamma^H \Gamma r_{\text{max}} = S_n + S_m + S_l + 2 \gamma^H \Gamma r_{\text{max}}
    \end{multline*}
    
    \noindent where $S_n, S_m$ and $S_l$ represent the sums over the number of samples $n, m$ and $l$, respectively. For their contributions to be the same, as stated in Lemma \ref{lemma:sample-relation}, all of these sums must be equal ($S_n = S_m = S_l$).
    
    Letting $S_n = S_m$\footnote{Recall that $\mathcal{H}(n,\sigma) = \sqrt{ \frac{1}{2 n} \log\left( \frac{2}{\sigma} \right) }$} yields:
    \begin{align*}
        \begin{split}
            4 r_{\text{max}} \mathcal{H}(n,\sigma) \sum_{k=0}^{H-1} \gamma^k &= 4 \gamma \Gamma r_{\text{max}} \mathcal{H}(m,\sigma) \sum_{k=0}^{H-1} \gamma^k\\
            \sqrt{n^{-1}} &= \gamma \Gamma \sqrt{m^{-1}}\\
            m &= \left( \gamma \Gamma \right)^2 n
        \end{split}
    \end{align*}
    
    Finally, making $S_n = S_l$ and resorting to Lemma \ref{lemma:geometric-series}:
    \begin{align*}
        \begin{split}
            4 r_{\text{max}} \mathcal{H}(n,\sigma) \sum_{k=0}^{H-1} \gamma^k =& r_{\text{max}} \mathcal{H}(l,\sigma) \left( \mathcal{R} + \gamma \Gamma \Omega \right) \sum_{k=0}^{H-1} \gamma^k \mathcal{S}^k \sum_{j=0}^{k} \frac{1}{\mathcal{S}^j}\\
            4 \sqrt{n^{-1}} \sum_{k=0}^{H-1} \gamma^k \stackrel{(\ref{lemma:geometric-series})}{=}& \sqrt{l^{-1}} \left( \mathcal{R} + \gamma \Gamma \Omega \right) \sum_{k=0}^{H-1} \gamma^k \frac{\mathcal{S}^{k+1} - 1}{\mathcal{S} - 1}\\
            4 \sqrt{n^{-1}} \Gamma \left( 1- \gamma^H \right) =& \sqrt{l^{-1}} \left( \mathcal{R} + \gamma \Gamma \Omega \right) \frac{1}{\mathcal{S}-1} \left( \mathcal{S} \sum_{k=0}^{H-1} \left( \gamma \mathcal{S} \right)^k - \sum_{k=0}^{H-1} \gamma^k \right)\\
            4 \sqrt{n^{-1}} \Gamma \left( 1- \gamma^H \right) \stackrel{(\ref{lemma:geometric-series})}{=}& \sqrt{l^{-1}} \left( \mathcal{R} + \gamma \Gamma \Omega \right) \frac{1}{\mathcal{S}-1} \left( \mathcal{S} \frac{\left(\gamma\mathcal{S}\right)^H - 1}{\gamma \mathcal{S} - 1} - \Gamma \left( 1 - \gamma^H \right) \right)\\
            l =& \left( \frac{1}{4} \frac{\mathcal{R} + \gamma \Gamma \Omega}{\mathcal{S} - 1} \left( \frac{\mathcal{S}}{\Gamma \left(1 - \gamma^H\right)} \frac{\left(\gamma\mathcal{S}\right)^H - 1}{\gamma \mathcal{S} - 1} - 1 \right) \right)^2 n
        \end{split}
    \end{align*}
\end{proof}

Given Definition \ref{def:fixed-successions} and Lemma \ref{lemma:sample-relation}, the bounds in Lemmas \ref{lemma:lookahead-error-bound} and \ref{lemma:confidence-bound} can be re-written as follows:

\begin{theorem}
    The algorithm's error $\epsilon_t$ has the following upper bound, with a confidence interval of $1 - \delta_t$:
    \begin{align*}
        \begin{split}
            \epsilon_t \le& \; r_{\text{max}} \sqrt{ \frac{1}{2 n} \log \left( \frac{2}{\sigma} \right) } \left( 8 \Gamma + 4 \mathcal{S}^t \right) + 2 \gamma \Gamma r_{\text{max}}\\
            \delta_t \le& \; \sigma \left( \mathcal{A} \Omega \right)^{H} \left( t + \mathcal{A} \left( 8 + \Omega \frac{H-1}{\left( \mathcal{A} \Omega - 1 \right)^2} \right) \right)
        \end{split}
    \end{align*}
    \label{theorem:algorithm-bounds}
\end{theorem}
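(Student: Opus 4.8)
The plan is to obtain both displayed inequalities purely by substitution: plug the time–invariant choices of Definition \ref{def:fixed-successions} and the sample–size relations of Lemma \ref{lemma:sample-relation} into the bounds of Theorem \ref{lemma:lookahead-error-bound} and Lemma \ref{lemma:confidence-bound}, then collapse the resulting geometric (and arithmetico–geometric) sums and bound the remaining $O(1)$ factors crudely.

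\emph{Error bound.} Starting from Theorem \ref{lemma:lookahead-error-bound}, set $n_{t+k}=n$, $m_{t+k}=m$, $l_j=l$ and $\sigma_\bullet=\sigma$, so that every Hoeffding factor becomes a rescaling of $h:=\mathcal{H}(n,\sigma)=\sqrt{\tfrac{1}{2n}\log(2/\sigma)}$. From $m=(\gamma\Gamma)^2 n$ one has $\mathcal{H}(m,\sigma)=h/(\gamma\Gamma)$, hence $4\mathcal{H}(n,\sigma)+4\gamma\Gamma\mathcal{H}(m,\sigma)=8h$; summing over $k$ with $\sum_{k=0}^{H-1}\gamma^k=\Gamma(1-\gamma^H)\le\Gamma$ turns the reward/observation contribution into at most $8\Gamma r_{\text{max}}h$, the $8\Gamma$ term. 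For the belief–sampling contribution I would substitute the explicit value of $l$ from Lemma \ref{lemma:sample-relation}, use $\mathcal{S}^{t+k}\sum_{j=0}^{t+k}\mathcal{S}^{-j}=\tfrac{\mathcal{S}^{t+k+1}-1}{\mathcal{S}-1}$, and evaluate $\sum_{k=0}^{H-1}\gamma^k\tfrac{\mathcal{S}^{t+k+1}-1}{\mathcal{S}-1}$ as a geometric series in $\gamma\mathcal{S}$ (Lemma \ref{lemma:geometric-series}). Since the constant defining $l$ in Lemma \ref{lemma:sample-relation} was calibrated so that this sum equals the reward contribution at $t=0$, carrying a general $t$ extracts a factor $\mathcal{S}^t$; bounding the leftover $O(1)$ factor yields the $4\mathcal{S}^t$ term. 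Finally $2\gamma^H\Gamma r_{\text{max}}\le 2\gamma\Gamma r_{\text{max}}$ since $\gamma^H\le\gamma$ for $H\ge 1$, and collecting the three pieces under the common prefactor $r_{\text{max}}h$ gives the first inequality.

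\emph{Confidence bound.} From Lemma \ref{lemma:confidence-bound} with $\sigma_{t+i}=\sigma$, split $2+\mathcal{A}(t+i+4)=(2+\mathcal{A}t+4\mathcal{A})+\mathcal{A}i$. The $i$–independent part contributes $(2+\mathcal{A}t+4\mathcal{A})\sum_{i=0}^{H-1}(\mathcal{A}\Omega)^i$, which I would bound via $\sum_{i=0}^{H-1}x^i=\tfrac{x^H-1}{x-1}\le\tfrac{x^H}{\mathcal{A}}$ for $x=\mathcal{A}\Omega$ (which holds once $\Omega\ge 2$, since then $\mathcal{A}+1-x\le 0$), producing the $\sigma(\mathcal{A}\Omega)^H t$ term and the $8\mathcal{A}$ constant. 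The $i$–dependent part needs the arithmetico–geometric sum $\sum_{i=0}^{H-1}i\,x^i=\tfrac{x-Hx^H+(H-1)x^{H+1}}{(x-1)^2}\le\tfrac{(H-1)x^{H+1}}{(x-1)^2}$; after factoring out $(\mathcal{A}\Omega)^H$ this becomes the $\Omega\tfrac{H-1}{(\mathcal{A}\Omega-1)^2}$ term inside $\mathcal{A}(\cdots)$. Regrouping everything under $\sigma(\mathcal{A}\Omega)^H$ gives the second inequality.

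I expect the main obstacle to be the belief–sampling term in the error bound: after substituting the somewhat unwieldy closed form for $l$ from Lemma \ref{lemma:sample-relation}, one has to check that the $t$–dependence of the double sum $\sum_k\gamma^k\mathcal{S}^{t+k}\sum_j\mathcal{S}^{-j}$ factors as $\mathcal{S}^t$ times exactly the $t=0$ normalization constant, so that the coefficient collapses to the advertised form; this bookkeeping, together with the crude estimates that pin down all the constants ($\Gamma(1-\gamma^H)\le\Gamma$, $\mathcal{S}\ge 2$, $\mathcal{A}\Omega\ge 2$, $\Omega\ge 2$), is where the work concentrates. The arithmetico–geometric sum in the confidence bound is a smaller but analogous piece of careful algebra.
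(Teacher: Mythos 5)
Your route is the same as the paper's: the published proof of the $\epsilon_t$ bound is literally the single sentence ``follows directly from Theorem \ref{lemma:lookahead-error-bound} by substituting $m$ and $l$ using Lemma \ref{lemma:sample-relation},'' and your treatment of the $\delta_t$ bound (split off the $i\mathcal{A}$ term, apply Lemma \ref{lemma:geometric-series} and Lemma \ref{lemma:other-geometric-series}, bound $\tfrac{x^H-1}{x-1}$ by a power of $x$) reproduces the paper's derivation almost line by line, up to immaterial rearrangements of the constants. So in terms of approach there is nothing to distinguish.

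The one place you single out as the main obstacle --- checking that the belief-sampling sum at time $t$ is exactly $\mathcal{S}^t$ times its $t=0$ calibration --- is a legitimate worry, and the factorization you hope for does not in fact hold. Writing $X=\tfrac{(\gamma\mathcal{S})^H-1}{\gamma\mathcal{S}-1}$ and $Y=\Gamma(1-\gamma^H)$, the double sum satisfies $\sum_{k=0}^{H-1}\gamma^k\,\tfrac{\mathcal{S}^{t+k+1}-1}{\mathcal{S}-1}=\tfrac{\mathcal{S}^{t+1}X-Y}{\mathcal{S}-1}$, and the ratio of this to its $t=0$ value is $\tfrac{\mathcal{S}^{t+1}X-Y}{\mathcal{S}X-Y}\ge\mathcal{S}^{t}$, with the inequality pointing the wrong way for an upper bound; moreover the $t=0$ calibration of Lemma \ref{lemma:sample-relation} equates the belief contribution to $4r_{\text{max}}\mathcal{H}(n,\sigma)\,\Gamma(1-\gamma^H)$, not to $4r_{\text{max}}\mathcal{H}(n,\sigma)$, so a careful substitution yields a coefficient of at least $4\,\Gamma(1-\gamma^H)\,\mathcal{S}^t$ rather than the stated $4\mathcal{S}^t$. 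This is not a defect introduced by you --- the paper's one-line proof skips the same bookkeeping, and its own final display for $\delta_t$ does not match the constants in the theorem statement either --- but your sketch as written asserts that ``bounding the leftover $O(1)$ factor yields the $4\mathcal{S}^t$ term,'' and that step cannot be completed as stated. To close the argument you would either have to weaken the claimed constant (e.g.\ to $4\Gamma\mathcal{S}^t$) or track the exact expression $\tfrac{\mathcal{S}^{t+1}X-Y}{\mathcal{S}X-Y}$ rather than replacing it by $\mathcal{S}^t$.
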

\begin{proof}
    The bound on error $\epsilon_t$ follows directly from Theorem \ref{lemma:lookahead-error-bound}, by substituting $m$ and $l$ using the expressions in Lemma \ref{lemma:sample-relation}.
    
    As for the bound for the confidence $\delta_t$, it follows from Lemma \ref{lemma:confidence-bound} that:
    \begin{align*}
        \begin{split}
            \delta_t \le& \; \sum_{i=0}^{H-1} \mathcal{A}^i \Omega^i \left( 2 + \mathcal{A} \left(t + i + 4\right) \right) \sigma_{t+i}\\
            =& \; \sigma \sum_{i=0}^{H-1} \mathcal{A}^i \Omega^i \left( 2 + \mathcal{A} \left(t + 4\right) + i \mathcal{A} \right)
        \end{split}
    \end{align*}
    
    By using the geometric series of Lemma \ref{lemma:geometric-series}:
    \begin{align*}
        \begin{split}
            \delta_t \stackrel{(\ref{lemma:geometric-series})}{\le}& \; \sigma \left( \mathcal{A} \sum_{i=0}^{H-1} i \mathcal{A}^i \Omega^i + \left( 2 + \mathcal{A} \left(t + 4\right) \right) \frac{\left(\mathcal{A}\Omega\right)^H - 1}{\mathcal{A}\Omega - 1} \right) \\
            \le& \; \sigma \left( \mathcal{A} \sum_{i=0}^{H-1} i \mathcal{A}^i \Omega^i + \left( 2 + \mathcal{A} \left(t + 4\right) \right) \left(\mathcal{A}\Omega\right)^H \right)
        \end{split}
    \end{align*}
    
    Analogously applying a variation of the finite geometric series given by Lemma \ref{lemma:other-geometric-series} to the sum $\sum_{i=0}^{H-1}i\mathcal{A}^i\Omega^i$, the bound becomes:
    \begin{align*}
        \begin{split}
            \delta_t \stackrel{(\ref{lemma:other-geometric-series})}{\le}& \; \sigma \left( \mathcal{A} \frac{\left(H-1\right)\left(\mathcal{A}\Omega\right)^{H+1} - H \left( \mathcal{A}\Omega \right)^H + \mathcal{A}\Omega}{\left(\mathcal{A}\Omega-1\right)^2} + \left( 2 + \mathcal{A} \left(t + 4\right) \right) \left(\mathcal{A}\Omega\right)^H \right)\\
            \le& \; \sigma \left( \mathcal{A} \frac{\left(H-1\right)\left(\mathcal{A}\Omega\right)^{H+1}}{\left(\mathcal{A}\Omega-1\right)^2} + \left( 2 + \mathcal{A} \left(t + 4\right) \right) \left(\mathcal{A}\Omega\right)^H \right)\\
            \le& \; \sigma \left(\mathcal{A}^2\Omega\right)^H \left( \mathcal{A}\Omega \frac{\left(H-1\right)}{\left(\mathcal{A}\Omega-1\right)^2} + \mathcal{A} \left(t + 4\right) + 2 \right)\\
            \le& \; \sigma \left( \mathcal{A} \Omega \right)^{H} \left( 2 + \mathcal{A} \left( t + 4 + \mathcal{A} \Omega \frac{H-1}{\left( \mathcal{A} \Omega - 1 \right)^2} \right) \right)
        \end{split}
    \end{align*}
\end{proof}

As can be seen in Theorem \ref{theorem:algorithm-bounds}, both the bounds for $\epsilon_t$ and $\delta_t$ grow over time. Suppose that the designed algorithm accepts three arguments as input: the maximum error $\epsilon$, the minimum confidence interval $1 - \delta$, and the stopping time $T$ for the decision-making algorithm. In this scenario, both $\epsilon_t$ and $\delta_t$ will have a maximum bound for $t=T$. To ensure the error $\epsilon$ and the confidence interval $1-\delta$, Theorem \ref{theorem:algorithm-bounds} can be re-expressed to derive a lower-bound for the number of samples and an upper bound for $\sigma$ (the probability of failure for the sampling bounds):
\begin{align}
    \begin{split}
        n \ge& \; \frac{1}{2} \log\left(\frac{2}{\sigma}\right) \left( \frac{r_{\text{max}}}{\epsilon - 2\gamma^H\Gamma r_{\text{max}}} \left( 8\Gamma + 4 \mathcal{S}^T \right) \right)^2\\
        \sigma \le& \; \delta \left( \mathcal{A} \Omega \right)^{-H} \left( 2 + \mathcal{A} \left( T + 4 + \mathcal{A} \Omega \frac{H-1}{\left( \mathcal{A} \Omega - 1 \right)^2} \right) \right)^{-1}
    \end{split}
    \label{eq:sample-confidence-bounds}
\end{align}

The above conditions, however, are not enough to ensure the error $\epsilon$. This is because no matter how many samples one extracts for approximating the probability distributions, the number of samples can't compensate for the error of performing a finite lookahead if the chosen horizon $H$ is too shallow \footnote{Using a finite lookahead introduces an error of $2\gamma\Gamma r_{\text{max}}$, as in Theorem \ref{theorem:algorithm-bounds}.}. As such, it should also be assured that the lookahead horizon $H$ is large enough to make the desired error $\epsilon$ attainable. This condition is met when $2\gamma^H\Gamma r_{\text{max}} < \epsilon$, such that:
\begin{equation}
    H > \log_{\gamma} \left( \frac{\epsilon}{2 \Gamma r_{\text{max}}} \right)
    \label{eq:horizon-bound}
\end{equation}

The three bounds given by Equations \eqref{eq:sample-confidence-bounds} and \eqref{eq:horizon-bound} sum up the whole sample complexity analysis of the lookahead algorithm. They are captured in Theorem \ref{theorem:sample-complexity}, the main theorem of this analysis:

\begin{theorem}
    Consider a stopping time $T$, horizon $H$, and Definition \ref{def:fixed-successions} and Lemma \ref{lemma:sample-relation} for the lookahead algorithm. A near-optimal action is guaranteed to be taken at each time step of the decision-making process with a regret of at most $\epsilon$ and confidence interval of at least $1-\delta$ if all of the following conditions are met:
    \begin{align*}
        \begin{split}
            n \ge& \; \frac{1}{2} \log\left(\frac{2}{\sigma}\right) \left( \frac{r_{\text{max}}}{\epsilon - 2\gamma^H\Gamma r_{\text{max}}} \left( 8\Gamma + 4 \mathcal{S}^T \right) \right)^2\\
            \sigma \le& \; \delta \left( \mathcal{A} \Omega \right)^{-H} \left( 2 + \mathcal{A} \left( T + 4 + \mathcal{A} \Omega \frac{H-1}{\left( \mathcal{A} \Omega - 1 \right)^2} \right) \right)^{-1}\\
            H >& \; \log_{\gamma} \left( \frac{\epsilon}{2 \Gamma r_{\text{max}}} \right)
        \end{split}
    \end{align*}
    \label{theorem:sample-complexity}
\end{theorem}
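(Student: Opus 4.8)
The plan is to read this theorem as the \emph{inversion} of the two bounds gathered in Theorem~\ref{theorem:algorithm-bounds} (equivalently, Theorem~\ref{lemma:lookahead-error-bound} specialised through Definition~\ref{def:fixed-successions} and Lemma~\ref{lemma:sample-relation}). That result already supplies, at every time step $t$, an upper bound on the regret $\epsilon_t$ of Definition~\ref{def:lookahead-error} and on the failure probability $\delta_t$ of that regret bound. The first thing I would note is that both bounds are monotonically nondecreasing in $t$ — the regret bound through its $\mathcal{S}^{t}$ factor, the confidence bound through its linear-in-$t$ factor, everything else being frozen by Definition~\ref{def:fixed-successions}. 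Consequently it suffices to force the $t=T$ instances of the two bounds to meet the targets $\epsilon$ and $\delta$: the same guarantees then propagate to every earlier step, so a near-optimal action is chosen throughout the decision process of length $T$.

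It then remains to solve two scalar inequalities. For the confidence, I would impose $\delta_T\le\delta$ on the bound of Theorem~\ref{theorem:algorithm-bounds} at $t=T$, namely $\sigma\,(\mathcal{A}\Omega)^{H}\bigl(2+\mathcal{A}(T+4+\mathcal{A}\Omega(H-1)(\mathcal{A}\Omega-1)^{-2})\bigr)\le\delta$, and isolate $\sigma$; this is the second displayed condition. For the regret, I would impose $\epsilon_T\le\epsilon$ on $r_{\text{max}}\sqrt{(2n)^{-1}\log(2/\sigma)}\,(8\Gamma+4\mathcal{S}^{T})+2\gamma^{H}\Gamma r_{\text{max}}\le\epsilon$. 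Moving the horizon-dependent term to the right forces me first to require $\epsilon-2\gamma^{H}\Gamma r_{\text{max}}>0$; rearranging gives $\gamma^{H}<\epsilon/(2\Gamma r_{\text{max}})$, and since $0<\gamma<1$ makes $x\mapsto\log_\gamma x$ decreasing, this is exactly $H>\log_\gamma\!\bigl(\epsilon/(2\Gamma r_{\text{max}})\bigr)$ — the third condition. Granted this positivity, I isolate the square root, square, and solve for $n$, obtaining the first condition. The three requirements are therefore coupled: the horizon bound is precisely what makes the lower bound on $n$ well-posed.

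Almost all of the work is already behind us: the decomposition of the regret into sampling errors, the Hoeffding bound on each one, and the recursive propagation of the belief-state error were carried out in Lemmas~\ref{lemma:error-split}--\ref{lemma:lookahead-belief-error}, with the union-bound accounting in Lemma~\ref{lemma:confidence-bound}, so no fresh probabilistic argument is needed here. The only delicate points are the direction of the inequality when passing to $\log_\gamma$ (which flips because the base is smaller than one) and the easy but necessary check that the bounds of Theorem~\ref{theorem:algorithm-bounds} really are increasing in $t$, so that testing at $t=T$ is enough. I expect the actual write-up to be brief — essentially Equations~\eqref{eq:sample-confidence-bounds} and~\eqref{eq:horizon-bound} repackaged as a set of sufficient conditions.
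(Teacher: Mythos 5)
Your proposal matches the paper's own derivation: the paper obtains Theorem \ref{theorem:sample-complexity} precisely by noting that the bounds of Theorem \ref{theorem:algorithm-bounds} are maximal at $t=T$, inverting them to get the conditions on $n$ and $\sigma$ (Equations \eqref{eq:sample-confidence-bounds}), and adding the horizon condition \eqref{eq:horizon-bound} to make the denominator $\epsilon - 2\gamma^H\Gamma r_{\text{max}}$ positive. Your observations about monotonicity in $t$ and the flipped inequality under $\log_\gamma$ for $0<\gamma<1$ are correct and consistent with the paper's (largely implicit) argument.
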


\subsection{Time Complexity}
\label{appsub:time_comp}

To calculate the computational complexity of the lookahead algorithm, in both the classical and quantum-classical versions, one must multiply the total number of samples required by the computational complexity necessary to compute each sample. Given that the computational complexity for each sample has already been defined, the only task left is to define the total number of samples that need to be extracted.

In order to do so, it is important to recall that there are three types of sampling operations that the algorithm performs:
\begin{itemize}
    \item \textit{Reward sampling}: this sampling operation is performed at every observation node and requires $n$ samples.
    \item \textit{Observation sampling}: performed at every observation node and requires $m$ samples.
    \item \textit{Belief sampling}: performed at every belief node and requires $l$ samples.
\end{itemize}

As such, the total number of samples $N_s$ can be defined in terms of the number of belief nodes $N_b$ and the number of observation nodes $N_o$ as:
\begin{equation*}
    N_s = l N_b + \left(n + m\right) N_o
\end{equation*}

\noindent where the number of belief nodes is computed as
\begin{equation*}
    N_b = \sum_{i=0}^{H-1} \left( \mathcal{A}\Omega \right)^i
    = \frac{\left( \mathcal{A}\Omega \right)^H - 1}{ \mathcal{A}\Omega - 1}
\end{equation*}

\noindent and the number of observation nodes as
\begin{equation*}
    N_o = \mathcal{A} \sum_{i=0}^{H-1} \left( \mathcal{A}\Omega \right)^i
    = \mathcal{A} \frac{\left( \mathcal{A}\Omega \right)^H - 1}{ \mathcal{A}\Omega - 1}
\end{equation*}

\noindent such that the total number of samples can be re-expressed as
\begin{equation*}
    N_s = \frac{\left( \mathcal{A} \Omega \right)^H - 1}{\mathcal{A} \Omega - 1} \left( l + \mathcal{A} \left( n + m \right) \right)
\end{equation*}

Let $C_n$, $C_m$, and $C_l$ represent the computational complexity of extracting each of the three types of samples considered. In asymptotic terms, where $\mathcal{S}$, $\mathcal{A}$, $\mathcal{R}$, $\Omega$, $H$, and $T$ are all assumed to be very large, the computational complexity of the algorithm can be expressed as:
\begin{equation*}
    \mathcal{O} \left( \mathcal{A}^{H-1} \Omega^{H-1} \left( l C_l + \mathcal{A} \left( n C_n + m C_m \right) \right) \right)
\end{equation*}

Recalling the relation between the number of samples $n$, $m$ and $l$ established in Lemma \ref{lemma:sample-relation}, this complexity becomes:
\begin{align}
    \begin{split}
        &\mathcal{O} \left( n \mathcal{A}^{H-1} \Omega^{H-1} \left( \left( \frac{1}{4} \frac{\mathcal{R} + \gamma \Gamma \Omega}{\mathcal{S} - 1} \left( \frac{\mathcal{S}}{\Gamma \left(1 - \gamma^H\right)} \frac{\left(\gamma\mathcal{S}\right)^H - 1}{\gamma \mathcal{S} - 1} - 1 \right) \right)^2 C_l + \mathcal{A} \left( C_n + \gamma^2 \Gamma^2 C_m \right) \right) \right)\\
        =& \; \mathcal{O} \left( n \mathcal{A}^{H-1} \Omega^{H-1} \left( \mathcal{A} \left( C_n + C_m \right) + \left( \left(\mathcal{R} + \Omega \right)\mathcal{S}^{H-1} \right)^2 C_l \right) \right)
    \end{split}
    \label{eq:comp-C}
\end{align}

To substitute the computational complexities $C_n$, $C_m$, and $C_l$, a distinction has to be made for the classical and quantum-classical algorithms, since they use different inference algorithms to extract probability distributions.

In the classical case, direct sampling is used for performing inference on the rewards and observations, while belief updating uses rejection sampling. The computational complexities are of order $\mathcal{O}\left(NM\right)$ for direct sampling and $\mathcal{O}\left(NMP(e)^{-1}\right)$ for rejection sampling.

What does the evidence stand for in the case of rejection sampling? As this operation is performed at every observation node in the lookahead tree, to account for all these sampling operations, it is the mean inverse probability of evidence for belief updating. Since belief updating only uses observation RVs as evidence, then it is given by $c_l = \sum_{o_{t+1}} P(o_{t+1}|\mathbf{b}_t,a_t)^{-1}$, where the sum is over all observations in the lookahead tree, conditioned on the previous belief state $\mathbf{b}_t$ and previous action $a_t$.

For the quantum-classical case, direct sampling is still used for reward and observation sampling, having the same complexity as the classical case. Quantum rejection sampling is, however, applied for belief updating. Therefore, in the quantum-classical case, the complexity for $C_l$ is of the order $\mathcal{O}\left(N 2^M q_l\right)$, where $q_l = P(e)^{-\frac{1}{2}} = \sum_{o_{t+1}} P(o_{t+1}, \mathbf{b}_t, a_t)^{-\frac{1}{2}}$. In this scenario, $c_l$ is no longer adequate to represent $P(e)^{-\frac{1}{2}}$, because the latter is an average of the inverse square rooted probabilities $P\left(o_{t+1}|\mathbf{b}_t,a_t\right)^{-\frac{1}{2}}$ and not of the inverse probabilities $P\left(o_{t+1}|\mathbf{b}_t,a_t\right)^{-1}$.

It is also important to mention that the complexity of the algorithm can be written both in terms of the number of samples $n$ or in terms of the precision $\epsilon$ and confidence interval $1-\delta$, since they can be related according to Theorem \ref{theorem:sample-complexity}:
\begin{equation}
	\mathcal{O}\left(n\right) \le \mathcal{O}\left(\left(\frac{\mathcal{S}^T}{\epsilon}\right)^2\log\left(\frac{1}{\sigma}\right)\right) \le \mathcal{O}\left( \left( \frac{\mathcal{S}^T}{\epsilon} \right)^2 \log \left( \left( T + \frac{H}{\mathcal{A}\Omega} \right) \frac{\mathcal{A}^{H+1} \Omega^H}{\delta} \right) \right)
	\label{eq:sample-precision-relation}
\end{equation}

In this context, Theorem \ref{theorem:classical-complexity} characterizes the computational complexity of the classical lookahead algorithm:
\begin{theorem}
    The computational complexity of the classical lookahead algorithm is given by any of the following expressions:
    \begin{gather*}
        \mathcal{O} \left( n N M \mathcal{A}^{H-1} \Omega^{H-1} \left( \mathcal{A} + \left( \left(\mathcal{R} + \Omega \right)\mathcal{S}^{H-1} \right)^2 c_l \right) \right)\\
        \mathcal{O} \left( N M \mathcal{A}^{H-1} \Omega^{H-1} \left( \mathcal{A} + \left( \left(\mathcal{R} + \Omega \right)\mathcal{S}^{H-1} \right)^2 c_l \right) \left( \frac{\mathcal{S}^T}{\epsilon} \right)^2 \log \left( \left( T + \frac{H}{\mathcal{A}\Omega} \right) \frac{\mathcal{A}^{H+1} \Omega^H}{\delta} \right) \right)
    \end{gather*}
    \label{theorem:classical-complexity}
\end{theorem}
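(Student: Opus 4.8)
The plan is to derive the statement by specializing the generic complexity bound \eqref{eq:comp-C} to the inference primitives actually used by the classical lookahead algorithm, and then trading the sample count $n$ for the accuracy parameters $\epsilon$ and $\delta$ using Theorem~\ref{theorem:sample-complexity}. In other words, the proof is a substitution into an expression that already carries all of the tree combinatorics, followed by an asymptotic cleanup; no fresh counting argument is needed.

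Concretely, I would first record which routine the classical algorithm uses for each of the three sampling operations. Reward and observation estimates are obtained by direct sampling, whose cost on a Bayesian network with $N$ random variables and maximum in-degree $M$ is $\mathcal{O}(NM)$, so $C_n = C_m = \mathcal{O}(NM)$. The belief update, however, must use rejection sampling, whose cost at a node with acceptance probability $p$ is $\mathcal{O}(NM\,p^{-1})$; the aggregation of this overhead over the belief-update operations of the lookahead tree is precisely what the constant $c_l=\sum_{p\in\Phi}p^{-1}$ records, so one takes $C_l=\mathcal{O}(NM\,c_l)$. Substituting $C_n=C_m=\mathcal{O}(NM)$ and $C_l=\mathcal{O}(NM\,c_l)$ into \eqref{eq:comp-C}, pulling the common factor $NM$ out of the inner bracket, and absorbing the $\gamma$ and $\Gamma$ constants into the $\mathcal{O}$ exactly as was done to obtain \eqref{eq:comp-C}, yields the first displayed bound
\[
  \mathcal{O}\!\left( n N M \mathcal{A}^{H-1} \Omega^{H-1} \left( \mathcal{A} + \left( (\mathcal{R}+\Omega)\mathcal{S}^{H-1} \right)^2 c_l \right) \right).
\]

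For the second form I would invoke the sample-complexity result: Theorem~\ref{theorem:sample-complexity}, equivalently the rightmost inequality of \eqref{eq:sample-precision-relation}, shows that
\[
  n = \mathcal{O}\!\left( \left( \frac{\mathcal{S}^T}{\epsilon} \right)^2 \log\!\left( \left( T + \frac{H}{\mathcal{A}\Omega} \right) \frac{\mathcal{A}^{H+1}\Omega^H}{\delta} \right) \right)
\]
samples suffice to guarantee regret at most $\epsilon$ with confidence at least $1-\delta$ at every step up to the stopping time $T$. Replacing $n$ in the first bound by this expression and rearranging gives the second bound verbatim.

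The step I expect to need the most care is the bookkeeping around $c_l$: the number of belief nodes $\mathcal{A}^{H-1}\Omega^{H-1}$ already appears as a prefactor in \eqref{eq:comp-C}, and one must be sure that the tree-wide aggregation implicit in $c_l$ is combined with it consistently rather than double-counted. The cleanest way to present this --- and the one I would adopt --- is to read $c_l$ as the mean inverse acceptance probability over the tree's belief updates (as remarked just before the theorem), so that the per-node cost formula \eqref{eq:comp-C} applies unchanged with $C_l=\mathcal{O}(NM\,c_l)$. Beyond that, everything is routine: factoring out $NM$, collapsing the $l$-to-$n$ and $m$-to-$n$ ratios of Lemma~\ref{lemma:sample-relation} into the $((\mathcal{R}+\Omega)\mathcal{S}^{H-1})^2$ block, and the final substitution for $n$.
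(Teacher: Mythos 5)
Your proposal is correct and takes essentially the same route as the paper: the paper's proof is exactly the one-line substitution of $C_n=C_m=\mathcal{O}(NM)$ and the rejection-sampling cost for $C_l$ into \eqref{eq:comp-C}, followed by the trade of $n$ for $\epsilon,\delta$ via \eqref{eq:sample-precision-relation}. If anything, your version is the more careful one --- your $C_l=\mathcal{O}(NM\,c_l)$ is the reading consistent with the theorem statement and with the surrounding discussion of $c_l$ as the aggregated inverse acceptance probability, whereas the paper's own proof line writes $C_l=\mathcal{O}\left(N 2^{M} q_l\right)$, an evident copy-paste slip from the quantum case.
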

\begin{proof}
    This result is directly attained with Equations \eqref{eq:comp-C} and \eqref{eq:sample-precision-relation} and using $C_n = C_m = \mathcal{O}(NM)$ and $C_l = \mathcal{O}\left(N M c_l\right)$.
\end{proof}

Theorem \ref{theorem:quantum-complexity} is analogous to the previous theorem, but characterizes the computational complexity of the quantum-classical lookahead algorithm:
\begin{theorem}
    The computational complexity of the quantum-classical lookahead algorithm is given by any of the following expressions, whenever $2^M \approx M$:
    \begin{gather*}
        \mathcal{O} \left( n N M \mathcal{A}^{H-1} \Omega^{H-1} \left( \mathcal{A} + \left( \left(\mathcal{R} + \Omega \right)\mathcal{S}^{H-1} \right)^2 q_l \right) \right)\\
        \mathcal{O} \left( N M \mathcal{A}^{H-1} \Omega^{H-1} \left( \mathcal{A} + \left( \left(\mathcal{R} + \Omega \right)\mathcal{S}^{H-1} \right)^2 q_l \right) \left( \frac{\mathcal{S}^T}{\epsilon} \right)^2 \log \left( \left( T + \frac{H}{\mathcal{A}\Omega} \right) \frac{\mathcal{A}^{H+1} \Omega^H}{\delta} \right) \right)
    \end{gather*}
    \label{theorem:quantum-complexity}
\end{theorem}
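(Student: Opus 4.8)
The plan is to reuse the aggregate complexity expression \eqref{eq:comp-C}, derived earlier for a generic lookahead run, and substitute into it the per-sample costs appropriate to the quantum-classical algorithm, exactly as was done for the purely classical case in the proof of Theorem \ref{theorem:classical-complexity}. Recall that \eqref{eq:comp-C} gives the total cost as
\[
\mathcal{O}\!\left( n\,\mathcal{A}^{H-1}\Omega^{H-1}\left(\mathcal{A}\left(C_n+C_m\right)+\left(\left(\mathcal{R}+\Omega\right)\mathcal{S}^{H-1}\right)^2 C_l\right)\right),
\]
where $C_n$, $C_m$, $C_l$ denote the costs of producing one reward, observation, and belief-update sample respectively. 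The only structural change from the classical setting is that belief updating is now performed with quantum rejection sampling, so the whole task reduces to fixing these three costs correctly and then simplifying.

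First I would pin down the inference costs. As argued in Section \ref{subsec:mdp-pomdp}, the reward and observation distributions are inferred by classical direct sampling — their evidence consists only of root RVs of the DDN, and the quantum analogue of direct sampling is exponentially worse — so $C_n=C_m=\mathcal{O}(NM)$, unchanged from the classical analysis. Belief updating genuinely requires rejection sampling, so here I invoke the quantum rejection-sampling routine of Section \ref{subsec:qrs}, whose cost per accepted sample is $\mathcal{O}\!\left(N2^M P(e)^{-1/2}\right)$. Aggregating this over every observation node of the lookahead tree — each of which triggers a belief update whose acceptance probability is its own observation probability $P(o_{t+1}\mid\mathbf b_t,a_t)$ — replaces the classical factor $c_l=\sum_{p\in\Phi}p^{-1}$ with $q_l=\sum_{p\in\Phi}p^{-1/2}$, since in the quantum case the relevant quantity is a sum of inverse square roots rather than of inverses. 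Hence $C_l=\mathcal{O}\!\left(N2^M q_l\right)$.

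Next I would invoke the hypothesis $2^M\approx M$, valid for sparse DDNs, to write $C_l=\mathcal{O}(NM\,q_l)$; with this, all three costs share the common factor $NM$, and substituting $C_n=C_m=\mathcal{O}(NM)$ and $C_l=\mathcal{O}(NMq_l)$ into the displayed expression and pulling $NM$ outside the parenthesis yields the first claimed bound. The second bound then follows by eliminating $n$ through the sample-complexity relation of Theorem \ref{theorem:sample-complexity} — concretely the chain \eqref{eq:sample-precision-relation} expressing $\mathcal{O}(n)$ in terms of $\epsilon$, $\delta$, $T$, $H$, $\mathcal{A}$, $\Omega$, $\mathcal{S}$ — and substituting.

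The main obstacle, and essentially the only step that is not bookkeeping, is justifying that aggregating the quantum rejection-sampling cost over the tree produces precisely $q_l$ rather than, say, a re-rooted version of $c_l$: one must argue that each belief update is a separate rejection-sampling instance with its own acceptance probability, so the costs add node by node, and that in general $\sqrt{\sum p^{-1}}\neq\sum p^{-1/2}$, which is exactly why $q_l$ is the correct replacement. A secondary subtlety is that $2^M\approx M$ is an explicit sparsity assumption rather than an asymptotic identity — the theorem is stated only under it — and this is precisely the point at which the per-query exponential-in-$M$ overhead of the quantum Bayesian-network encoding operator is being absorbed into the $\mathcal{O}$-constant; this should be flagged clearly, since it is the origin of the ``null-to-quadratic'' range of the speedup discussed after Corollary \ref{cor:simplified}.
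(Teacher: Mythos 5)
Your proposal is correct and follows essentially the same route as the paper: it substitutes $C_n=C_m=\mathcal{O}(NM)$ and $C_l=\mathcal{O}\!\left(N2^M q_l\right)$ into Equation \eqref{eq:comp-C}, invokes the sparsity assumption $2^M\approx M$ to obtain the first bound, and then eliminates $n$ via Theorem \ref{theorem:sample-complexity} and the chain \eqref{eq:sample-precision-relation} for the second. Your added remarks on why the aggregate is $q_l=\sum_{p\in\Phi}p^{-1/2}$ rather than a square root of $c_l$, and on where the $2^M$ overhead is absorbed, are consistent with the paper's own discussion following Corollary \ref{cor:quantum-complexity}.
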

\begin{proof}
    This result is directly attained with Equations \eqref{eq:comp-C} and \eqref{eq:sample-precision-relation} and using $C_n = C_m = \mathcal{O}(NM)$ and $C_l = \mathcal{O}\left(N 2^M q_l\right)$.
\end{proof}

As can be seen from Theorems \ref{theorem:classical-complexity} and \ref{theorem:quantum-complexity}, a partial speedup is obtained in the quantum case for the belief-state sampling operations, while the computational complexity of the rest of the algorithm is similar across both classical and quantum algorithms. This is so whenever the dynamic decision networks at hand are sparse enough (such that $2^M \approx M$), just as in the case of quantum rejection sampling. However, for some specific situations, the portion of the algorithm that uses a quantum subroutine dominates the complexity of the algorithm so much that the contribution of the classical subroutines to the computational complexity of the algorithm is negligible. In these scenarios, detailed in Corollaries \ref{cor:classical-complexity} and \ref{cor:quantum-complexity}, the quantum-classical lookahead has the greatest possible speedup over its classical counterpart:

\begin{corollary}
    Under the assumption that $\frac{1}{q_l} \ll \frac{\left( \left(\mathcal{R} + \Omega\right) \mathcal{S}^{H-1} \right)^2}{\mathcal{A}}$, the expressions for the computational complexity of the classical lookahead algorithm of Theorem \ref{theorem:classical-complexity} can be re-written as:
    \begin{gather*}
        \mathcal{O} \left( n N M \mathcal{A}^{H-1} \Omega^{H-1} \left( \left(\mathcal{R} + \Omega \right) \mathcal{S}^{H-1} \right)^2 c_l \right)\\
        \mathcal{O} \left( N M \mathcal{A}^{H-1} \Omega^{H-1} \left( \left(\mathcal{R} + \Omega \right) \frac{\mathcal{S}^{T+H-1}}{\epsilon} \right)^2 c_l \log \left( \left( T + \frac{H}{\mathcal{A}\Omega} \right) \frac{\mathcal{A}^{H+1} \Omega^H}{\delta} \right) \right)
    \end{gather*}
    \label{cor:classical-complexity}
\end{corollary}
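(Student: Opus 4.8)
The plan is to start from the first complexity bound of Theorem~\ref{theorem:classical-complexity}, namely $\mathcal{O}\!\left( n N M \mathcal{A}^{H-1} \Omega^{H-1} \left( \mathcal{A} + \left( \left(\mathcal{R} + \Omega \right)\mathcal{S}^{H-1} \right)^2 c_l \right) \right)$, and to show that under the stated hypothesis the additive term $\mathcal{A}$ inside the inner parenthesis is asymptotically negligible compared with $\left( \left(\mathcal{R} + \Omega \right)\mathcal{S}^{H-1} \right)^2 c_l$, so that it may be absorbed into the big-$\mathcal{O}$.

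First I would record the elementary monotonicity $c_l \ge q_l$: every entry of $\Phi$ is a probability $p \in (0,1]$, hence $p^{-1} \ge p^{-1/2}$, and summing over $\Phi$ gives $c_l = \sum_{p \in \Phi} p^{-1} \ge \sum_{p \in \Phi} p^{-1/2} = q_l$, so $q_l^{-1} \ge c_l^{-1}$. The hypothesis $q_l^{-1} \ll \left( \left(\mathcal{R} + \Omega\right) \mathcal{S}^{H-1} \right)^2 / \mathcal{A}$ is therefore \emph{stronger} than what the classical case requires: it entails $c_l^{-1} \ll \left( \left(\mathcal{R} + \Omega\right) \mathcal{S}^{H-1} \right)^2 / \mathcal{A}$, i.e. $\mathcal{A} \ll \left( \left(\mathcal{R} + \Omega\right) \mathcal{S}^{H-1} \right)^2 c_l$. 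Consequently $\mathcal{A} + \left( \left(\mathcal{R} + \Omega\right) \mathcal{S}^{H-1} \right)^2 c_l = \Theta\!\left( \left( \left(\mathcal{R} + \Omega\right) \mathcal{S}^{H-1} \right)^2 c_l \right)$, and substituting this into the first bound of Theorem~\ref{theorem:classical-complexity} yields the first claimed expression of the corollary.

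For the second expression, I would feed the simplified bound just obtained through the sample-count relation of Theorem~\ref{theorem:sample-complexity} (equivalently Equation~\eqref{eq:sample-precision-relation}), replacing $n$ by $\mathcal{O}\!\left( \left( \mathcal{S}^T / \epsilon \right)^2 \log\!\left( \left( T + H/(\mathcal{A}\Omega) \right) \mathcal{A}^{H+1} \Omega^H / \delta \right) \right)$, and then merge the factor $\left( \mathcal{S}^T / \epsilon \right)^2$ with $\left( \left(\mathcal{R} + \Omega\right) \mathcal{S}^{H-1} \right)^2$ into $\left( \left(\mathcal{R} + \Omega\right) \mathcal{S}^{T+H-1} / \epsilon \right)^2$, which is purely algebraic bookkeeping and completes the proof. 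The only point that needs a moment of care is the first step: since the hypothesis is phrased in terms of $q_l$ — chosen so that the identical condition also drives the companion corollary for the quantum-classical algorithm — one must explicitly invoke $c_l \ge q_l$ to transfer it to the classical setting. Everything else is routine asymptotic simplification, so I do not anticipate a genuine obstacle.
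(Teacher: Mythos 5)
Your proposal is correct and follows essentially the same route as the paper: the paper's proof likewise invokes $q_l \le c_l$ (hence $1/c_l \le 1/q_l$) to transfer the hypothesis from $q_l$ to $c_l$, concludes that the additive $\mathcal{A}$ term is negligible, and then reads the result off Theorem \ref{theorem:classical-complexity}. Your version merely spells out the intermediate steps (the pointwise inequality $p^{-1} \ge p^{-1/2}$ and the substitution of the bound on $n$ for the second expression) that the paper leaves implicit.
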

\begin{proof}
    Given that $q_l \le c_l$, it follows that $\frac{1}{c_l} \le \frac{1}{q_l}$, and therefore $\frac{1}{q_l} \ll \frac{\left( \left(\mathcal{R} + \Omega\right) \mathcal{S}^{H-1} \right)^2}{\mathcal{A}} \Rightarrow \frac{c_n}{c_l} \ll \frac{\left( \left(\mathcal{R} + \Omega\right) \mathcal{S}^{H-1} \right)^2}{\mathcal{A}}$. Thus, the result of this corollary follows directly from Theorem \ref{theorem:classical-complexity}.
\end{proof}

\begin{corollary}
    Under the assumption that $\frac{1}{q_l} \ll \frac{\left( \left(\mathcal{R} + \Omega\right) \mathcal{S}^{H-1} \right)^2}{\mathcal{A}}$, the expressions for the computational complexity of the quantum-classical lookahead algorithm of Theorem \ref{theorem:quantum-complexity} can be re-written as:
    \begin{gather*}
        \mathcal{O} \left( n N 2^M \mathcal{A}^{H-1} \Omega^{H-1} \left( \left(\mathcal{R} + \Omega \right) \mathcal{S}^{H-1} \right)^2 q_l \right)\\
        \mathcal{O} \left( N 2^M \mathcal{A}^{H-1} \Omega^{H-1} \left( \left(\mathcal{R} + \Omega \right) \frac{\mathcal{S}^{T+H-1}}{\epsilon} \right)^2 q_l \log \left( \left( T + \frac{H}{\mathcal{A}\Omega} \right) \frac{\mathcal{A}^{H+1} \Omega^H}{\delta} \right) \right)
    \end{gather*}
    \label{cor:quantum-complexity}
\end{corollary}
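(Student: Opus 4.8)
The plan is to mirror the proof of Corollary~\ref{cor:classical-complexity}: specialise the generic complexity bound to the quantum subroutine costs, then discard the term that the hypothesis renders subdominant. First I would return to the asymptotic expression~\eqref{eq:comp-C} — before the $2^M\approx M$ simplification used in Theorem~\ref{theorem:quantum-complexity} — and substitute the quantum per-sample costs, namely $C_n=C_m=\mathcal{O}(NM)$ for the reward and observation direct-sampling operations and $C_l=\mathcal{O}(N2^Mq_l)$ for the quantum rejection-sampling belief update. This turns~\eqref{eq:comp-C} into
\[
  \mathcal{O}\!\left( n N \mathcal{A}^{H-1}\Omega^{H-1}\!\left( \mathcal{A}M + \left(\left(\mathcal{R}+\Omega\right)\mathcal{S}^{H-1}\right)^{2} 2^M q_l \right)\right).
\]

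Next I would invoke the hypothesis $\tfrac{1}{q_l}\ll\tfrac{\left(\left(\mathcal{R}+\Omega\right)\mathcal{S}^{H-1}\right)^{2}}{\mathcal{A}}$, equivalently $\mathcal{A}\ll\left(\left(\mathcal{R}+\Omega\right)\mathcal{S}^{H-1}\right)^{2}q_l$. Multiplying through by $M$ and using $M\le 2^M$, one gets $\mathcal{A}M\ll\left(\left(\mathcal{R}+\Omega\right)\mathcal{S}^{H-1}\right)^{2}2^M q_l$, so the additive $\mathcal{A}M$ contribution is asymptotically negligible and can be dropped, leaving $\mathcal{O}\!\left( n N 2^M \mathcal{A}^{H-1}\Omega^{H-1}\left(\left(\mathcal{R}+\Omega\right)\mathcal{S}^{H-1}\right)^{2} q_l\right)$, the first claimed expression. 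The second form then follows by replacing $n$ with the bound of~\eqref{eq:sample-precision-relation} and collecting the resulting $\mathcal{S}^{2T}/\epsilon^{2}$ factor with $\mathcal{S}^{2(H-1)}$, using $\mathcal{S}^{2(H-1)}\mathcal{S}^{2T}=\mathcal{S}^{2(T+H-1)}$ to form $\left(\left(\mathcal{R}+\Omega\right)\mathcal{S}^{T+H-1}/\epsilon\right)^{2}$.

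I do not expect a genuine obstacle here; the whole argument is a routine asymptotic simplification of Theorem~\ref{theorem:quantum-complexity}. The one point that deserves a sentence of justification is why the corollary's prefactor is $2^M$ rather than the $M$ appearing in Theorem~\ref{theorem:quantum-complexity}: the term being discarded is exactly the classical-subroutine cost $\mathcal{A}(C_n+C_m)$, so the surviving complexity is dictated entirely by $C_l=\mathcal{O}(N2^Mq_l)$, which carries the honest $2^M$ factor and does not rely on the sparsity assumption $2^M\approx M$ at all. I would also make explicit the inequality $q_l\le c_l$ (and hence $1/c_l\le 1/q_l$), so that the same hypothesis simultaneously justifies dropping the $\mathcal{A}$-term in both the quantum and the classical corollaries, keeping the two proofs parallel.
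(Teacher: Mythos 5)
Your proposal is correct and follows essentially the same route as the paper, which proves the analogous classical corollary by simply dropping the $\mathcal{A}$-term of the theorem under the stated hypothesis and leaves this quantum case as an immediate analogue. Your extra care in returning to the pre-simplified expression~\eqref{eq:comp-C} to justify the $2^{M}$ prefactor (rather than the $M$ appearing in Theorem~\ref{theorem:quantum-complexity} under $2^{M}\approx M$) is a genuine clarification of a point the paper glosses over, but it does not change the substance of the argument.
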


Corollaries \ref{cor:classical-complexity} and \ref{cor:quantum-complexity} compare the classical and quantum algorithm's complexity in the best-case scenario for the quantum algorithm, where $\frac{c_n}{q_l} \ll \frac{\left( \left(\mathcal{R} + \Omega\right) \mathcal{S}^{H-1} \right)^2}{\mathcal{A}}$. It remains to be answered as to what RL problems this assumption can be applied to in practice, as this work does not address this question. Nonetheless, it gives a benchmark as to what speedup one can expect from the quantum algorithm in a best-case scenario.

What speedups can therefore be expected? Although one might initially think that using quantum rejection sampling would provide a quadratic speedup, this is not always the case. Dividing the classical complexity by the quantum complexity (and assuming that $M$ and $2^M$ have about the same order of magnitude), this fraction reduces to $\frac{c_l}{q_l} = \frac{\sum_{x} P^{-1}(x)}{\sum_{x} P^{-\frac{1}{2}} (x)}$, where the sum over $x$ represents the sum over the evidence. From inequality $\sqrt{\sum_x a_x} \le \sum_{x} \sqrt{a_x} \le \sum_{x} a_x$, it is easy to derive that:

\begin{equation}
    1 \le \frac{c_l}{q_l} \le q_l
    \label{eq:speedup-inequality}
\end{equation}

The inequality in Equation \eqref{eq:speedup-inequality} entails that the complexity of the quantum algorithm relative to the classical one can range between no speedup to a quadratic one, but it is most likely to fall somewhere in between these two extremes. This comes from the notion that, although using quantum rejection sampling yields a quadratic advantage over classical rejection sampling, the same cannot be said about using it sequentially, due to the mathematical property that the sum of square roots can be (and usually is) larger than the square root of the sum.

\subsection{Auxiliary lemmas}
\label{appsub:aux_lemmas}

To derive an upper bound on the lookahead algorithm's error of Definition \ref{def:lookahead-error}, some lemmas are of use. One example is Lemma \ref{lemma:max-diff}, an inequality that relates the difference of the maxima of two vectors and the maximum of their difference:
\begin{lemma}
    Let $\mathbf{U}, \mathbf{V} \in \mathcal{W}$ be vectors in some vector space $\mathcal{W}$. Then:
    \begin{equation*}
        \lVert \mathbf{U} \rVert_{\infty} - \lVert \mathbf{V} \rVert_{\infty} \le \lVert \mathbf{U} - \mathbf{V} \rVert_{\infty}
    \end{equation*}
    \label{lemma:max-diff}
\end{lemma}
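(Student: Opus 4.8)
The plan is to recognize Lemma \ref{lemma:max-diff} as the reverse triangle inequality for the $\ell_\infty$ norm, which reduces to a single application of subadditivity. The first step is to write $\mathbf{U} = (\mathbf{U} - \mathbf{V}) + \mathbf{V}$ and apply the triangle inequality for the norm, obtaining $\lVert \mathbf{U} \rVert_{\infty} \le \lVert \mathbf{U} - \mathbf{V} \rVert_{\infty} + \lVert \mathbf{V} \rVert_{\infty}$. Subtracting $\lVert \mathbf{V} \rVert_{\infty}$ from both sides then yields the claimed bound $\lVert \mathbf{U} \rVert_{\infty} - \lVert \mathbf{V} \rVert_{\infty} \le \lVert \mathbf{U} - \mathbf{V} \rVert_{\infty}$.

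An equivalent, fully elementary route — which matches the paper's convention (stated in the footnote to Theorem \ref{theorem:hoeffding}) that $\lVert \cdot \rVert_{\infty}$ extracts the largest component of a vector — is to let $i^\star$ be an index attaining $\lVert \mathbf{U} \rVert_{\infty} = U_{i^\star}$, decompose $U_{i^\star} = (U_{i^\star} - V_{i^\star}) + V_{i^\star}$, and bound each term separately: the difference $U_{i^\star} - V_{i^\star}$ is one of the components of $\mathbf{U} - \mathbf{V}$, hence at most $\lVert \mathbf{U} - \mathbf{V} \rVert_{\infty}$, while $V_{i^\star} \le \lVert \mathbf{V} \rVert_{\infty}$ by definition of the maximum. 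Adding the two bounds gives the result. I would likely present this second version, since it is self-contained given the paper's notation.

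There is essentially no obstacle here: this is a one-line inequality used later as an auxiliary tool (in the proof of Lemma \ref{lemma:error-split}). The only point requiring a moment of care is keeping the argument consistent with whichever reading of $\lVert \cdot \rVert_{\infty}$ is in force — maximum component versus maximum absolute value — but both proof routes above go through unchanged in either case, so no additional hypotheses are needed.
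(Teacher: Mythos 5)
Your proof is correct and follows essentially the same route as the paper's: the paper likewise argues componentwise, starting from $U_i \le \left| U_i - V_i \right| + V_i$, taking the maximum over $i$, and using subadditivity of the maximum before rearranging, which is exactly your second (preferred) presentation up to whether one maximizes at the end or fixes the maximizing index $i^\star$ at the start. Your remark about the two readings of $\lVert \cdot \rVert_{\infty}$ is well taken --- the paper's own proof silently mixes $\max_i U_i$ with $\max_i \left| U_i - V_i \right|$ --- and your observation that the argument goes through under either convention resolves that point.
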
 
\begin{proof}
    Let $U_i$ and $V_i$ represent entries of $\mathbf{U}$ and $\mathbf{V}$, respectively. Then:
    \begin{align*}
        \begin{split}
            U_i &\le \left| U_i - V_i \right| + V_i\\
            \max_i U_i &\le \max_i \left( \left| U_i - V_i \right| + V_i \right)\\
            \max_i U_i &\le \max_i \left| U_i - V_i \right| + \max_i V_i\\
            \max_i U_i - \max_i V_i &\le \max_i \left| U_i - V_i \right|\\
            \lVert \mathbf{U} \rVert_{\infty} - \lVert \mathbf{V} \rVert_{\infty} &\le \lVert \mathbf{U} - \mathbf{V} \rVert_{\infty}
        \end{split}
    \end{align*}
\end{proof}

Another result that is used repeatedly throughout the proofs is a generalization of the triangle's inequality, stated in Lemma \ref{lemma:triangle-ineq}:
\begin{lemma}
    Let $\mathbf{U}, \mathbf{V} \in \mathcal{W}$ be vectors in an Euclidean space $\mathcal{W}$. Then:
    \begin{equation*}
        \lVert \mathbf{U} + \mathbf{V} \rVert \le \lVert \mathbf{U} \rVert + \lVert \mathbf{V} \rVert
    \end{equation*}
    \label{lemma:triangle-ineq}
\end{lemma}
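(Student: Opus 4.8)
The statement is the triangle inequality for the Euclidean norm, so the plan is the textbook argument via Cauchy--Schwarz. First I would expand the square of the left-hand side using bilinearity of the inner product: $\lVert \mathbf{U} + \mathbf{V} \rVert^2 = \lVert \mathbf{U} \rVert^2 + 2\langle \mathbf{U}, \mathbf{V} \rangle + \lVert \mathbf{V} \rVert^2$. Next I would bound the cross term by Cauchy--Schwarz, $\langle \mathbf{U}, \mathbf{V} \rangle \le |\langle \mathbf{U}, \mathbf{V} \rangle| \le \lVert \mathbf{U} \rVert \, \lVert \mathbf{V} \rVert$, which yields $\lVert \mathbf{U} + \mathbf{V} \rVert^2 \le \lVert \mathbf{U} \rVert^2 + 2 \lVert \mathbf{U} \rVert \lVert \mathbf{V} \rVert + \lVert \mathbf{V} \rVert^2 = \left( \lVert \mathbf{U} \rVert + \lVert \mathbf{V} \rVert \right)^2$. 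Since both sides are nonnegative, taking square roots closes the proof.

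One point deserves a word of care: elsewhere in the paper this lemma is invoked in forms like $|a - b| \le |a - c| + |c - b|$ for scalars and for the $\ell_\infty$ norm (e.g.\ in Lemmas~\ref{lemma:error-split} and~\ref{lemma:true-error}), not only for the Euclidean $\ell_2$ norm on vectors. I would therefore either state and prove the result for an arbitrary norm --- in which case it is immediate, since subadditivity is one of the norm axioms --- or remark that the scalar case is the one-dimensional instance of the displayed inequality and the $\ell_\infty$ case follows by applying it componentwise. As the hypothesis is written for a Euclidean space, the Cauchy--Schwarz route above is the natural self-contained choice, and the passage to the other norms actually used is routine.

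There is no genuine obstacle here. The Cauchy--Schwarz inequality may be taken as standard, or proved in one line from $\lVert \mathbf{U} - t\mathbf{V} \rVert^2 \ge 0$ optimized over the real parameter $t$; every remaining step is either an algebraic identity or a monotone operation. If one insisted on a fully elementary treatment, the single substantive ingredient to establish first would be Cauchy--Schwarz, and everything after that is bookkeeping.
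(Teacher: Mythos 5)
Your proof is correct and follows essentially the same route as the paper's: expand $\lVert \mathbf{U}+\mathbf{V}\rVert^{2}$ via bilinearity of the inner product, bound the cross term $\mathbf{U}^{\top}\mathbf{V}$ with Cauchy--Schwarz, and take square roots of both nonnegative sides. Your side remark that the lemma is later invoked for scalars and for the $\ell_\infty$ norm (where subadditivity holds for any norm) is a fair observation about the paper's usage, but it does not change the fact that your argument matches the paper's.
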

\begin{proof}
    \begin{equation*}
        \lVert \mathbf{U} + \mathbf{V} \rVert = \sqrt{ \lVert \mathbf{U} \rVert^2 + \lVert \mathbf{V} \rVert^2 + 2 \mathbf{U}^{\top} \mathbf{V} }
    \end{equation*}
    
    Using the Cauchy-Schwarz inequality, stating that $\left| \mathbf{U}^{\top}\mathbf{V} \right| \le \lVert \mathbf{U} \rVert \lVert \mathbf{V} \rVert$, we get:
    \begin{align*}
        \begin{split}
            \lVert \mathbf{U} + \mathbf{V} \rVert &\le \sqrt{ \lVert \mathbf{U} \rVert^2 + \lVert \mathbf{V} \rVert^2 + 2 \lVert \mathbf{U} \rVert \lVert \mathbf{V} \rVert}\\
            \lVert \mathbf{U} + \mathbf{V} \rVert &\le \lVert \mathbf{U} \rVert + \lVert \mathbf{V} \rVert
        \end{split}
    \end{align*}
\end{proof}

Lemma \ref{lemma:geometric-series} simplifies a  finite geometric series into a fraction:
\begin{lemma}
	The finite geometric series can be simplified according to the following expression \cite{herman2016calculus}:
	\begin{equation*}
		\sum_{i = 0}^N z^i = \frac{z^N - 1}{z - 1}
	\end{equation*}	
	\label{lemma:geometric-series}
\end{lemma}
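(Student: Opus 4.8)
The plan is to prove the identity by the standard ``multiply by $(z-1)$ and telescope'' argument, valid for any $z \neq 1$; the degenerate case $z = 1$, if it needs covering, is just a sum of ones and is handled separately. First I would form the product $(z-1)\sum_i z^i$ and distribute it as $\sum_i z^{i+1} - \sum_i z^i$. Reindexing the first sum by $j = i+1$ aligns it with the second so that every interior power of $z$ cancels in pairs, leaving only the two endpoint terms. Dividing through by $z-1$ then isolates the closed form. An equivalent route is induction on the upper summation limit: the base case is immediate, and the inductive step adds one further power of $z$ to both sides and recombines over the common denominator $z-1$; I would mention this as a cross-check but carry out the telescoping version as the main proof since it is the most direct way to land on the stated right-hand side.

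Concretely, the key steps in order are: (i) write $S = \sum_i z^i$ and compute $(z-1)S = \sum_i z^{i+1} - \sum_i z^i$; (ii) shift the index of the first sum and cancel the overlapping terms, so that only the largest power and the constant $1$ survive; (iii) read off $(z-1)S = z^{\,\mathrm{top}} - 1$ where $z^{\,\mathrm{top}}$ is the surviving leading power; and (iv) divide by $z-1$ to obtain $S = \frac{z^{\,\mathrm{top}} - 1}{z-1}$, matching the target $\frac{z^{N} - 1}{z-1}$.

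The hard part — in fact the only real subtlety here — is the bookkeeping of the summation index, because the telescope collapses to a numerator of the form $z^{(\text{upper limit})+1} - 1$. The clean telescope $(z-1)\sum_{i=0}^{N-1} z^i = z^{N} - z^{0} = z^{N} - 1$ reproduces the stated numerator $z^{N} - 1$ exactly when the sum runs over the $N$ terms $i = 0, \dots, N-1$. I would therefore, before finalizing, check the upper limit of the summation against the right-hand side, since an upper limit of $N$ (that is, $N+1$ summands) would instead yield $z^{N+1} - 1$ in the numerator. Flagging and resolving this off-by-one so that the number of summands is consistent with the exponent $N$ appearing in $\frac{z^{N}-1}{z-1}$ is the crux; once the index convention is pinned down, the cancellation and the final division are entirely routine.
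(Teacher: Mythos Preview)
Your telescoping argument is correct and is the standard proof; the paper itself does not supply a proof at all but merely cites a calculus textbook, so there is no ``paper's approach'' to compare against.

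More importantly, you are right to flag the off-by-one issue, and you should not paper over it: the formula as printed in the lemma is simply wrong. The sum $\sum_{i=0}^{N} z^i$ has $N+1$ terms, and your telescope gives $(z-1)\sum_{i=0}^{N} z^i = z^{N+1} - 1$, so the correct closed form is $\frac{z^{N+1}-1}{z-1}$, not $\frac{z^{N}-1}{z-1}$. Indeed, when the paper actually invokes this lemma (e.g.\ in the proof of Lemma~\ref{lemma:sample-relation}, where $\mathcal{S}^k\sum_{j=0}^{k}\mathcal{S}^{-j}$ is collapsed to $\frac{\mathcal{S}^{k+1}-1}{\mathcal{S}-1}$), it uses the correct exponent $k+1$, not $k$. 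So the statement contains a typo that the authors silently correct in application; your instinct to pin down the index convention before dividing by $z-1$ is exactly the right one, and your proof should state the corrected identity rather than try to match the misprinted one.
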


Finally, a variation of the finite geometric series is also simplified in Lemma \ref{lemma:other-geometric-series}:
\begin{lemma}
	A variation of the finite geometric series of Lemma \ref{lemma:geometric-series} can be simplified according to the following expression \cite{herman2016calculus}:
	\begin{equation*}
		\sum_{i = 0}^N i z^i = \frac{(N-1)z^{N+1} - Nz^N + z}{(z - 1)^2}
	\end{equation*}	
	\label{lemma:other-geometric-series}
\end{lemma}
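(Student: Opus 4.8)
This is the standard arithmetico-geometric sum, and the natural route --- matching the phrase ``a variation of the finite geometric series'' --- is to obtain it by differentiating the plain geometric series of Lemma~\ref{lemma:geometric-series} term by term. I keep the same indexing convention as that lemma (so the written $\sum_{i=0}^{N}$ carries $N$ summands, matching $\sum_{i=0}^{N} z^{i} = (z^{N}-1)/(z-1)$), and I restrict to $z \neq 1$, which is the only case in which the right-hand side is defined and also the only case in which the paper invokes the lemma (there $z = \mathcal{A}\Omega$ or $z = \gamma\mathcal{S}$, both different from $1$).

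First I would reduce the target sum to a derivative: setting $g(z) := \sum_{i=0}^{N} z^{i}$, term-by-term differentiation gives $g'(z) = \sum_{i=1}^{N} i\,z^{i-1}$ (the $i=0$ term being constant), and multiplying by $z$ yields $z\,g'(z) = \sum_{i=0}^{N} i\,z^{i}$, which is exactly the quantity in the statement (the $i=0$ term contributing $0$ on both sides). So it suffices to evaluate $z\,g'(z)$ from the closed form of $g$.

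Next I would differentiate the closed form and simplify. By Lemma~\ref{lemma:geometric-series}, $g(z) = \frac{z^{N}-1}{z-1}$, so the quotient rule gives $g'(z) = \frac{N z^{N-1}(z-1) - (z^{N}-1)}{(z-1)^{2}}$. Multiplying by $z$ and expanding the numerator, $z\bigl[N z^{N-1}(z-1) - (z^{N}-1)\bigr] = z\bigl[(N-1)z^{N} - N z^{N-1} + 1\bigr] = (N-1)z^{N+1} - N z^{N} + z$, which is precisely the claimed numerator; dividing by $(z-1)^{2}$ gives the stated identity and closes the argument.

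As a cross-check --- and as a calculus-free alternative, should one prefer to stay within finite algebra --- I would instead multiply $S := \sum_{i=0}^{N} i z^{i}$ by $(z-1)$: after reindexing, $(z-1)S$ telescopes to $(N-1)z^{N} - \sum_{i=1}^{N-1} z^{i}$, and applying Lemma~\ref{lemma:geometric-series} to the residual geometric sum followed by one more multiplication by $(z-1)$ collapses $(z-1)^{2}S$ to $(N-1)z^{N+1} - N z^{N} + z$, the same conclusion. Either way there is essentially no obstacle; the only thing that demands care is the off-by-one bookkeeping --- the $N$ versus $N-1$ convention inherited from Lemma~\ref{lemma:geometric-series} and the shifts of summation index --- together with the harmless exclusion of $z=1$, and once those are tracked correctly the identity drops out at once.
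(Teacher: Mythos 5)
Your proof is correct, but there is nothing in the paper to compare it against: the paper states this lemma without any proof, simply citing a calculus textbook (just as it does for Lemma~\ref{lemma:geometric-series}). Your derivation therefore supplies an argument where the authors supply none, and both of your routes --- differentiating the closed form of $g(z)=\sum_i z^i$ and multiplying by $z$, or the calculus-free telescoping of $(z-1)S$ followed by a second multiplication by $(z-1)$ --- check out algebraically. The one genuinely delicate point is the indexing, and you handled it exactly right: read literally, $\sum_{i=0}^{N} i z^{i}$ does \emph{not} equal $\bigl((N-1)z^{N+1}-Nz^{N}+z\bigr)/(z-1)^{2}$ (for $N=2$, $z=2$ the sum is $10$ while the formula gives $2$); the stated right-hand side is the closed form of $\sum_{i=0}^{N-1} i z^{i}$. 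This is consistent with the paper's own off-by-one convention in Lemma~\ref{lemma:geometric-series}, whose displayed identity $\sum_{i=0}^{N}z^{i}=(z^{N}-1)/(z-1)$ is likewise the $N$-summand formula, and your explicit choice to inherit that convention (together with the restriction $z\neq 1$) is precisely what makes the two lemmas cohere. In short: no gap, and your proof is arguably a useful addition to the appendix rather than a mere reproduction of it.
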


\end{document}